\newcommand\andy[1]{{[\color{blue} andy: #1]}}
\newcommand{\AO}[1]{{[\color{red}{ AO: #1}}]}
\newtheorem{thm}{Theorem}
\numberwithin{thm}{section}
\newtheorem{cor}[thm]{Corollary}
\newtheorem{obs}[thm]{Observation}
\newtheorem{defn}[thm]{Definition}
\renewcommand{\p@subsection}{}
\renewcommand{\p@subsubsection}{}
\newcommand{\introem}[0]{Let $G = (\mathcal V, \mathcal C,\mathcal I, \mathcal L, A, B)$ be a circuit. }
\tikzstyle{densely dashed}=          [dash pattern=on 4pt off 3pt]
\begin{document}

\title{Flux-charge symmetric theory of superconducting circuits}
\author{Andrew Osborne}
\email{andrew.osborne-1@colorado.edu}
\affiliation{Department of Physics and Center for Theory of Quantum Matter, University of Colorado, Boulder CO 80309, USA}
\author{Andrew Lucas}
\email{andrew.j.lucas@colorado.edu}
\affiliation{Department of Physics and Center for Theory of Quantum Matter, University of Colorado, Boulder CO 80309, USA}

\begin{abstract}
The quantum mechanics of superconducting circuits is derived by starting from a classical Hamiltonian dynamical system describing a dissipationless circuit, usually made of capacitive and inductive elements. However, standard approaches to circuit quantization treat fluxes and charges, which end up as the canonically conjugate degrees of freedom on phase space, asymmetrically. 
By combining intuition from topological graph theory with a recent symplectic geometry approach to circuit quantization, we present a theory of circuit quantization that treats charges and fluxes on a manifestly symmetric footing. For planar circuits, known circuit dualities are a natural canonical transformation on the classical phase space. We discuss the extent to which such circuit dualities generalize to non-planar circuits.

\end{abstract}

\maketitle

\tableofcontents

\section{Introduction}

An ordinary non--dissipative circuit, when sufficiently cooled such that all wires become superconducting, will exhibit macroscopic manifestations of quantum mechanics with a finite number of degrees of freedom.
Such circuits are referred to as superconducting circuits; they are of interest for a variety of reasons, not least of which is the design and implementation of qubits for use in a future  quantum computer \cite{krantz_quantum_2019,kjaergaard_superconducting_2020,devoret_fqi,brayvi_the_future,kitaev2006protected}.
To that end, the theory of superconducting circuits has been very successful in producing a number of qubits with desirable properties, including the transmon, $0$--$\pi$ and fluxonium qubits \cite{koch_2017,gyenis_2021,manucharyan_fluxonium_2009,kitaev_encoding}. 
On more fundamental grounds, superconducting circuits are also of use for quantum simulation of exotic physics, such as particle motion on spaces with negative curvature \cite{Kollar2019}. 

In the theory of superconducting circuits, charges and fluxes take the role of momenta and positions in classical Hamiltonian mechanics \cite{burkard_multilevel_2004}.  More formally, they are canonically conjugate degrees of freedom. 
For simple circuits, it is often straightforward to identify the conjugate pairs of charges and fluxes, and then promote these variables to operators, as in textbook quantum mechanics.  This procedure is referred to as circuit quantization. 
However, for complicated circuits, namely those containing both phase slips \cite{mooij_superconducting_2006} and Josephson junctions \cite{Astafiev2012,belkinqps}, the standard approach \cite{Ciani:2023ubt} to circuit quantization relies on using a \emph{Lagrangian formulation} of the problem that breaks the explicit structure, such as the presence of canonical transformations that do not change the underlying structure behind Hamiltonian mechanics.  In fact, only very recently have some authors \cite{osborne2023symplectic,Parra-Rodriguez:2023ykw,ParraRodriguez2022canonical} begun to develop an intrensically Hamiltonian \cite{goldstein1980classical} approach to describing the classical (and quantum) mechanics of superconducting circuits.   Still, these approaches have continued to treat charge and flux on an arguably asymmetric footing.


This paper presents a theory of circuit quantization that is manifestly symmetric under the exchange of charge and flux degrees of freedom, and treats capacitors and inductors on an equal footing.  An advantage of this approach is that it makes certain ``mysterious" properties of the algorithmic method of \cite{osborne2023symplectic}, such as the appearance of conserved quantities, more manifest. Moreover, our approach provides an extremely transparent description of \emph{circuit dualities} \cite{ulrich_dual_2016,Kerman2013} for planar circuits, and -- in some special cases -- for non-planar circuits as well.   

\section{Classical theory of circuits}
  We consider circuits made out of elements that are either purely inductive or purely capacitive.  An inductive element has a constitutive relation of the form 
\begin{equation}
  I = \frac{\partial E}{\partial \phi},
\end{equation}
where $E$ is the energy stored in the element, and $\phi$ is the magnetic flux through it.  Likewise, a circuit element with a constitutive relation 
\begin{equation}
  V = \frac{\partial E}{\partial q} 
\end{equation}
is a capacitor: here $q$ is the charge accumulated across the capacitor, while $V$ is the voltage.  Note that these definitions include arbitrary nonlinear elements such as the Josephson junction and the quantum phase slip junction. 
From the perspective of circuit quantization, these nonlinear elements are no obstruction to any of the results that we present. 

Loosely speaking, $I = \dot q$ and $V = \dot \phi$ (here dots denote time derivatives).  This suggests that inductors and capacitors host conjugate degrees of freedom in a superconducting circuit.


\subsection{Symplectic geometry approach to circuit quantization}

It is our goal to produce a quantization perscription for superconducting circuits that treats capacitors and inductors (equivalently charges and fluxes) on equal footing.
To ensure full generality in our construction, we will begin with a short review of a  recently developed \cite{osborne2023symplectic} symplectic approach to quantization of circuits, which is applicable to arbitrarily nonlinear circuits built out of inductors and capacitors.  
Besides having the advantage of being more general (the new approach can quantize circuits for which a standard Lagrangian does not exist), we will show in this paper that the approach of \cite{osborne2023symplectic} also beautifully encodes the mathematics of circuit duality, when re-written in a flux-charge-symmetric manner.

In \cite{osborne2023symplectic}, it was found that a  Hamiltonian dynamical system -- namely, a Hamiltonian function $H$ and a symplectic form and/or Poisson bracket on a suitable phase space --  is neatly encoded in a simple Lagrangian that depends on the incidence matrix of a circuit.  A circuit can be thought of mathematically as a directed graph with vertices $v$ and directed edges $e$, which have a start and end vertex.  On each edge $e$, we place either an inductive or capacitive element whose energy $E$ depends either on the flux difference $\phi_e$ across the circuit element, or the charge $q_e$ which has accumulated due to current flow across the edge.  The directedness of each edge is important to capture the correct sign of $\phi_e$ and $q_e$.  
For the remainder of this manuscript, we will use the symbol $\mathcal E $ to refer to the set of edges in a circuit. Generally, the circuit to which $\mathcal E$ pertains will be clear from context.
We define the incidence matrix
\begin{equation}
    A_{ev} = \begin{cases}
       1 & e \text{ arrives at } v \\ 
       -1 & e \text{ leaves from } v \\
       0 & \text{otherwise}
    \end{cases}
\end{equation}
which encodes a convention for positive current and voltage across an edge. 
There is another closely related matrix, $\Omega$, called the reduced incidence matrix, which is a restriction of $A$ to edges containing only capacitors.  
In terms of $\Omega$, the action encoding the dynamics of $G$ may be written 
\begin{equation}\label{eqn:action}
    S = \int \mathrm d t \left[ 
    \sum_{e \in \mathcal C, v} q_e \Omega_{ev} \dot\phi_v - \sum_{e \in \mathcal C} E^C_{e}(q_e) - \sum_{e \in \mathcal I} E^L_e (A_{ev}\phi^v)
    \right]
\end{equation}
with capacitive branches in the set $\mathcal C$ and inductive branches in the set $\mathcal I$.   Note that $\mathcal{E}=\mathcal{C}\cup\mathcal{I}$.
Here, $E^C_e$ ($E^L_e$) is the energy of the capacitor (inductor) on branch $e$ for a given configuration. 

Quantization of a circuit is achieved by enumerating and removing the null vectors of $\Omega$ and the corresponding variables.  The resulting invertible matrix $\Omega$ can be related to a symplectic form and Poisson bracket, which then formally lead to a canonical quantization prescription of the circuit \cite{osborne2023symplectic}.

\subsection{Topology of a graph}
In this paper, we will need to review a little more graph theory than was presented above.  Indeed, in the language of mathematics, circuits are directed graphs where the edge set $\mathcal{E}=\mathcal{C}\cup\mathcal{I}$ has a decomposition into capacitors and inductors.  
We now briefly review some of the relevant graph theory, and refer the reader to Appendix \ref{app:graphthry} for a more formal discussion.

The most important aspect of a graph for us here will be the structure of cycles, or loops. For an undirected graph, it is possible to define 
cycle addition \cite{gross_topological,hatcher_topology} as an additive operation on a vector space over the field $\mathbb{Z}_2$. This operation imbues us with a natural notion of linear independence between cycles.  This notion of independence will remain relevant for circuits, which are directed graphs, since the ``arrow" on each edge is used to keep track of the direction of current flow, but not whether an object is a loop or not.

For a circuit with $|\mathcal E|$ edges, and $|\mathcal V|$ vertices, there are $|\mathcal E| - |\mathcal V| + 1$ linearly independent loops, assuming that the circuit is connected.  For reasons that will become clear, we will find it expedient to find a basis for this space with \emph{one redundant loop}: i.e., we choose $|\mathcal E| - |\mathcal V| + 2$ loops and collect them into set \begin{equation}
    \mathcal L = \{l_1, l_2, \dots, l_{|\mathcal E| - |\mathcal V| + 2}\}.
\end{equation}   
To each loop assign an orientation, such that an edge in the loop $l$ may either be oriented with $l$ or against $l$.  Such assignments are encoded in an \textbf{orientation matrix}
\begin{equation}
    B_{le} = 
    \begin{cases}
        1 & l \text{ and } e \text{ are oriented alike }\\
        -1 & l \text{ and } e \text{ are oriented unalike} \\ 
        0 & e \text{ does not lie on the boundary of } l
    \end{cases}.
\end{equation}
For an example, consider the graph drawn in Fig.~\ref{fig:orientation}. The orientation matrix for this graph is given by
\begin{equation}
    B = 
    \begin{blockarray}{ccccccccccc}
     e_1 & e_2 & e_3 & e_4 & e_5 & e_6 & e_7 & e_8 & e_9 & e_{10} & \\ 
    \begin{block}{(cccccccccc)@{\hspace{7pt}}c}
    1 & 0 & 1 & 0 & 0 & 0 &-1 & 0 & 0 & 0  & l_1 \\
    0 &-1 & 0 & 0 & 1 & 0 & 1 & 1 &-1 &-1 & l_2 \\
    0 & 0 &-1 & 1 & 0 & 1 & 0 &-1 & 0 & 0 & l_3 \\
    -1& 1 & 0 &-1 &-1 &-1 & 0 & 0 & 1 & 1 & l_4  \\
    \end{block}
    \end{blockarray}
\end{equation}
Since we have defined $\mathcal L$ in such a way that it always contains $|\mathcal E| - |\mathcal V| + 2$ elements, it is always the case that 
\begin{equation}\label{eqn:eulermod}
    |\mathcal L| - |\mathcal E| + |\mathcal V| = 2.
\end{equation}
Euler's formula should be called to mind by (\ref{eqn:eulermod}). 
As we will see, this is no accident. 
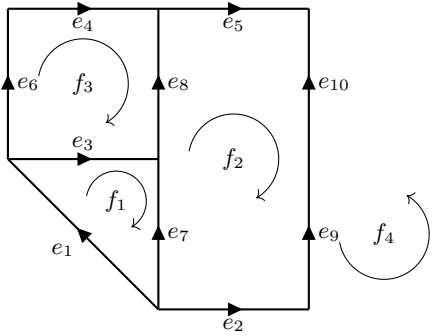
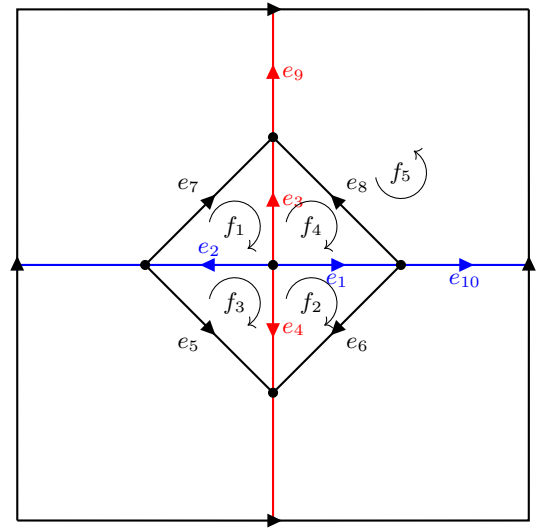
\begin{figure}
\begin{subfigure}{0.5\linewidth}
\begin{circuitikz}[scale=2]
    \draw[thick] (1,0) to[short,i_=$e_2$] (2,0);
    \draw[thick] (0,1) to[short,i^=$e_3$] (1,1);
    \draw[thick] (0,2) to[short,i_=$e_4$] (1,2) to[short,i_=$e_5$] (2,2);
    \draw[thick] (1,0) to[short,i^=$e_1$] (0,1);
    \draw[thick] (0,1) to[short,i_=$e_{6}$] (0,2);
    \draw[thick] (1,0) to[short,i_=$e_{7}$] (1,1) to[short,i_=$e_{8}$] (1,2);
    \draw[thick] (2,0) to[short,i_=$e_{9}$] (2,1) to[short,i_=$e_{10}$] (2,2);
    \draw[thin, <-] (0.72,0.72)node{$f_1$}  ++(-60:0.2) arc (-60:170:0.2);
    \draw[thin, <-] (1.5,1)node{$f_2$}  ++(-60:0.3) arc (-60:170:0.3);
    \draw[thin, <-] (2.5,0.5)node{$f_4$}  ++(60:0.3) arc (60:-170:0.3);
    \draw[thin, <-] (0.5,1.5)node{$f_3$}  ++(-60:0.3) arc (-60:170:0.3);
\end{circuitikz}
\caption{An embedding of a graph with orientations chosen for faces and branches. The outer face is labeled $f_4$. All faces are oriented alike if we consider this plane to be a patch of a sphere (namely we identify points at infinity as the same point).
}
\end{subfigure}
\begin{subfigure}{.49\linewidth}
\begin{circuitikz}[scale=1.7]
        \draw[thick,blue] (0,0) to[short,i_=$e_1$,color=blue] (1,0);
        \draw[thick,blue] (0,0) to[short,i_=$e_2$,color=blue] (-1,0);
        \draw[thick,red] (0,0) to[short,i_=$e_3$,color=red] (0,1);
        \draw[thick,red] (0,0) to[short,i^=$e_4$,color=red] (0,-1);
        \draw[thick] (-1,0) to[short,i_=$e_5$] (0,-1);
        \draw[thick] (1,0) to[short,i^=$e_6$] (0,-1);
        \draw[thick] (-1,0) to[short,i^=$e_7$] (0,1);
        \draw[thick] (1,0) to[short,i_=$e_8$] (0,1);
        
        \draw[thick,blue] (1,0) to[short,i_=$e_{10}$,color=blue] (2,0);
        \draw[thick,blue] (-1,0) to[short,color=blue] (-2,0);
        \draw[thick,red] (0,1) to[short,i_=$e_9$,color=red] (0,2);
        \draw[thick,red] (0,-1) to[short,color=red] (0,-2);

        \draw[thick] (-2,-2) to[short,i_=$ $] (-2,2) to[short,i_=$ $ ] (2,2); 
        \draw[thick] (-2,-2) to[short,i_=$ $] (2,-2) to[short,i_=$ $] (2,2); 

        \filldraw[black] (0,0) circle (1pt); 
        \filldraw[black] (1,0) circle (1pt); 
        \filldraw[black] (0,1) circle (1pt); 
        \filldraw[black] (0,-1) circle (1pt); 
        \filldraw[black] (-1,0) circle (1pt); 
    \draw[thin, <-] (-0.3,0.3)node{$f_1$}  ++(-60:0.2) arc (-60:170:0.2);
    \draw[thin, <-] (0.3,-0.3)node{$f_2$}  ++(-60:0.2) arc (-60:170:0.2);
    \draw[thin, <-] (-0.3,-0.3)node{$f_3$}  ++(-60:0.2) arc (-60:170:0.2);
    \draw[thin, <-] (0.3,0.3)node{$f_4$}  ++(-60:0.2) arc (-60:170:0.2);
    \draw[thin, <-] (1,.72)node{$f_5$}  ++(60:0.2) arc (60:-170:0.2);
\end{circuitikz}
\caption{An embedding of $K_5$ on the torus. The vertical border at the top of the drawing is to be identified with the vertical border at the bottom of the drawing. Likewise, the horizontal boundaries are to be identified. A valid choice of ``topological loops" is given by the blue and red edges. }
\end{subfigure}
\caption{Examples of faces and edges in a (a) planar and (b) non-planar graph.}\label{fig:orientation}
\end{figure}

Every graph (and thus every circuit) can be drawn (more formally, embedded) on some closed, orientable surface\footnote{The surfaces one should keep in mind are the Riemann surface of genus $g$, including the sphere ($g=0$) and the torus ($g=1$). Such surfaces will always be sufficient since the topology of two dimensional orientable manifolds is fully characterized by their genus.} with no crossing edges. 
One can always find such a surface, wherein this drawing of a graph partitions the surface of embedding into some number of regions isomorphic to the unit disk. 
We will call every such two dimensional disk a \emph{face} of the circuit.
For a particular drawing of a circuit on a surface of genus $g$, we will write that the number of faces induced by the drawing is $|\mathcal F|$. 
Euler's formula reads 
\begin{equation}\label{eqn:eulergood}
    |\mathcal F| - |\mathcal E| + |\mathcal V| = 2 - 2g. 
\end{equation}
Evidently, (\ref{eqn:eulergood}) together with (\ref{eqn:eulermod}) implies that 
\begin{equation}
    |\mathcal L| = |\mathcal F| + 2 g.
\end{equation}
Of course, one may have suspected from the outset that $|\mathcal L|$ was closely related to $|\mathcal F|$ since the boundary of a unit disk is a circle, a one--dimensional object. 

The relationship between faces and loops is an exact correspondence (bijection) in the case of planar graphs. 
More precisely, the set of loops at the boundary of some face in a drawing of a planar graph on the sphere (or the plane) number $|\mathcal E| - |\mathcal V| + 2$, and every edge exists in the common boundary of exactly two faces. 
Moreover, it is possible to demand that all the loops chosen in this way are ``oriented alike" such that planar graphs satisfy\footnote{We emphasize that this particularly simple redundancy in $|\mathcal L|$ is a consequence of choosing loops in correspondence to faces. In principle, it is possible to choose a spanning set of loops with a less trivial redundancy, but this complicates calculations and provides no simplification. Any choice of loops for a planar graph can be shown to correspond to the faces of \emph{some} drawing, so this choice is also perfectly general. }
\begin{equation}\label{eqn:faceground}
    \sum_{l \in \mathcal L} B_{le} = 0.
\end{equation}
As we will see, choosing loops in correspondence with faces will be of great utility for us, even for nonplanar graphs. 
It will always be our convention to choose $|\mathcal F|$ loops corresponding to the faces of some embedding of a circuit, and we will also always choose them to be oriented alike. 

Another important matter of convention arises when considering the sign of $B_{le}$ when it is nonzero. Consider the figure in Fig.~\ref{fig:orientation}(a), and the loop consisting of edges $e_1$, $e_3$, and $e_7$. 
Intuitively, it should be the case that a suitable definition of $B$ for this graph has the property that $B_{l_1 e}$ is nonzero only for $e$ in $\{e_1,e_3,e_7\}$. 
With this preference in mind, it remains to fix the sign of $B_{l_1 e}$ for such edges $e$. Our convention will be that an orientation should be chosen for $l_1$ so that $B_{l_1 e}$ is equal to one if $e$ is oriented like $l_1$ and negative one otherwise. In this particular example (as determined by the semicircular arrow surrounding the label $f_1$), the matrix elements of $B$ should be $B_{l_1e_1} = B_{l_1 e_3} = - B_{l_1e_7} = 1$.

For nonplanar graphs, it is always possible to choose an embedding such that every edge appears on the boundary of precisely two faces. However, it is not always possible to choose an embedding such that every edge appears in precisely two loops in $\mathcal L$. 
As an example, consider edge $e_9$ in the circuit drawn in Fig.~\ref{fig:orientation}b. $e_9$ appears on the boundary of $f_5$, and on the boundary of no other face. However, $e_9$ appears on the boundary of $f_5$ twice.  When this occurs, we take the convention that $B_{f_5e_9}=0$, since ``each side" of $e_9$ is seen by the same face, and $+1-1=0$.

In order to define an orientation matrix for a nonplanar graph, one needs to choose apropriate topological cycles and repeat the procedure. For the graph drawn in Fig.~\ref{fig:orientation}(b), one topological loop is drawn in blue, while the other is drawn in red. 
The orientation matrix for this graph is given by 
\begin{equation}\label{eqn:bmat}
    B =
    \begin{blockarray}{ccccccccccc}
    e_1 & e_2 & e_3 & e_4 & e_5 & e_6 & e_7 & e_8 & e_9 & e_{10} & \\ 
    \begin{block}{(cccccccccc)@{\hspace{7pt}}c}
         0 & 1 & -1 & 0 & 0 & 0 & 1 & 0 & 0 & 0 & l_1 \\ 
        1 & 0 & 0 & -1 & 0 & 1 & 0 & 0 & 0 & 0 & l_2 \\ 
        0 & -1 & 0 & 1 & -1 & 0 & 0 & 0 & 0 & 0& l_3 \\ 
        -1 & 0 & 1 & 0 & 0 & 0 & 0 & -1 & 0 & 0& l_4 \\ 
        0 & 0 & 0 & 0 & 1 & -1 & -1 & 1 & 0 & 0& l_5 \\ 
        1 & -1 & 0 & 0 & 0 & 0 & 0 & 0 & 0 & 1 & l_6 \\ 
        0 & 0 & 1 & -1 & 0 & 0 & 0 & 0 & 1 & 0 & l_7 \\ 
    \end{block}
    \end{blockarray}
\end{equation}
It turns out that $B$ (for any drawing of $G$) has the property that 
\begin{equation}\label{eqn:shortexactmain}
    \sum_{e} B_{le}A_{ev} = 0.
\end{equation}
In the case of (circuits that can be embedded as) planar graphs, this is an elementary result from algebraic topology \cite{hatcher_topology}: $B$ and $A$ are the discrete differential acting on 0-forms (defined on vertices) and 1-forms (defined on edges) respectively (see Appendix \ref{app:graphthry}).  For non-planar graphs, the matrix $B$ includes the discrete differential acting on 1-forms, \emph{and} the non-trivial elements of the first cohomology group.  While we aren't aware of any elegant geometric interpretation of this combined object, (\ref{eqn:shortexactmain}) continues to hold.
Note, however that for nonplanar graphs, the condition (\ref{eqn:faceground}) cannot hold. However, there is some analogous expression corresponding to the set of ``face--like" loops in $\mathcal L$. In the case of the circuit drawn in Fig.~\ref{fig:orientation}, (\ref{eqn:bmat}) admits 
\begin{equation}
    \sum_{i = 1}^5 B_{l_i e} = 0.
\end{equation}
It is always possible to incorporate some such condition into a choice of loops in $\mathcal L$.

\begin{figure}
  \begin{circuitikz}[scale=3]
    \draw[thick] (0,0) to[inductor,i^=$e_1$] (1,2) to[inductor,i^=$e_2$] (2,0) to [capacitor,i^=$e_3$] (0,0);
    \draw[thick] (0,0) to[capacitor,i_=$e_4$] (1,.75) to[capacitor,i_=$e_5$] (2,0);
    \draw[thick] (1,2) to[inductor,i_=$e_6$] (1,.75);
    \filldraw[black] (0,0) circle (0.5pt) node[anchor=north east]{$v_1$};
    \filldraw[black] (1,2) circle (0.5pt) node[anchor=south]{$v_2$};
    \filldraw[black] (2,0) circle (0.5pt) node[anchor=north west]{$v_3$};
    \filldraw[black] (1,.75) circle (0.5pt) node[anchor=north]{$v_4$};
    \draw[thin, <-] (0.7,0.88)node{$f_1$}  ++(-60:0.15) arc (-60:170:0.15);
    \draw[thin, <-] (1.3,0.88)node{$f_2$}  ++(-60:0.15) arc (-60:170:0.15);
    \draw[thin, <-] (1,0.35)node{$f_3$}  ++(-60:0.15) arc (-60:170:0.15);
    \draw[thin, <-] (0,1)node{$f_4$}  ++(60:0.3) arc (60:-170:0.3);
  \end{circuitikz}
  \caption{A  circuit with four faces (counting the external face), six edges, and four vertices. For this circuit, $\mathcal  I = \{e_1,e_6,e_2\}$ and $\mathcal C = \{e_3,e_4,e_5\}$. To see that all faces are ``oriented alike", imagine the circuit embedded on the sphere: all faces are oriented into the sphere.}\label{fig:excir}
\end{figure}
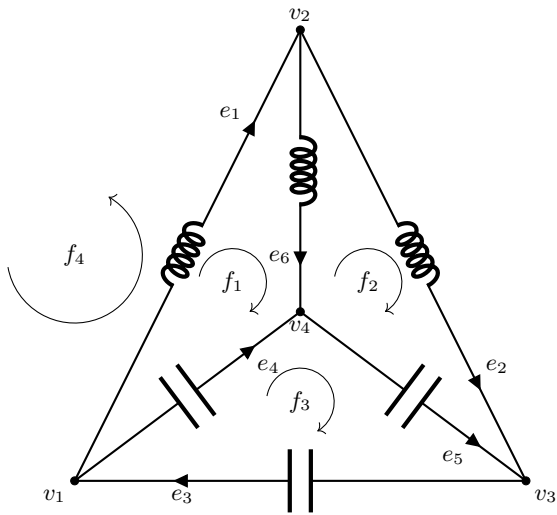

We now introduce our third and final matrix. First, note that it follows from (\ref{eqn:shortexactmain}) that, for a set $P \subset \mathcal E$, 
\begin{equation}\label{eqn:splitter}
   \sum_{e \in P} B_{le}A_{ev} = - \sum_{e \not\in  P} B_{le}A_{ev}.
\end{equation}
This fact will be of great use to us.  In particular, there will be a natural partition $P$ for a circuit: simply count only the capacitive, or only the inductive, edges!  Hence for any circuit, we define the \textbf{connection matrix}
\begin{equation}
  M_{lv} =  \frac{1}{2}\sum_{e \in \mathcal C} B_{le} A_{ev} - \frac{1}{2}\sum_{e \in \mathcal I} B_{le}A_{ev}.
\end{equation}

Consider the circuit drawn in Fig.~\ref{fig:excir}, which we will call $G$.
In the following discussion, we will repeatedly return to $G$ as an example for the sake of concreteness. 
As we will see, the connection matrix of a circuit has a number of remarkable properties. 
For the circuit $G$ in Fig.~\ref{fig:excir},
 \begin{equation}
   A = 
   \begin{blockarray}{ccccc}
    v_1 & v_2 & v_3 & v_4  & \\ 
       \begin{block}{(cccc)@{\hspace{7pt}}c}
 -1 & 1 & 0 & 0 & e_1 \\
 0 & -1 & 1 & 0 & e_2 \\
 1 & 0 & -1 & 0& e_3 \\
  -1 & 0 & 0 & 1& e_4 \\
  0 & 0 & 1 & -1& e_5 \\
     0 & -1 & 0 & 1& e_6 \\
       \end{block}
   \end{blockarray}
\end{equation}
\begin{equation}
  B = 
  \begin{blockarray}{ccccccc}
   e_1 & e_2 & e_3 & e_4 & e_5 & e_6 &\\ 
  \begin{block}{(cccccc)@{\hspace{7pt}}c}
 1 & 0 & 0 & -1 & 0 & 1& l_1 \\
 0 & 1 & 0 & 0 & -1 & -1 & l_2 \\
 0 & 0 & 1 & 1 & 1 & 0 & l_3 \\
 -1 & -1 & -1 & 0 & 0 & 0 & l_4\\
  \end{block}
  \end{blockarray}
\end{equation}
and 
\begin{equation}
  M = 
  \begin{blockarray}{ccccc}
   v_1 & v_2 & v_3 & v_4 &\\ 
  \begin{block}{(cccc)@{\hspace{7pt}}c}
 1 & 0 & 0 & -1& l_1 \\
 0 & 0 & -1 & 1& l_2 \\
 0 & 0 & 0 & 0& l_3 \\
 -1 & 0 & 1 & 0& l_4 \\
 \end{block}
  \end{blockarray}
\end{equation}
It is possible to build a robust theory of circuit quantization using $B$ in place of the incidence matrix of a circuit $A$ or, indeed, in conjunction with $B$.  
While this point is conceptually appealing on its own, it also turns out to be necessary to understand circuit duality in full generality.
The main result of this paper is that the matrix $M$, containing information inherited from both $A$ and $B$, encodes the symplectic form of the classical phase space, which may be used to quantize a Hamiltonian 
without requiring the user to proclaim that charges are more fundamental than fluxes or vice--versa. 
\subsection{A symmetric circuit Lagrangian}

The charge accumulated on a particular capacitive branch is given by $q_e = \sum_{l} q_l B_{le}$, which may be used to connect to the formalism of \cite{osborne2023symplectic}, as we will see later. Moreover,  these loop charge variables are precisely the ones described in \cite{ulrich_dual_2016}, with the exception of the fact that we ``count" the external loop. Of course, this is only a matter of convenience since one is always free to fix one loop charge to vanish as a gauge degree of freedom, in direct analogy with one's freedom to choose a ground. 
Now, for any circuit, 
\begin{equation}\label{eqn:splitvars}
  \frac{1}{2}\sum_{e \in \mathcal C} q_l B_{le} A_{ev} \dot\phi_v -  \frac{1}{2}\sum_{e \in \mathcal I} q_l B_{le}A_{ev} \dot\phi_v = 
  \sum_{e \in \mathcal C} q_l B_{le}A_{ev}\dot\phi_v =  -\sum_{e \in \mathcal I} q_l B_{le}A_{ev} \dot\phi_v, 
\end{equation}
according to (\ref{eqn:splitter}).
While (\ref{eqn:splitvars}) contains only trivial mathematical information, we notice that the  first expression in (\ref{eqn:splitvars}) is not preferential toward inductive or capacitive branches, while the second two expressions \emph{are}.

For the remainder of this manuscript, we will write $E_e^L$ ($E_e^C$) to denote the energy of an inductive (capacitive) branch  $e$  in some circuit. We assume that such energies may only depend on the flux across (the charge on) some branch.\footnote{There do exist ideal non-reciprocal circuit elements such as gyrators that are not of this form \cite{}. We leave any possible generalization of our results to such circuits to future work.} Explicitly, we require that, for $e = (v_1,v_2)$, $E_e^I$ depends only on  $\phi_{v_1} - \phi_{v_2}$. Analogously, for a branch adjacent to the loops $l_1$ and $l_2$, we require that $E_e^C$ may depend only  on $q_{l_1} - q_{l_2}$. 
As we will show shortly, this demand is equivalent to the imposition of Kirchoff's rules. 
We claim that the Lagrangian 
\begin{equation}\label{eqn:motherlag}
  L = \sum_{l,v} q_l M_{lv} \dot\phi_v - \sum_{e \in \mathcal C} E_e^{C}\left(\sum_{l}q_lB_{le}\right) - \sum_{e \in  \mathcal I}  E_e^{L}\left(\sum_{v}A_{ev}\phi_v\right)
\end{equation}
is a general Lagrangian which can be used to describe arbitrary LC circuits, and which contains a Hamiltonian which is easily quantizable.

Let us examine the equations of motion implied by the principle of least action applied to $S = \int \mathrm dt 
\; L$: 
\begin{equation}\label{eqn:eoms}
    \begin{aligned}
        0 &= \frac{\delta S}{\delta q_l} = \sum_v M_{lv} \dot\phi_v - \sum_{e \in \mathcal C}\frac{\partial E_e^C}{\partial q_l}\\
        0 &= \frac{\delta S}{\delta \phi_v} = - \sum_l\dot q_l M_{lv}   - \sum_{e \in \mathcal I} \frac{\partial E_e^L}{\partial \phi_v}.
    \end{aligned}
\end{equation}
We rewrite the first line using the definition of $M$, to see that 
\begin{equation}\label{eqn:voltagerule}
   0 = -\sum_{e \in \mathcal I,v} B_{le}A_{ev} \dot\phi_v  - \sum_{e \in \mathcal C} \frac{\partial E^C_e}{\partial q_l}.
\end{equation}
Another way of writing (\ref{eqn:voltagerule}) is 
\begin{equation}\label{eqn:voltagerule2}
    0 = \sum_{e \in \mathcal C, v} B_{le}A_{ev} \dot\phi_v - \sum_{e \in \mathcal C} \frac{\partial E_e^C}{\partial q_l}.
\end{equation}
Further, since we have demanded that $E_e^C$ may only depend on $\sum_{l}B_{le}q_l$, we can see that (\ref{eqn:voltagerule2}) implies 
\begin{equation}\label{eqn:capcon}
    0 = \sum_{e \in \mathcal C}B_{le} \left[\sum_v A_{ev}\dot\phi_v - \frac{\partial E_e^C(q)}{\partial q}\right].
\end{equation}
Given that $\sum_v A_{ev}\dot\phi_v$ can be interpreted as a voltage difference across branch $e$, we see that (\ref{eqn:capcon}) is compatible with the constitutive relation for capacitors: \begin{equation}\label{eq:capconst}
    \sum_v A_{ev} \dot\phi_v = \frac{\partial E_e^C(q)}{\partial q}.
\end{equation} In fact, we will show in Corollary \ref{cor:bnullv} that (up to a certain exception to be discussed later), the \emph{only} solution to (\ref{eqn:capcon}) is (\ref{eq:capconst}).
Substituting 
\begin{equation}
    \sum_{e \in \mathcal C} \frac{\partial E_e^C}{\partial q_l} = \sum_{e \in \mathcal C,v}  B_{le}A_{ev}\dot\phi_v, 
\end{equation}
we can see that (\ref{eqn:voltagerule}) becomes 
\begin{equation}
    \sum_{e,v} B_{le}A_{ev}\dot\phi_v = 0.  
\end{equation}
In other words, (\ref{eqn:voltagerule}) is also consistent with Kirchoff's voltage rule around a particular loop $l$.

Likewise, we rewrite the second line of (\ref{eqn:eoms}) as 
\begin{equation}\label{eqn:currentrule}
  0 = -\sum_{l, e \in \mathcal C} \dot q_l B_{le}A_{ev} - \sum_{e \in \mathcal I} \frac{\partial E^L_e}{\partial  \phi_v} = \sum_{l, e\in \mathcal I} \dot q_l B_{le} A_{ev}  - \sum_{e \in \mathcal I} \frac{\partial E_e^L}{\partial \phi_v}.
\end{equation}
A similar manipulation to the one in (\ref{eqn:voltagerule}) and (\ref{eqn:voltagerule2}) can be used to show that (\ref{eqn:currentrule}) correctly incorporates the constitutive relation for inductors, as well as Kirchoff's current law on a particular node $v$.


\subsection{A Hamiltonian theory from a Lagrangian theory}

Classically, the dynamics of a circuit are entirely determined by the Lagrangian in (\ref{eqn:motherlag}). In order to quantize, however, we must produce a symplectic form, which will in turn imply quantum commutation relations between charge and flux operators. 
Practically, this is a matter of enumerating the null vectors of $M$ and integrating out related non--dynamical variables as was done in \cite{osborne2023symplectic}.
One of the notable properties of the formalism at hand is that the null vectors of $M$ do not depend asymmetrically on inductors and capacitors. 

In lieu of a formal discussion, we will focus on a particular example to demonstrate the simplicity of our results. For a formal discussion, see Appendix \ref{app:symmquant}.
For the time being, we will consider the circuit drawn in Fig. \ref{fig:excir}, which we will call $G$. We will suppose that all circuit elements are linear and equal, though we emphasize that this is purely a matter of convenience. 
The Lagrangian for $G$ is 
\begin{equation}\label{eqn:glag}
  \begin{aligned}
    L &= (q_{l_3}-q_{l_4})(\dot\phi_{v_1} - \dot \phi_{v_3}) + (q_{l_3}-q_{l_1})(\dot\phi_{v_{4}} - \dot\phi_{v_1}) + (q_{l_3} - q_{l_2})(\dot\phi_{v_3} - \dot\phi_{v_4})  \\ 
      &- \frac{1}{2 C} (q_{l_3} - q_{l_4})^2 - \frac{1}{2 C}(q_{l_3} - q_{l_1})^2 - \frac{1}{2 C}(q_{l_3} - q_{l_2})^2 \\ 
  &- \frac{1}{2 L}(\phi_{v_2} - \phi_{v_1})^2 - \frac{1}{2 L}(\phi_{v_3} - \phi_{v_2}) - \frac{1}{2 L} (\phi_{v_4} - \phi_{v_2})^2.
  \end{aligned}
\end{equation}

Left null vectors of $M$ correspond to cycles either made entirely of capacitors or entirely of inductors. In $G$, there is only a single such cycle, namely the cycle bounding the loop $l_3$. Constraints of this kind may involve the resolution of some constraint as is the case now:
\begin{equation}\label{eqn:kcons}
  0 = \frac{\delta S}{\delta q_{l_3}} = \frac{1}{ C} (q_{l_3} - q_{l_4}) + \frac{1}{ C}(q_{l_3} - q_{l_1}) + \frac{1}{ C}(q_{l_3} - q_{l_2}).
\end{equation}
The constraint in (\ref{eqn:kcons}) corresponds to the demand that voltage must vanish about a loop, and further that $q_{l_3}$ must be determined by the other loop charges. In other words, $q_{l_3}$ is not dynamical. 
If, instead of capacitors, there were a cycle of inductors, so that there was no voltage constraint, the relevant $q$ variable would simply be cancelled in $L$. This corresponds to a row of $M$ which is identically zero in every entry. 

On the other hand, right null vectors of $M$ correspond to \emph{cuts} of $G$ which are entirely inductive or entirely capacitive. 
A cut of $G$ is a set of edges that connect a subset of $\mathcal V$ to its complement. 
In $G$, there is a single such example, namely the cut consisting of the branches incident upon  $v_2$. 
Notice that $\dot\phi_{v_2}$ does not appear in (\ref{eqn:glag}) at all, so $\phi_{v_2}$ plays the role of a Lagrange multiplier. 
Recognizing that 
\begin{equation}\label{eqn:icons}
  0 = \frac{\delta S}{\delta \phi_{v_2}}
\end{equation}
gives rise to another constraint. This constraint may be interpreted as a demand that current must be conserved at the node $v_2$, and that 
$\phi_{v_2}$ must therefore be fixed in terms of other fluxes. 

There is another pair of null vectors which are omnipresent in the theory of circuits. Namely, energies may only depend upon differences of loop current (node flux), so the sums $\sum_v \phi_v$ and $\sum_{l \in \mathcal F} q_l$ are fixed by a so--called ``gauge freedom"\footnote{We remark that $\sum_{l \in \mathcal L}q_l $ is not necessarily nondynamical.  This leads to subtleties when discussing circuit dualities, which we will discuss in Section \ref{sec:duality}.}. 
Explicitly,
\begin{equation}
  0 = \sum_v \frac{\delta S}{\delta \phi_v} = \sum_l \frac{\delta S}{\delta q_l}.
\end{equation}
Since both of these functional derivatives vanish identically, a constraint can never be imposed by these particular null vectors. 

After resolving the constraints in (\ref{eqn:kcons}) and (\ref{eqn:icons}), we are left with 
\begin{equation}
  L = Q_1 \dot\Phi_1 + Q_2 \dot \Phi_2 - \frac{1}{3C}\left(Q_1^2 - Q_1 Q_2 + Q_2^2\right) - \frac{1}{3L}\left(\Phi_1^2 - \Phi_1\Phi_2 + \Phi_2^2 \right)
\end{equation}
with 
\begin{equation}
  \begin{aligned}
    Q_1 &= q_{l_2} - q_{l_4} \\
    Q_2 &= q_{l_2} - q_{l_1}\\
    \Phi_1 &=  \phi_{v_1} - \phi_{v_3}\\
    \Phi_2 &= \phi_{v_4} - \phi_{v_1} 
  \end{aligned},
\end{equation}
and
\begin{equation}
  \{\Phi_j,Q_i\}  = \delta_{ij}.
\end{equation}
Therefore, we are free to write 
\begin{equation}
  H = \frac{1}{3 C}\left(Q_1^2 - Q_1Q_2 + Q_2^2\right)  + \frac{1}{3 L}\left(\Phi_1^2 - \Phi_1\Phi_2 + \Phi_2^2\right),
\end{equation}
with quantum commutation relations \begin{equation}
    [\Phi_j,Q_i] = \mathrm{i}\hbar \delta_{ij}.
\end{equation}

The process of ``integrating out" degrees of freedom associated to null vectors that arise during this process is conceptually straightforward. Still, as was the case in \cite{osborne2023symplectic}, nonlinear constraint equations may arise in the presence of nonlinear circuit elements without accompanying parasitic elements; these nonlinear equations may not have unique solutions, implying ambiguities in the correct phase space to quantize.   Since it was argued in \cite{rymarz_consistent_2022} that the addition of parasitic elements (which are present in any real experiment) qualitatively changes the spectrum, we do not focus on the technical question of picking the correct phase space in the presence of nonlinear constraints further in this manuscript.  

\section{Equivalence to other formulations}
Formally, it is true that the circuit Lagrangian (\ref{eqn:motherlag}) encodes all of the necessary physics: Kirchoff's rules are either manifestly obeyed, or correspond to the Euler-Lagrange equations of (\ref{eqn:motherlag}). It must then be true that the predictions of  (\ref{eqn:motherlag}) are equivalent to previous theories in the literature.  This equivalence is formally demonstrable, as we now summarize; see Appendix \ref{app:symmquant} for details and/or justifications of the claims made in this section.

\subsection{Branch--node formalism}\label{sec:bnf}
The most common approach to circuit quantization involves writing a Lagrangian in terms of fluxes that live on vertices alone: $\phi_v$. 
If we have linear capacitors, then one can directly integrate out $q_l$ in (\ref{eqn:motherlag}) to obtain a Lagrangian for $\phi_v$ alone with quadratic terms in $\dot \phi$.

However, it was recently pointed out in \cite{osborne2023symplectic} that more general circuits could be quantized by instead starting with a Lagrangian that depends on both $q_{e\in \mathcal{C}}$ (charges on capacitive edges) and $\phi_v$.  Comparing the flux-charge symmetric formulation to this one is slightly more subtle.
Recall that we may write
\begin{equation}
  M_{lv} = \sum_{e \in \mathcal C} B_{le} A_{ev}.
\end{equation}
Introduce a set of Lagrange multipliers $\lambda_e$ on capacitive edges, and write: 
\begin{equation}
      L' = \sum_{e \in \mathcal C}q_l B_{le} A_{ev}\dot\phi_v - \sum_{e \in \mathcal C} E_e^C(Q_e) - \sum_{e\in\mathcal I }E_e^L\left(\sum_v A_{ev}\phi_v\right) + \sum_{e\in C} \lambda_e \left(Q_e - \sum_l q_l B_{le}\right).
\end{equation}
The functional derivative 
\begin{equation}
    0 = \frac{\delta S}{\delta q_l} = \sum_{e \in \mathcal C} B_{le}\left[\sum_v A_{ev}\dot\phi_v - \lambda_e\right] 
\end{equation}
implies that the quantity 
\begin{equation}
    \gamma_e = \sum_v A_{ev}\dot\phi_v - \lambda_e
\end{equation}
forms a right null vector of the matrix $B$ restricted to the set of capacitive edges. 
Corollary \ref{cor:bnullv} shows that there is one such null vector for each capacitive cut of $G$. 
As a consequence, there exists $m$ vectors $|n_i\rangle = \sum_{e \in \mathcal C} \sigma_{e,i} |e\rangle$ with $\sigma_{e,i} \in \{-1,0,1\}$  with $i = 1,2,3,\dots,m$ such that 
\begin{equation}
    |\gamma\rangle = \sum_{i=1}^m \mu_i |n_i\rangle
\end{equation}
and thus  
\begin{equation}
    \lambda_e = \sum_v A_{ev}\dot\phi_v - \sum_{i = 1}^m \mu_i\sigma_{e,i}
\end{equation}
for some parameter $\mu$. 
Thus, it follows that $L'$ may be rewritten 
\begin{equation}\label{eqn:oldpaper}
    L' = \sum_{e \in \mathcal C} Q_e A_{ev}\dot\phi_v - \sum_{e \in \mathcal C} E_e^C(Q_e) - \sum_{e \in \mathcal I} \left(\sum_v A_{ev} \phi_v\right) - \sum_{i=1}^m\mu_i \sum_e \sigma_{e,i} Q_e.
\end{equation}
A functional derivative 
\begin{equation}
    0= \frac{\delta S }{\delta \mu_i} = \sum_{e}\sigma_{e,i} Q_e
\end{equation}
is easily understood as a consequence of 
\begin{equation}
    \sum_{l,e}q_l B_{le}\sigma_{e,i} = 0
\end{equation}
and can be recognized as a constraint that would follow from a Noether current in \cite{osborne2023symplectic}.  
Borrowing the terminology used in \cite{osborne2023symplectic}, the right null vectors of $B$ restricted to edges in $\mathcal C$ corresponded to a ``capacitively shunted island".   In the end, after using the Lagrange multiplier $\mu_i$ to fix the charges across capacitive cuts, we see that (\ref{eqn:oldpaper}) reproduces \cite{osborne2023symplectic}.



\subsection{Face--Edge formalism}

A possibly undesirable feature of the Lagrangian in (\ref{eqn:oldpaper}) is that capacitors are treated differently than inductors. With the flux-charge symmetric framework, however, we can produce a similarly universal theory of circuits that treats inductors as preferential. We write 
\begin{equation}
  M_{lv} = - \sum_{e \in \mathcal I} B_{le} A_{ev}
\end{equation}
In direct analogy with the discussion in Sec. \ref{sec:bnf}, we again introduce Lagrange multipliers $\lambda_e$, but this time they exist only on inductive edges:
\begin{equation}\label{eqn:hasslag}
      L' = -\sum_{e \in \mathcal I}q_l B_{le} A_{ev}\dot\phi_v - \sum_{e \in \mathcal C} E_e^C\left(\sum_l q_l B_{le}\right) - \sum_{e\in\mathcal I }E_e^L\left(\sum_v \psi_e \right) + \sum_{e\in \mathcal I} \lambda_e \left(\psi_e - \sum_v A_{ev}\phi_v\right).
\end{equation}
By taking functional derivatives and exploiting the properties of $A$, it is possible to perform a calulation very closely mirroring the one in Sec \ref{sec:bnf} to produce 
\begin{equation}\label{eqn:hasslagf}
    L' = -\sum_{e \in \mathcal I,l} q_l B_{le}\dot\psi_e 
    -\sum_{e \in \mathcal C} E_e^C(Q_e) - \sum_{e\in\mathcal I }E_e^L\left(\psi_e \right) .
\end{equation}

Of note, the role of matrix $A$ in (\ref{eqn:oldpaper}) is filled by $-B$ in (\ref{eqn:hasslagf}). From a formal perspective, the matrix $B$ is less well--behaved than $A$ since, for nonplanar graphs, $B$ may have less trivial null vectors than $A$, and there is no corresponding subtlety for the structure of $A$, even for nonplanar graphs. 
 (\ref{eqn:hasslagf}) is a so--called loop--branch\footnote{In the literature, it is sometimes said that any Lagrangian involving charges defined on a loop is called a ``loop--charge Lagrangian".} Lagrangian.

In the special case when the circuit corresponds to a planar graph, it was noted in \cite{ulrich_dual_2016} that Lagrangians of the form (\ref{eqn:hasslagf}) are dual to Lagrangians of the form (\ref{eqn:oldpaper}), as we discuss in Section \ref{sec:duality}. 
This fact is intimately related to the appearance of $B$ in (\ref{eqn:hasslagf}) instead of $A$.  Using the flux-charge symmetric Lagrangian (\ref{eqn:motherlag}), we will see a more straightforward derivation of such duality.



\section{Circuit duality}\label{sec:duality}
The fact that we are able to describe a circuit in either a framework that ``prefers" capacitors \emph{or} inductors is reminiscent of classical discussions of circuit duality \cite{bahar_generalized,Chua1974ExplicitTF,MacFarlane1969DualsystemMI,mathisham,Kerman2013}; see \cite{ulrich_dual_2016} for a discussion of such circuit dualities in the context of Lagrangian mechanics of superconducting circuits.   We now show that these dualities are especially transparent in our framework. 
A formal discussion of the results that follow is found in Appendix \ref{app:duality}. 

First we make a few general comments.  Any sensible notion of duality should be an involution (operation which is the identity when applied twice) on classical phase space (or quantum Hilbert space) of a circuit.  A circuit is said to be self-dual if its Hamiltonian is invariant under this transformation.

One kind of duality arises by a mere ``relabeling transformation" on (\ref{eqn:motherlag}).  With the Lagrangian (\ref{eqn:motherlag}) in mind, consider the transformation given by
\begin{equation}\label{eqn:dualitytrans}
  \begin{aligned}
    \phi_v &\rightarrow \phi_v^* = q'_v \\
    q_l &\rightarrow q_l^* = \phi_l' \\ 
    A &\rightarrow A^* = B^{\mathrm T} \\ 
    B &\rightarrow B^* = A^{\mathrm T} \\ 
    \mathcal C &\rightarrow \mathcal C^* \simeq \mathcal I \\ 
    \mathcal I &\rightarrow \mathcal I^* \simeq \mathcal C \\ 
    \mathcal V &\rightarrow \mathcal V^* \simeq \mathcal L \\ 
    \mathcal F &\rightarrow \mathcal L^* \simeq \mathcal V.
  \end{aligned}
\end{equation}
The table in Fig.\ref{fig:dualityfig} illustrates the transformation (\ref{eqn:dualitytrans}). 
Informally, (\ref{eqn:dualitytrans}) swaps fluxes with charges, capacitors with inductors, and nodes with faces. 
Under (\ref{eqn:dualitytrans}), the connection matrix of $G$ transforms as\footnote{We remark that $X^*$ is the dual of some object $X$ under the transformation (\ref{eqn:dualitytrans}) and not, for example, complex conjugation. } 
\begin{equation}
  M \rightarrow M^* = - M^{\mathrm T}. 
\end{equation}
Therefore, the Lagrangian in (\ref{eqn:motherlag}) transforms as 
\begin{equation}
 L \rightarrow L^* = 
 \sum_{l,v}  q_l^* M_{fv}^* \dot\phi_v^* - \sum_{e \in \mathcal C^*} E_e^C\left(\sum_{l} q_l^* B_{fe}^*\right) - \sum_{e \in \mathcal I^*}E_e^L\left(\sum_{v} A_{ev}^* \phi_v^*\right)
\end{equation}
or 
\begin{equation}
  L^* = \sum_{l,v} q'_v  M_{vl} \dot \phi'_l   - \sum_{e \in \mathcal I} E_e^C \left(\sum_{l \in \mathcal V^*} A_{el}\phi'_l\right)  - \sum_{e \in \mathcal C} E_e^L\left(\sum_{v \in \mathcal F^*} q'_v B_{ve}\right)
\end{equation}
where we have integrated the first term by parts after carrying out the substitutions in (\ref{eqn:dualitytrans}).


\begin{figure}
    \centering
    \begin{tabular}{|c|c|}
    \hline
       object  & dual object \\
       \hline
        \begin{circuitikz}[scale=1.5] \draw[thick] (0,0.5) to[inductor,i_=$e$] (1,0.5);
        \filldraw[black] (0,0.5) circle (1.5pt) node[anchor=east]{$v_1$};
        \filldraw[black] (1,0.5) circle (1.5pt) node[anchor=west]{$v_2$};
        \draw[thin, <-] (0.5,0)node{$l_1$}  ++(-60:0.2) arc (-60:170:0.2);
        \draw[thin, <-] (0.5,1)node{$l_2$}  ++(-60:0.2) arc (-60:170:0.2);
        \end{circuitikz} &  
        \begin{circuitikz}[scale=1.5]\draw[thick] (0.5,0) to[capacitor,i_=$e^*$] (0.5,1);
        \filldraw[black] (0.5,0) circle (1.5pt) node[anchor=north]{$l_1^*$};
        \filldraw[black] (0.5,1) circle (1.5pt) node[anchor=south]{$l_2^*$};
        \draw[thin, <-] (0,0.5)node{$v_1^*$}  ++(170:0.2) arc (170:-60:0.2);
        \draw[thin, <-] (1,0.5)node{$v_2^*$}  ++(170:0.2) arc (170:-60:0.2);
        \end{circuitikz}  \\ 
        \hline
        \begin{circuitikz}[scale=1.5] 
        \draw[thick] (0,0.5) to[capacitor,i_=$e$] (1,0.5);
        \filldraw[black] (0,0.5) circle (1.5pt) node[anchor=east]{$v_1$};
        \filldraw[black] (1,0.5) circle (1.5pt) node[anchor=west]{$v_2$};
        \draw[thin, <-] (0.5,0)node{$l_1$}  ++(-60:0.2) arc (-60:170:0.2);
        \draw[thin, <-] (0.5,1)node{$l_2$}  ++(-60:0.2) arc (-60:170:0.2);
        \end{circuitikz} &  \begin{circuitikz}[scale=1.5]\draw[thick] (0.5,0) to[inductor,i_=$e^*$] (0.5,1);
        \filldraw[black] (0.5,0) circle (1.5pt) node[anchor=north]{$l_1^*$};
        \filldraw[black] (0.5,1) circle (1.5pt) node[anchor=south]{$l_2^*$};
        \draw[thin, <-] (0,0.5)node{$v_1^*$}  ++(170:0.2) arc (170:-60:0.2);
        \draw[thin, <-] (1,0.5)node{$v_2^*$}  ++(170:0.2) arc (170:-60:0.2);
        \end{circuitikz}  \\ 
        \hline
        $A_{e v_1} = -1$ &  $B^*_{v_1^* e^*} = -1 $ \\ 
        \hline
        $B_{l_1e} = 1$   & $A^*_{e^* l_1^*} = 1$ \\ 
        \hline
        $M_{l_1 v_1}$ & $- M_{v_1^*, l_1^*} $ \\ 
        \hline
        $\phi_{v_1}$ & $q_{v_1^*}$ \\ 
        \hline
        $q_{l_1}$ & $\phi_{l_1^*}$\\
        \hline
    \end{tabular}
    \caption{A number of examples showing how the transformation (\ref{eqn:dualitytrans}) effects various aspects of a circuit. We emphasize that, from the perspective of this formalism, nonlinear inductors may be treated simply as inductors on equal footing with linear inductors. In particular, a Josephson junction would be dual to a quantum phase slip.  }
    \label{fig:dualityfig}
\end{figure}

The non-trivial question is whether $L^*$ can be interpreted as the Lagrangian for an \emph{actual circuit}, where $q^*_l$ represent physical loop charges and $\phi^*_v$ represent physical node fluxes.   We address this for the remainder of the section.

\subsection{Planar circuits}
\begin{figure}
    \centering
    \begin{subfigure}{0.49\linewidth}
    \scalebox{0.7}{
        \begin{circuitikz}
        \draw[thick] (0,0) to[capacitor,i_=$e_5$] (3,0) to (4,0) to[inductor,i_=$e_1$] (4,4) to[capacitor,i_=$e_6$] (0,4) to[inductor,i_=$e_2$] (0,0);
        \draw[thick] (0,0) to[inductor,i_=$e_3$] (2,2) to[capacitor,i_=$e_8$] (0,4);
        \draw[thick] (4,0) to[inductor,i_=$e_4$] (6,2) to[capacitor,i_=$e_7$] (4,4);
        \filldraw[black] (0,0) circle (2pt) node[anchor = east]{$v_1$};
        \filldraw[black] (4,0) circle (2pt) node[anchor = north]{$v_2$}; 
        \filldraw[black] (4,4) circle (2pt) node[anchor = south]{$v_3$}; 
        \filldraw[black] (0,4) circle (2pt) node[anchor = east]{$v_4$};
        \filldraw[black] (2,2) circle (2pt) node[anchor = south west]{$v_5$};
        \filldraw[black] (6,2) circle (2pt) node[anchor = west]{$v_6$}; 

        \filldraw[gray,opacity = 0.5] (1,2) circle (2pt);
        \filldraw[gray,opacity = 0.5] (3,2) circle (2pt);
        \filldraw[gray,opacity = 0.5] (5,2) circle (2pt);
        \filldraw[gray,opacity = 0.5] (-1,2) circle (2pt);
        \draw[thick,gray,opacity = 0.5] (1,2) to (0,2) to[capacitor] (-1,2); 
        \draw[thick,gray,opacity = 0.5] (1,2) to (2,3) to[inductor] (3,2); 
        \draw[thick,gray,opacity = 0.5] (1,2) to (2,1) to[capacitor] (3,2); 
        \draw[thick,gray,opacity = 0.5] (3,2) to (3,4.5) to (-1,4.5) to[inductor] (-1,2); 
        \draw[thick,gray,opacity = 0.5] (3,2) to (3,-.5) to (-1,-.5) to[inductor] (-1,2); 
        \draw[thick,gray,opacity = 0.5] (3,2) to (4,2) to[capacitor] (5,2); 
        \draw[thick,gray,opacity = 0.5] (5,2) to (5,5) to (-2,5) to[inductor] (-1,2); 
        \draw[thick,gray,opacity = 0.5] (5,2) to (5,-1) to (-2,-1) to[capacitor] (-1,2); 
        
        \draw[thin, <-] (.9,2)node{$l_1$}  ++(-60:0.5) arc (-60:170:0.5);
        \draw[thin, <-] (2.9,2)node{$l_2$}  ++(-60:0.5) arc (-60:170:0.5);
        \draw[thin, <-] (4.9,2)node{$l_3$}  ++(-60:0.5) arc (-60:170:0.5);
        \draw[thin, <-] (-1,2)node{$l_4$}  ++(60:0.5) arc (60:-170:0.5);
        \end{circuitikz}}
        \caption{}
    \end{subfigure}
    \begin{subfigure}{0.49\linewidth}
    \scalebox{0.7}{
        \begin{circuitikz}
        \draw[thick] (0,0) to[capacitor,i_=$e_5$] (3,0) to (4,0) to[inductor,i_=$e_1$] (4,4) to[capacitor,i_=$e_6$] (0,4) to[inductor,i_=$e_2$] (0,0);
        \draw[thick] (0,0) to[inductor,i_=$e_3$] (-2,2) to[capacitor,i_=$e_8$] (0,4);
        \draw[thick] (4,0) to[inductor,i_=$e_4$] (6,2) to[capacitor,i_=$e_7$] (4,4);
        \filldraw[black] (0,0) circle (2pt) node[anchor = east]{$v_1$};
        \filldraw[black] (4,0) circle (2pt) node[anchor = north]{$v_2$}; 
        \filldraw[black] (4,4) circle (2pt) node[anchor = south]{$v_3$}; 
        \filldraw[black] (0,4) circle (2pt) node[anchor = east]{$v_4$};
        \filldraw[black] (-2,2) circle (2pt) node[anchor = east]{$v_5$};
        \filldraw[black] (6,2) circle (2pt) node[anchor = west]{$v_6$}; 
            
        \filldraw[gray,opacity = 0.5] (-1,2) circle (2pt);
        \filldraw[gray,opacity = 0.5] (2,2) circle (2pt);
        \filldraw[gray,opacity = 0.5] (5,2) circle (2pt);
        \filldraw[gray,opacity = 0.5] (8,2) circle (2pt);
        \draw[thick,gray,opacity = 0.5] (2,2) to[capacitor] (0,2) to (-1,2);
        \draw[thick,gray,opacity = 0.5] (2,2) to[capacitor] (4,2) to (5,2);
        \draw[thick,gray,opacity = 0.5] (2,2) to[inductor] (2,4) to (2,4.4) to (5.5,4.4) to (8,2);
        \draw[thick,gray,opacity = 0.5] (2,2) to[inductor] (2,0) to (2,-0.4) to (5.5,-.4) to (8,2);
        \draw[thick,gray,opacity = 0.5] (-1,2) to (-1,4.6) to (8,4.6) to[inductor] (8,2);
        \draw[thick,gray,opacity = 0.5] (-1,2) to (-1,-0.6) to (8,-.6) to[capacitor] (8,2);
        \draw[thick,gray,opacity = 0.5] (5,2) to (6,1.5) to[capacitor] (8,2);
        \draw[thick,gray,opacity = 0.5] (5,2) to (6,2.5) to[inductor] (8,2);
        
        \draw[thin, <-] (-1,2)node{$l_1$}  ++(-60:0.5) arc (-60:170:0.5);
        \draw[thin, <-] (2,2)node{$l_2$}  ++(-60:0.5) arc (-60:170:0.5);
        \draw[thin, <-] (4.9,2)node{$l_3$}  ++(-60:0.5) arc (-60:170:0.5);
        \draw[thin, <-] (-3,2)node{$l_4$}  ++(60:0.5) arc (60:-170:0.5);
        \end{circuitikz}}
        \caption{}
    \end{subfigure}
    \caption{The black circuits in (a) and (b) are simply redrawings of the same circuit. The duals of the circuits (a) and (b) are drawn in grey.  In both subfigures, labels correspond to the circuit in black.}
    \label{fig:ulrich}
\end{figure}
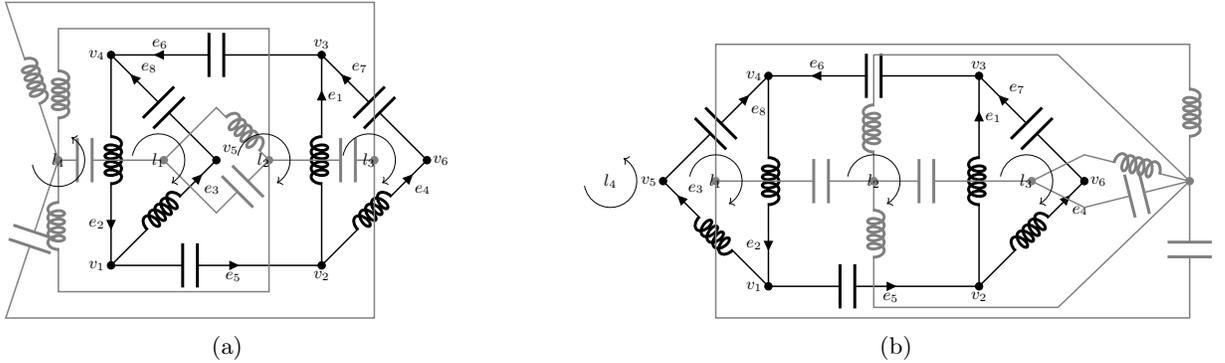

One way to understand the subtlety of this duality transformation is to ask whether it behaves ``nicely" on the \emph{drawings} of a circuit.  In other words, does $L^*$ describe a physical circuit that we can easily draw? 
In what follows, we will assume that capacitances and inductances have the same units, or equivalently that charges and fluxes have the same units. This is a matter of convenience that can be adapted to a physical system by making suitable variable redefinitions. 
The discussion in this section only applies to planar circuits, and in the next section we will discuss the problems that may arise for nonplanar circuits. 

The procedure for constructing the dual of a given circuit $G$ is as follows \cite{ulrich_dual_2016,gross_topological}:
\begin{enumerate}
    \item Draw $G$ on a plane (or equivalently a sphere, by identifying points at infinite distance on the plane with a single point). 
    \item For every face\footnote{Recall that there are no loops that cannot be drawn as faces for a planar graph.} $l$ in $G$, draw a node inside the $l$ labeled $l^*$
    \item Every branch $e$ in $G$ lies on the boundary of exactly two faces, say $l_1$ and $l_2$. If $e$ is capacitive (inductive), draw an inductive (capacitive) branch labeled $e^*$ connecting $l_1^*$ and $l_2^*$.
\end{enumerate}
An example of this drawing procedure is given in Fig.\ref{fig:ulrich}.

We remark that $B$, the orientation matrix of $G$, depends upon how $G$ is drawn. Since the incidence matrix of the dual of $G$ is the transpose of $B$, it follows that a single circuit can have multiple dual circuits. 
Since $G$'s incidence matrix is uniquely defined, all duals of $G$ have the same orientation matrix, and the number of duals of $G$ with distinct incidence matrices is exactly the number of drawings of $G$ with different orientation matrices. 
For example, in Fig.\ref{fig:ulrich}(b), the dual of $G$  has a node of degree six, which the dual of $G$ in (a) has no such vertex. In this way, we can see that circuits need not have a unique dual. 
A detailed discussion of this circuit is in Section \ref{ex:multi}.

The manner in which incidence and orientation matrices are ``exchanged" under duality can be interpreted in the following way: the combinatorial information of a graph is encoded in the topological information of its dual, and the topological information of a graph determines the combinatorial information in its dual. In this way, it is sensible to say that, for graphs (and thus circuits), topology is dual to combinatorics. 

\subsection{Nonplanar circuits}
The transformation (\ref{eqn:dualitytrans}) is the most natural  candidate for a duality transformation at the Lagrangian level.
Unfortunately, as we now discuss, for nonplanar circuits it is not general clear how to construct a physical circuit for which $L^*$ is its Lagrangian.
The difficulty in constructing the dual to a nonplanar circuit arises when attempting to draw the dual circuit, rather than when trying to produce the dual Lagrangian. 

\begin{figure}
    \centering
    \begin{circuitikz}
    \draw[thick,red] (0,0) to[capacitor,i^=$e_1$,color=red] (0,2) to[capacitor,i^=$e_2$,color=red]  (0,4) to[capacitor,i^=$e_3$,color=red]  (0,6) ;
    \draw[thick,blue] (0,6) to[capacitor,i^=$e_4$,color=blue]  (2,6) to[capacitor,i^=$e_5$,color=blue]  (4,6) to[capacitor,i^=$e_6$,color=blue]  (6,6);
    \draw[thick,blue] (0,0) to[capacitor,i_=$e_4$,color=blue]  (2,0) to[capacitor,i_=$e_5$,color=blue]  (4,0) to[capacitor,i_=$e_6$,color=blue]  (6,0);
    \draw[thick,red] (6,0) to[capacitor,i_=$e_1$,color=red] (6,2) to[capacitor,i_=$e_2$,color=red]  (6,4) to[capacitor,i_=$e_3$,color=red]  (6,6);
    \draw[thick] (4,6) to[inductor,i^=$e_7$] (0,4) to[inductor,i^=$e_8$] (2,0) to[inductor,i^=$e_9$] (6,2) to[inductor,i^=$e_{10}$] (4,6);
    \filldraw[black] (0,0) circle (2pt) node[anchor=north]{$v_1$};
    \filldraw[black] (2,0) circle (2pt) node[anchor=north]{$v_2$};
    \filldraw[black] (4,0) circle (2pt) node[anchor=north]{$v_3$};
    \filldraw[black] (6,0) circle (2pt) node[anchor=north west]{$v_1$};
    \filldraw[black] (6,2) circle (2pt) node[anchor=west]{$v_4$};
    \filldraw[black] (6,4) circle (2pt) node[anchor=west]{$v_5$};
    \filldraw[black] (6,6) circle (2pt) node[anchor=south west]{$v_1$};
    \filldraw[black] (4,6) circle (2pt) node[anchor=south]{$v_3$};
    \filldraw[black] (2,6) circle (2pt) node[anchor=south]{$v_2$};
    \filldraw[black] (0,6) circle (2pt) node[anchor= south east]{$v_1$};
    \filldraw[black] (0,4) circle (2pt) node[anchor=east]{$v_5$};
    \filldraw[black] (0,2) circle (2pt) node[anchor=east]{$v_4$}; 
    
    \draw[thin, <-] (3,3)node{$l_1$}  ++(-60:0.3) arc (-60:170:0.3);
    \draw[thin, <-] (1,1)node{$l_2$}  ++(-60:0.3) arc (-60:170:0.3);
    \draw[thin, <-] (5,1)node{$l_3$}  ++(-60:0.3) arc (-60:170:0.3);
    \draw[thin, <-] (1,5)node{$l_4$}  ++(-60:0.3) arc (-60:170:0.3);
    \draw[thin, <-] (5,5)node{$l_5$}  ++(-60:0.3) arc (-60:170:0.3);
    
    \end{circuitikz}
    \caption{A circuit on $K_5$. The edges with alike labels are to be identified. The loops $l_6$ and $l_7$ are drawn in red and blue respectively.}
    \label{fig:nonplanar}
\end{figure}
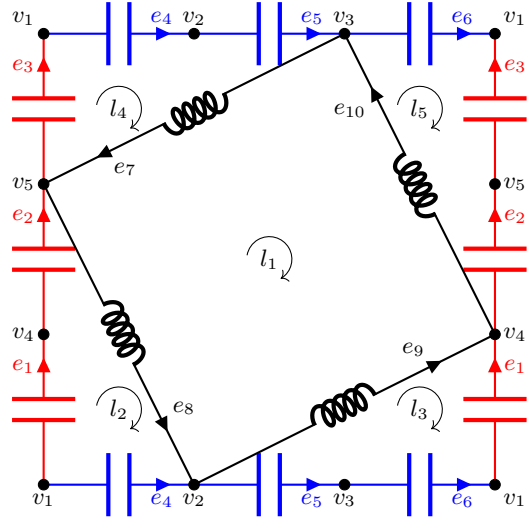

For planar graphs, it is possible to choose a suitable set of loops $\mathcal L$ by drawing the circuit in question on a surface (namely the plane), and such circuits can be made to have the property that 
\begin{equation}
    \sum_{l} B_{le} = 0.
\end{equation}
The importance of this property is that every edge in the circuit participates in precisely two loops within $\mathcal L $.
On the other hand, for all circuits, including nonplanar ones, we demand 
\begin{equation}
    \sum_v A_{ev} = 0.
\end{equation}
The duality transformation (\ref{eqn:dualitytrans}) sends 
\begin{equation}
    \begin{aligned}
        B &\rightarrow B^* = A^{\mathrm T} \\ 
        A &\rightarrow A^* = B^{\mathrm T}. 
    \end{aligned}
\end{equation}
If 
\begin{equation}
    \sum_{l \in \mathcal L} B_{le} \neq 0,
\end{equation}
then $A^* = B^{\mathrm T}$ is \emph{not a valid incidence matrix}. 
The reason for this is the $2g$ ``topological cycles" that must be included in $\mathcal L$ in order to fully specify all of the currents in the nonplanar circuit of interest. 
See Fig \ref{fig:nonplanar} for a drawing of a circuit with topological cycles chosen and emphasized with color coding.
Consider some edge $e$ that participates in a topological cycle $l_t$. $e$ also lies on the boundary of a pair of faces, say $l_1$ and $l_2$. 
Then, 
\begin{equation}
    \sum_l B_{le} = B_{l_t e} + B_{l_1 e} + B_{l_2 e} = B_{l_t e} \neq 0.
\end{equation}
Thus, if one attempted to interpret $B^{\mathrm T}$ as an incidence matrix, the edge $e$ would ``connect" three vertices, rather than two, but then the resulting construction is not an edge. 
From a mathematical perspective, it is no issue to acknowledge that an edge in a graph participates in many cycles, but there is no sensible notion of an edge that connects many vertices.  

Moreover, any notion of graph duality must conserve $g$, the genus of the surface of embedding. We see, however that 
\begin{equation}
    B \rightarrow B^* = A^{\mathrm T}
\end{equation}
gives rise to $B^*$ that \emph{does} have the property that 
\begin{equation}\label{eqn:planarcond}
    \sum_l B^*_{le} = 0, 
\end{equation}
which is a hallmark property of planar graphs in the sense that (\ref{eqn:planarcond}) holds for and only for planar graphs. 

Hence, there exist nonplanar circuits with no physical dual. We remark that (\ref{eqn:dualitytrans}) is still perfectly well--defined for nonplanar circuits, but that it may not be possible to draw a circuit that gives rise to the resulting Hamiltonian.   It may be possible that one can always introduce auxiliary degrees of freedom and integrate them out, such that $L^*$ is the reduced description of a circuit with (nonlinear) capacitive loops or inductive cuts, but we have not found an elegant and fully general construction which allows us to construct an honest circuit dual for a general nonplanar circuit. Resolving this question seems to us to be an important open problem.

\section{Examples}
In what follows, we will solve a number of examples that are either particularly pedagogical or of particular interest. 
\subsection{A planar circuit}\label{ex:multi}
Consider the black circuit drawn in Fig.\ref{fig:ulrich}(a), which we will call $G$. The face in the dual of $G$ (drawn in grey) corresponding to the vertex $v_i$ in $G$ will be labeled $v_i^*$. We will refer to the grey circuit as $G^*$.
The capacitance (inductance) of the circuit on element on $e_i$ will be denoted as $C_i$ ($L_i$). 
We suppose for the sake of simplicity in the presentation that all circuit elements are linear, but we emphasize that this is unnecessary. 
We will begin by analyzing $G$. 

To start, 
\begin{equation}
A = 
    \begin{blockarray}{ccccccc}
    v_1 & v_2 & v_3 & v_4 & v_5 & v_6 & \\
    \begin{block}{(cccccc)@{\hspace{7pt}}c}
        0 & -1 & 1 & 0 & 0 & 0 & e_1\\ 
        1 & 0 & 0 & -1 & 0 & 0& e_2 \\ 
        -1 & 0 & 0 & 0 & 1 & 0& e_3 \\ 
        0 & -1 & 0 & 0 & 0 & 1& e_4 \\ 
        -1 & 1 & 0 & 0 & 0 & 0& e_5 \\ 
        0 & 0 & -1 & 1 & 0 & 0& e_6 \\ 
        0 & 0 & 1 & 0 & 0 & -1& e_7 \\ 
        0 & 0 & 0 & 1 & -1 & 0& e_8 \\ 
        \end{block}
        \end{blockarray}
\end{equation}
and 
\begin{equation}
B = 
    \begin{blockarray}{ccccccccc} 
    e_1 & e_2 & e_3 & e_4 & e_5 & e_6 & e_7 & e_8 & \\ 
        \begin{block}{(cccccccc)@{\hspace{7pt}}c}
        0 & -1 & -1 & 0 & 0 & 0 & 0 & -1 & l_1 \\  
        -1 & 0 & 1 & 0 & -1 & -1 & 0 & 1 & l_2 \\ 
        1 & 0 & 0 & -1 & 0 & 0 & -1 & 0 & l_3 \\ 
        0 & 1 & 0 & 1 & 1 & 1 & 1 & 0& l_4  \\ 
        \end{block}
    \end{blockarray}
\end{equation}
and thus 
\begin{equation}
    M = 
\begin{blockarray}{ccccccc}
v_1 & v_2 & v_3 & v_4 & v_5 & v_6 & \\ 
\begin{block}{(cccccc)@{\hspace{7pt}}c}
 0 & 0 & 0 & -1 & 1 & 0 & l_1\\
 1 & -1 & 1 & 0 & -1 & 0& l_2 \\
 0 & 0 & -1 & 0 & 0 & 1 & l_3 \\
 -1 & 1 & 0 & 1 & 0 & -1& l_4 \\
 \end{block}
 \end{blockarray},
\end{equation}
Now, the Lagrangian for $G$ is given by 
\begin{equation}
    \begin{aligned}
        L &= (q_4 - q_2)(\dot\phi_{v_2} - \dot\phi_{v_1}) + (q_2 - q_1)(\dot\phi_{v_4} - \dot\phi_{v_5}) + (q_4 - q_3)(\dot\phi_{v_3} - \dot\phi_{v_6}) + (q_4 - q_2) (\dot\phi_{v_4} - \dot\phi_{v_3}) \\
        &-\frac{1}{2 L_1} (\phi_{v_3} - \phi_{v_2})^2-\frac{1}{2 L_2} (\phi_{v_1} - \phi_{v_4})^2 -\frac{1}{2 L_3} (\phi_{v_5} - \phi_{v_1})^2 -\frac{1}{2 L_4} (\phi_{v_6} - \phi_{v_2})^2 \\ 
        &-\frac{1}{2 C_5}(q_4 - q_2)^2 - \frac{1}{2 C_6} (q_4 - q_2)^2 - \frac{1}{2 C_7}(q_4 - q_3)^2 - \frac{1}{2 C_8}(q_2 - q_1)^2 .
    \end{aligned}
\end{equation}
There are two right null vectors of $M$. The first corresponds to 
\begin{equation}
    0 = \frac{\delta S}{\delta \phi_{v_2}} + \frac{\delta S }{\delta \phi_{v_3}} + \frac{\delta S}{\delta \phi_{v_6}}
\end{equation}
which contains no nontrivial constraint. 
The other nontrivial null vector corresponds to 
\begin{equation}
    0 = \frac{\delta S }{\delta \phi_{v_1}} + \frac{\delta S}{\delta \phi_{v_2}} = \frac{1}{L_1}(\phi_{v_2} - \phi_{v_3}) + \frac{1}{L_2} (\phi_{v_1} - \phi_{v_4}) + \frac{1}{L_4}(\phi_{v_1} - \phi_{v_5}) + \frac{1}{L_4}(\phi_{v_2} - \phi_{v_6}).
\end{equation}
Resolving this constraint and defining 
\begin{equation}
    \begin{aligned}
        Q_1 &= q_4 - q_2 \\ 
        Q_2 &= q_2 - q_1 \\ 
        Q_3 &= q_4 - q_3 \\ 
        \Phi_1 &= \phi_{v_2} + \phi_{v_4} - \phi_{v_1} - \phi_{v_3} \\ 
        \Phi_2 &= \phi_{v_4} - \phi_{v_5} \\ 
        \Phi_3 &= \phi_{v_3} - \phi_{v_6},
    \end{aligned}
\end{equation}
we find immediately 
\begin{equation}
\begin{aligned}
L & = Q_1 \dot \Phi_1 + Q_2 \dot \Phi_2 + Q_3 \dot \Phi_3  - \frac{1}{2}\left(\frac{1}{C_5} + \frac{1}{C_6}\right)Q_1^2 - \frac{1}{2 C_7}Q_3^2 - \frac{1}{2 C_8}Q_2^2 \\ 
    &-\frac{L_{\Sigma}}{2}\left(\frac{1}{L_1L_2}\Phi _1^2+ \frac{1}{L_1L_3} \left(\Phi _1-\Phi _2\right){}^2+\frac{1}{L_1L_4}\Phi _3^2+\frac{1}{L_2L_3} \Phi _2^2+\frac{1}{L_2L_4} \left(\Phi _1+\Phi _3\right)^2+\frac{1}{L_3L_4} \left(\Phi _1-\Phi _2+\Phi _3\right)^2\right) \\ 
    \end{aligned}
\end{equation}
with 
\begin{equation}
    \frac{1}{L_\Sigma} = \frac{1}{L_1} + \frac{1}{L_2} + \frac{1}{L_3} + \frac{1}{L_4}. 
\end{equation}
By taking the transformation 
\begin{equation}
    \begin{aligned}
        Q_i &\rightarrow Q_i^* = -\Phi_i \\ 
        \Phi_i &\rightarrow \Phi_i^* = Q_i,
    \end{aligned}
\end{equation}
we can acquire the Lagrangian (and Hamiltonian) of the grey circuit drawn in Fig \ref{fig:ulrich} (a).

We remark that the black circuit drawn in Fig \ref{fig:ulrich} (b) is simply a different drawing of the black circuit drawn in Fig \ref{fig:ulrich}(a), but the grey circuit in Fig \ref{fig:ulrich} (b) is not even graph isomorphic to the grey circuit drawn in Fig \ref{fig:ulrich} (a). A similarly simple analysis of the black circuit drawn in Fig \ref{fig:ulrich} (b) would provide an canonically related Hamiltonian to the one governing $G$  (of course, since the two circuits are identical). It then follows that the Hamiltonian governing the grey circuits must also be canonically related. 


\subsection{A self--dual circuit}
Consider the circuit drawn in Fig.\ref{fig:excir}. The Hamiltonian produced is given by 
\begin{equation}
    H(Q, \Phi) = \frac{1}{3 C} (Q_1^2 - Q_1 Q_2 + Q_2^2) + \frac{1}{3 L}(\Phi_1^2 - \Phi_1 \Phi_2 + \Phi_2^2).
\end{equation}
It is obvious that $H(-\Phi,Q) = H(Q,\Phi)$ provided that we choose units where $C = L$, which is to say that the circuit drawn in Fig.\ref{fig:excir} is, or at least \emph{has} a self--dual drawing.  
%

\subsection{A circuit on $K_5$}
For the purposes of this example, we will focus on the circuit drawn in Fig.~\ref{fig:nonplanar}, which will again be called $G$. The drawing at hand is slightly unconventional since it is drawn on a polygon with boundaries identified (which is typical in mathematical descriptions of a torus, for example). For the sake of simplicity, we will suppose that all inductances take on a value $L$ and all capacitances take on a value $C$. 

First, we must compute the relevant $A$, $B$, and $M$ matrices: 
\begin{equation}
\begin{aligned}
    A &= 
    \begin{blockarray}{cccccc}
     v_1 & v_2 & v_3 & v_4 & v_5 & \\ 
        \begin{block}{(ccccc)@{\hspace{7pt}}c}
 -1 & 0 & 0 & 1 & 0 & e_1 \\
0 & 0 & 0 & -1 & 1& e_2  \\
1 & 0 & 0 & 0 & -1& e_3  \\
-1 & 1 & 0 & 0 & 0& e_4  \\
0 & -1 & 1 & 0 & 0& e_5  \\
1 & 0 & -1 & 0 & 0& e_6  \\
0 & 0 & -1 & 0 & 1& e_7  \\
0 & 1 & 0 & 0 & -1& e_8  \\
0 & -1 & 0 & 1 & 0& e_9  \\
0 & 0 & 1 & -1 & 0& e_{10}  \\
        \end{block}
    \end{blockarray}, \\ 
    B &= 
    \begin{blockarray}{ccccccccccc}
    e_1 & e_2 & e_3 & e_4 & e_5 & e_6 & e_7 & e_8 & e_9 & e_{10} & \\
        \begin{block}{(cccccccccc)@{\hspace{7pt}}c}
 0 & 0 & 0 & 0 & 0 & 0 & -1 & -1 & -1 & -1 & l_1 \\
 1 & 1 & 0 & -1 & 0 & 0 & 0 & 1 & 0 & 0 & l_2 \\
 -1 & 0 & 0 & 0 & -1 & -1 & 0 & 0 & 1 & 0  & l_3\\
 0 & 0 & 1 & 1 & 1 & 0 & 1 & 0 & 0 & 0  & l_4\\
 0 & -1 & -1 & 0 & 0 & 1 & 0 & 0 & 0 & 1  & l_5\\
 1 & 1 & 1 & 0 & 0 & 0 & 0 & 0 & 0 & 0  & l_6\\
 0 & 0 & 0 & 1 & 1 & 1 & 0 & 0 & 0 & 0  & l_7\\
        \end{block}
    \end{blockarray}\\ 
    M &=
    \begin{blockarray}{cccccc}
     v_1 & v_2 & v_3 & v_4 & v_5 & \\ 
        \begin{block}{(ccccc)@{\hspace{7pt}}c}
 0 & 0 & 0 & 0 & 0 & l_1\\
 0 & -1 & 0 & 0 & 1& l_2 \\
 0 & 1 & 0 & -1 & 0& l_3 \\
 0 & 0 & 1 & 0 & -1& l_4 \\
 0 & 0 & -1 & 1 & 0& l_5 \\
 0 & 0 & 0 & 0 & 0 & l_6 \\
 0 & 0 & 0 & 0 & 0 & l_7\\
        \end{block}
    \end{blockarray}
    \end{aligned}
\end{equation}
Thus, 
\begin{equation}
\begin{aligned}
    L &=\left(q_{l_3}-q_{l_2}\right) (\dot\phi _{v_2}- \dot\phi_{v_5})+\left(q_{l_4}-q_{l_5}\right) (\dot\phi_{v_3} - \dot\phi_{v_5})+\left(q_{l_5}-q_{l_3}\right) (\dot\phi_{v_4} - \dot\phi_{v_5})\\
    &- \frac{1}{2 C} \left( (q_{l_2} - q_{l_3} + q_{l_6})^2 + (q_{l_2} - q_{l_5} + q_{l_6})^2 + (q_{l_4} - q_{l_5}+ q_{l_6})^2\right) \\
    &-\frac{1}{2C}\left( (q_{l_4} - q_{l_2} + q_{l_7})^2 + (q_{l_4} - q_{l_3} + q_{l_7})^2 + (q_{l_5} - q_{l_3} + q_{l_7})^2\right) \\ 
    &- \frac{1}{2 L} \left((\phi_{v_5} -\phi_{v_3})^2 + (\phi_{v_2} - \phi_{v_5})^2 + (\phi_{v_4} - \phi_{v_2})^2 + (\phi_{v_4} - \phi_{v_3})^2\right).
    \end{aligned}
\end{equation}
Choosing the variables 
\begin{equation}
    \begin{aligned}
        Q_1 = q_{l_3} - q_{l_2} \\ 
        Q_2 = q_{l_4} - q_{l_5} \\ 
        Q_3 = q_{l_5} - q_{l_3} \\
        \Phi_1 = \phi_{v_2} -\phi_{v_5}\\ 
        \Phi_2 = \phi_{v_3} -\phi_{v_5}\\ 
        \Phi_3 = \phi_{v_4} -\phi_{v_5}
    \end{aligned}
\end{equation}
we can rewrite
\begin{equation}\label{eqn:nonplanarL}
\begin{aligned}
    L = \sum_{i=1}^3 Q_i\dot\Phi_i - \frac{1}{2 C} \left(
    (q_{l_6} - Q_1)^2 + ( q_{l_6} - Q_1 - Q_3)^2 + (Q_2 + q_{l_6})^2 
    \right) \\ 
    - \frac{1}{2 C}\left((q_{l_7} + Q_1 + Q_2 + Q_3)^2  + (q_{l_7}+  Q_2 + Q_3)^2 + (q_{l_7}+  Q_3)^2\right) \\ 
    - \frac{1}{2 L}\left(\Phi_2^2 + \Phi_1^2 + (\Phi_3 - \Phi_1)^2 + (\Phi_2 - \Phi_3)^2\right).
    \end{aligned}
\end{equation}
From (\ref{eqn:nonplanarL}) it is clear to see that $q_{l_6}$ and $q_{l_7}$ are constrained in terms of $Q$ variables as a result of 
\begin{equation}\label{eqn:nonplanrcons}
        0 = \frac{\delta S}{\delta q_{l_6}}
         = \frac{\delta S}{\delta q_{l_7}}.
\end{equation}
After using (\ref{eqn:nonplanrcons}), we find 
\begin{equation}
\begin{aligned}
    q_{l_6} &= \frac{1}{3} (2Q_1- Q_2 + Q_3)  \\ 
    q_{l_7} &= \frac{1}{3}(-Q_1 - 2 Q_2 - 3 Q_3).
\end{aligned}
\end{equation}
Then, the Hamiltonian describing $G$ is given by $H(Q_1,Q_2,Q_3,\Phi_1,\Phi_2,\Phi_3) = \sum_{i=1}^3 Q_i \dot\Phi_i - L$. 
The Poisson brackets of the system at hand are 
\begin{equation}
    \{\Phi_i,Q_j\} = \delta_{ij}.
\end{equation}

One may find it peculiar that the ``topological loops" $q_{l_6}$ and $q_{l_7}$ are nondynamical in the sense that they are not independent of loops on the surface of the torus. This is no accident. 
Indeed, for circuits on the fully connected graph on $V$ vertices, $K_V$, there are $\binom{V}{2}$ edges, and there must be at least $2g$ loops made of only capacitors or only inductors. 
To see how this is true, observe that the number of independent loops in a circuit $G$ with $E$ edges and $V$ vertices is at least
\begin{equation}
   \text{number of loops} \ge  n(E,V) = \text{max}\left(E - V + 1,0\right).
\end{equation}
We emphasize that this bound on $n(E,V)$ holds even for graphs that are not connected.  
Now, suppose the number of capacitors in a circuit is $N_C$, while the number of inductors is $N_I$. 
There exists a pair of graphs $G_C$ and $G_I$ that consist of all vertices from $G$ and only the capacitive edges or inductive edges respectively. 
The number of inductive (capacitive) loops in $G_I$ ($G_C$) is then at least $n(N_I,V)$ ($n(N_C,V)$). Since $G_C$ and $G_I$ are subgraphs of $G$, 
it follows that the number of homogenous loops in $G$ is at least $n(N_I,V) + n(N_C,V)$. Moreover, it must be the case that $N_I + N_C = E$. 
Thus, we have that the number of homogenous loops in $G$ is bounded below by 
\begin{equation}
N_l = n(N_I,V) + n(N_C,V) = n(E - N_C, V) +n(N_C, V).
\end{equation}
Now, for fully connected graphs $K_V$, $E = \binom{V}{2}$, so a circuit on $K_V$ with $N_C$ capacitors has 
\begin{equation}
    N_l = n\left(\binom{V}{2} - N_C, V\right) +n(N_C, V) \geq 2 g
\end{equation}
since 
\begin{equation}
    \min_x \left[ \max\left(\binom{V}{2 } -x - V + 1,0\right) + \max\left(x - V + 1,0\right)\right] \geq 2 g.
\end{equation}
In a few words, a fully connected graph on $V$ vertices can always be drawn on a $g$--holed torus in such a way that all $2 g$ topological loops are null vectors of $M$. It is, in principle, always possible to synthesize some planar circuit with the same dynamics as a given circuit on $K_5$. 

While it is generally impossible to construct a dual circuit for a nonplanar circuit, it is nonetheless sometimes possible to do something similar. In particular, it is sometimes possible to reduce a nonplanar circuit to an effectively planar circuit by using constraints that follow from Kirchoff's rules (e.g. by adding inductors or capacitors in series or parallel or by using the delta--wye transform \cite{otten_planarization}) . 
One way to accomplish this goal is to arrange circuit elements such that the edges lying on a given topological cycle contain either all capacitors or all inductors (so that the relevant loop degree of freedom is nondynamical). 
Then, one can produce a simplified equivalent circuit with  fewer edges (by using Kirchoff's rules) than the original. 
If the simplified circuit is then planar, and has a well--defined dual. In some sense, the resulting dual serves as what would be the dual of the nonplanar circuit. 
Whether or not this planarization may be carried out apparently depends on whether or not there exist at least $2g$ homogenous loops in a circuit. 
We mention in passing that no similar result holds for circuits on, say $K_{3,3}$. From this we suspect that the most interesting nonplanar circuits are likely to be those that are most sparsely connected. 

\section{Conclusions}
\begin{figure}
    \centering
    \begin{circuitikz}[scale=1.8]
        \draw[thick] (0,0) to[short,i^=$ $] (4,0) to[short,i^=$ $] (4,4); 
        \draw[thick] (0,0) to[short,i_=$ $] (0,4) to[short,i_=$ $] (4,4); 
        \draw[thick,red] (2,0) to[short,l_=$ $,color=red] (2,1) to[inductor,color=red] (2,2) to[inductor,color=red] (2,3) to[capacitor,color=red] (2,4); 
        \draw[thick,red] (2.1,0) to[short,i_=$\dot q$,color=red] (2.1,1) to (2.1,3) to (2.1,4);
        \draw[thick,blue] (0,2) to[short,l_=$ $,color=blue] (1,2) to[inductor,color=blue] (2,2) to[capacitor,color=blue] (3,2) to[inductor,color=blue] (4,2);
        \draw[thick] (1,2) to (1,1) to[capacitor] (2,1)to (3,1) to[inductor] (3,2) to (3,3) to[capacitor] (2,3) to (1,3) to[capacitor] (1,2);
        \draw[densely dashed] (1.5,1.5) to (2,1.5) to[short,i_=$\phi_{\text{dual}}$] (2.8,1.5);
        \draw[densely dashed] (1.5,2.5) to (2,2.5) to[short,i_=$\phi_{\text{dual}}$] (2.8,2.5);
        \draw[densely dashed] (1.5,3.5) to (2,3.5) to[short,i_=$\phi_{\text{dual}}$] (2.8,3.5);
    \end{circuitikz}
    \caption{A heuristic drawing on an identified polygon representation of a torus. If a charge or flux about some ``topological" loop is to be dynamical, then it's dual must also be topological. A ``topological current" is drawn and labeled in red. The naive dual to the current $\dot q$ would be a voltage drop across the corresponding cut. In the figure above, the nonlocal ``topological" voltage resulting from the naive duality map would participate in all three dual edges with the label $\phi_{\text{dual}}$.}
    \label{fig:enter-label}
\end{figure}
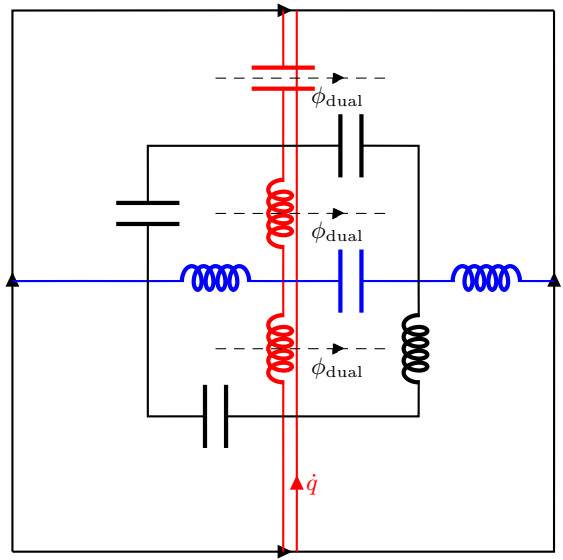

In this paper, we have derived a new Lagrangian description of circuits that treats fluxes and charges on an equal footing, extending our previous work \cite{osborne2023symplectic} towards a more general theory of circuit quantization. 
Existing methods can be shown to be equivalent to our formalism.
A key feature of this ``flux-charge symmetric" formulation of circuit mechanics is that circuit duality becomes a simple relabeling transformation, for planar circuits. 

For non-planar circuits, it is not always obvious if a dual circuit is physical. An interesting future research direction is to understand whether it is possible to add \emph{additional} circuit elements such that the dual of a general non-planar circuit is well-defined.  


\section*{Acknowledgements}
We thank Andr\'as Gyenis for helpful discussions.  This work was supported by a Research Fellowship from the Alfred P. Sloan Foundation under Grant FG-2020-13795 (AL) and by the U.S. Air Force Office of Scientific Research under Grant FA9550-21-1-0195 (AO, AL).

\begin{appendix}


\section{Graph theory}\label{app:graphthry}
Where possible, the demonstrations in this appendix will follow \cite{osborne2023symplectic} as closely as possible. 
\begin{defn}[Graph]
    Take a set $\mathcal V$ and let $\mathcal E$ be the set of ordered pairs of elements of $\mathcal V$. We say that $G = (\mathcal V, \mathcal E)$ is a (directed) \textbf{graph}.
\end{defn}
In all of what follows, we will assume that graphs and structures built upon graphs are connected. This assumption will not henceforth be stated, but we remark that it is not  required by every result we prove. 
\begin{defn}[Embedding]\label{defn:embedding}
Let $G = (\mathcal V, \mathcal E)$ be a graph. Choose a two dimensional orientable surface $S$, and associate one (distinct) point in $S$ for every element of $\mathcal V$: crudely we can think of $\mathcal{V}\subset S$. Every edge $e = (u,v) \in \mathcal E$ corresponds to a smooth map $e:[0,1]\rightarrow S$ such that $e(0)=u$ and $e(1)=v$. Assuming that for $x,x^\prime\in (0,1)$, $e(x)\ne e(x^\prime)$ (i.e. no two lines ever cross), $S$ is partitioned into a set of surfaces. 
If such a drawing exists, we call it an \textbf{embedding} of $G$ on $S$, or simply an embedding of $G$. 
\end{defn}
A single graph can have many embeddings. The combinatorial information in $G$ is independent of the manner in which it is embedded. 
If a graph $G$ can be embedded upon a surface $S$ at all, it is always possible to draw some embedding of $G$ on $S$ such that the surface $S$ is partitioned into regions isomorphic to the unit disk, though we remark that Definition \ref{defn:embedding} does not require this property. Nonetheless, these ``nice" embeddings will be the subject of our discussion. 
\begin{defn}[Face]
    In an embedding of $G$ on $S$, each ``patch" topologically equivalent to the unit disk is a \textbf{face}.
    For two faces $f_1$ and $f_2$  and edge $e$ of $G$ embedded on $S$, we write $f_1 \xleftrightarrow{e} f_2$ to mean that $e$ is an edge on the common boundary of the disk corresponding to $f_1$ and the disk corresponding to $f_2$.
\end{defn}
A graph $G$ is a purely combinatorial object.  We emphasize that while there is a universal notion of topology in the one-dimensional setting, the topology of the embedding is non-universal.  Perhaps surprisingly at first glance, even the subsets of edges in $\mathcal{E}$ which correspond to the boundaries of faces in an embedding is not unique, and can depend on the choice of embedding. As we will see in later discussion, these ambiguities certainly will  have no physical consequences in circuit quantization, although they can lead to different looking Lagrangians or Hamiltonians at first glance!    
\begin{defn}[Loop] \label{defn:loop}
Let $G = (\mathcal V, \mathcal E)$ be a graph. For the purposes of this definition, we regard $e = (u,v)$ and $e' = (v,u)$ as equivalent. 
A \textbf{loop} $\gamma$ is a subset $\gamma \subset \mathcal E$, so that (properly orienting each edge)
\begin{equation}
\gamma = \{(v_0, v_1), (v_1,v_2), \dots, (v_{n-1}, v_n), (v_n, v_0)\}.
\end{equation}
We say that $\gamma$ is of length $n$ if $|\gamma| = n$. 
\end{defn}
Every face's boundary is a loop, so there is a natural injection from the set of faces to the set of loops.
\begin{defn}[Linearly independent loops]
Let $G = (\mathcal V, \mathcal E)$ be a graph. 
Let $\Gamma = \{\gamma_1, \gamma_2, \dots, \gamma_n\}$ be a set of $n$ loops in $G$.
We say that the set $\Gamma$ is  linearly dependent if there exists $m\in\lbrace 1,\ldots, n\rbrace$ and  set $\iota = \{i_1,i_2,\dots,i_p\}\subset \{1,2,3,\dots, n\}\setminus\lbrace m\rbrace$,  such that 
\begin{equation}
    \gamma_m = \gamma_{i_1} \Delta \gamma_{i_2} \Delta \dots \Delta \gamma_{i_p}
\end{equation}
where 
\begin{equation}
    A \Delta  B = (A \cup B ) \setminus (A \cap B).
\end{equation}
Otherwise, we say that the loops in $\Gamma$ are \textbf{linearly independent}.
We say that the rank of a set of loops $\Gamma$ is the number of linearly independent loops in $\Gamma$. 
\end{defn}
\begin{defn}[Cut]
Let $G = (\mathcal V, \mathcal E)$ be a graph. 
    Partition $\mathcal V$ into $m$ sets $\mathcal V_i$ with $i = 1,2,3\dots,m$ so that 
    \begin{equation}
        \mathcal V = \bigcup_{i=1}^m \mathcal V_i
    \end{equation}
    and 
    \begin{equation}
        \emptyset = \mathcal V_i \cap \mathcal V_j. 
    \end{equation}
    for $i \neq j$. 
    Such a partition $(\mathcal V_1, \mathcal V_2,\mathcal V_3, \dots,\mathcal V_m)$ of $\mathcal V$ is called a \textbf{cut} of $G$.  
\end{defn}
\begin{defn}[Edge--cut]
    Let $G = (\mathcal V,\mathcal E)$ be a graph. Let $(\mathcal V_1, \mathcal V_2 , \dots ,\mathcal V_m)$ be a cut of $G$. 
    Define the set $\mathcal K \subset \mathcal E$ such that 
    \begin{equation}
       \mathcal K = \{(u,v) \in \mathcal E \text{ s.t. }  \exists i \text{ s.t. } \{u,v\} \subset \mathcal V_i\}.
    \end{equation}
    The set $\mathcal E \setminus \mathcal K$ is called the \textbf{edge--cut} of $G$ induced by $(\mathcal V_1, \mathcal V_2, \dots, \mathcal V_m)$.  
\end{defn}

The notions of a cut and an edge--cut are formally equivalent and every result of this manuscript may be stated and proven favoring one notion over the other. Nonetheless, it is useful to posses an understanding of both conventions since some ideas are easier to understand in terms of cuts and others are easier to understand in terms of edge--cuts.
\begin{defn}[Genus]
  Let $G = (\mathcal V, \mathcal E)$ be a graph. We say that $G$ has genus $g$ if there exists no embedding of $G$ on any surface $S$ of topological genus $g' < g$.  This is not the same definition as the usual graph genus from graph theory.
\end{defn}
\begin{defn}[Embedded graph]
Let $G$ be a graph, and $S$ an orientable surface. Suppose that there exists some embedding of $G$ on $S$, i.e. on a sphere with $g$ holes. Under this embedding, denote the set of faces of $G$ embedded in $S$ as $\mathcal F$. 
We write $G_S = (\mathcal V, \mathcal E, \mathcal F)$ to be the \textbf{embedded graph} $G$ on $S$. 
\end{defn}

In much of what follows, we will suppress some of the above defined terminology when either the embedding of $G$ on $S$ is of no consequence, or when the embedding of $G$ on $S$ is clear from context. In such cases, we will write $G = (\mathcal V,\mathcal E,\mathcal F)$ and we will refer to $G$ simply as a graph. 
\begin{thm}\label{thm:facefull}
    Let $G = (\mathcal V, \mathcal E, \mathcal F)$ be an embedded graph. Denote the loop on the boundary of the face $f_i$ as $\gamma_i$.  If $G$ is connected, the set 
    \begin{equation}
        \Gamma = \{\gamma_1, \gamma_2, \dots, \gamma_{|\mathcal F|}\}
    \end{equation}
    is linearly dependent
    and the set $\Gamma \setminus \{\gamma_1\}$ is linearly independent.
\end{thm}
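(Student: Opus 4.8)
The plan is to work entirely in the $\mathbb{Z}_2$ cycle space of $G$, identifying each boundary loop $\gamma_i$ with its edge-indicator vector in $\mathbb{Z}_2^{|\mathcal E|}$, so that the symmetric difference $\Delta$ becomes vector addition. Linear (in)dependence of the $\gamma_i$ is then ordinary $\mathbb{Z}_2$-linear (in)dependence, and the statement amounts to showing that the face loops have rank exactly $|\mathcal F|-1$, with the single missing dimension killed precisely by dropping $\gamma_1$.

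First I would establish the dependence. The key input, recorded earlier, is that in a cellular embedding every edge $e$ lies on the boundary of exactly two faces, where (following the convention fixed in the $e_9$/$f_5$ discussion) an edge seen by one face on both sides is counted twice and hence contributes $0$ modulo $2$. Summing the indicator vectors of all face boundaries therefore counts each edge exactly twice, giving $\sum_{i=1}^{|\mathcal F|}\gamma_i = 0$ over $\mathbb{Z}_2$; this is exactly the ``face-like'' relation $\sum_{l\in\mathcal F}B_{le}=0$ already noted (e.g. $\sum_{i=1}^{5}B_{l_ie}=0$ in the $K_5$ example). Rearranging gives $\gamma_1 = \gamma_2 \,\Delta\, \cdots \,\Delta\, \gamma_{|\mathcal F|}$, so $\Gamma$ is linearly dependent.

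Next, for the independence of $\Gamma\setminus\{\gamma_1\}$, I would show that the \emph{only} $\mathbb{Z}_2$-relation among the face loops is the global one just found. Suppose $\sum_{i\in S}\gamma_i = 0$ for some $S\subseteq\{1,\dots,|\mathcal F|\}$. Because each edge borders exactly two faces, this vanishing says every edge is incident to an even number of the faces in $S$, i.e. no edge has one of its two bordering faces in $S$ and the other in the complement $S^{\mathrm c}$. In the language of the geometric dual graph $G^{*}$ (one vertex per face, one edge per edge of $G$ joining the two faces it separates), this says $S$ is a union of connected components of $G^{*}$. Since $G$ is connected and the embedding is cellular on a connected surface, $G^{*}$ is connected, so the only such $S$ are $\emptyset$ and the full index set. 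Any nontrivial dependence inside $\{\gamma_2,\dots,\gamma_{|\mathcal F|}\}$ would correspond to a nonempty $S$ omitting the index $1$, which can be neither empty nor the full set, a contradiction. Hence $\Gamma\setminus\{\gamma_1\}$ is linearly independent, and together with the dependence above the rank is exactly $|\mathcal F|-1$.

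I expect the main obstacle to be the independence half, specifically the step asserting connectivity of the dual $G^{*}$; this is where the hypotheses genuinely enter, since one must use that the embedding is cellular (every face a disk) and that both $G$ and the surface are connected, so that any two faces can be joined by repeatedly crossing shared edges. Once $G^{*}$ is known to be connected, translating ``$\sum_{i\in S}\gamma_i=0$'' into ``$S$ is a union of components of $G^{*}$'' is the crux; the remaining arithmetic (the global relation and the all-or-nothing conclusion) is routine. Notably, this argument never invokes Euler's formula and applies verbatim for any genus $g$, which is why the result holds for nonplanar embedded graphs as well.
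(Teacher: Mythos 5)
Your proof is correct, and it splits the same way the paper's does: the dependence half is identical (every edge lies on the boundary of exactly two faces, with the doubly-incident-edge convention contributing $0$ mod $2$, so $\sum_i \gamma_i = 0$ and hence $\gamma_1 = \gamma_2 \Delta \cdots \Delta \gamma_{|\mathcal F|}$). Where you diverge is the independence half. The paper argues by inductive peeling: any face $\gamma_j$ sharing an edge with $\gamma_1$ owns that edge ``privately'' once $\gamma_1$ is removed, so $\gamma_j$ cannot enter any relation; delete it and recurse, with connectedness guaranteeing the peeling exhausts $\Gamma \setminus \{\gamma_1\}$. You instead prove the global statement that \emph{every} $\mathbb{Z}_2$-relation $\sum_{i \in S}\gamma_i = 0$ forces $S$ to be a union of connected components of the geometric dual $G^*$, and then invoke connectedness of $G^*$ to conclude $S \in \{\emptyset, \{1,\dots,|\mathcal F|\}\}$. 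The two arguments rest on the same facts---each edge borders two faces, plus a connectivity-driven propagation---and indeed the paper's induction is essentially a breadth-first traversal of $G^*$ in disguise. What your version buys: it characterizes the full relation space (rank exactly $|\mathcal F| - 1$, with the all-faces sum as the unique relation), it isolates precisely where the hypotheses enter (cellularity and connectedness of the surface give connectedness of $G^*$), and it makes manifest that the result is genus-independent, which matters for the paper's later use on nonplanar embeddings. What it costs: the dual-connectivity step is itself a nontrivial topological fact (connected surface minus finitely many points is connected, and paths can be perturbed to cross edges transversally), whereas the paper's induction keeps the propagation entirely at the combinatorial level. Both are sound; yours is the cleaner and more self-documenting argument, the paper's the terser one.
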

\begin{proof}
    Every edge in $G$ appears on the boundary of precisely two faces, so $\gamma_1 = \gamma_2 \Delta \gamma_3 \Delta \dots \Delta \gamma_{|\mathcal F|}$.  For any $\gamma_j$ obeying $\gamma_1\cap\gamma_j \ne \emptyset$, $\gamma_j$ is linearly independent of the rest, since every edge appears in no other face.  Removing such $\gamma_j$, we can inductively deduce the linear independence of the remaining $\gamma$s, since the graph is connected.
\end{proof}

The following two important, and classic, results, are stated without proof:

\begin{thm}[Euler's formula]\label{thm:euler}
    Let $G = (\mathcal V, \mathcal E)$, and suppose that $G$ has genus $g$. Let $S$ be a compact, orientable discretization of a Riemann surface of genus $g$. For any embedding of $G$ on $S$, we have 
    \begin{equation}
        |\mathcal V| - |\mathcal E| + |\mathcal F| = 2 - 2g.
    \end{equation}
\end{thm}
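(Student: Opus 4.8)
The plan is to show that the alternating sum $\chi := |\mathcal V| - |\mathcal E| + |\mathcal F|$ depends only on the surface $S$, not on the particular graph drawn on it, and then to evaluate it on one convenient embedding per genus. Since the definition of a face already requires every region of $S$ cut out by $G$ to be isomorphic to a disk, the embedded graph endows $S$ with the structure of a CW complex whose $0$--, $1$--, and $2$--cells are respectively the vertices, edges, and faces; the quantity $\chi$ is then precisely its Euler characteristic. I would establish the formula first for the sphere and then bootstrap to higher genus.

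First I would dispatch the planar case $g=0$ by induction on $|\mathcal E|$. Choose a spanning tree $T\subset G$; being connected and acyclic on $|\mathcal V|$ vertices, $T$ has $|\mathcal V|-1$ edges, and when embedded on the sphere its complement is a single disk, so $\chi = |\mathcal V| - (|\mathcal V|-1) + 1 = 2$. I would then add the remaining $|\mathcal E|-|\mathcal V|+1$ edges back one at a time. Each such edge joins two vertices already connected through $T$, hence closes a cycle and splits exactly one existing face into two; this increments both $|\mathcal E|$ and $|\mathcal F|$ by one and leaves $\chi$ unchanged. Therefore $\chi = 2 = 2 - 2\cdot 0$ for any connected planar graph.

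For general $g$ I would argue by induction on the genus. Given a cellular embedding on a genus-$g$ surface with $g\ge 1$, one can find a cycle $\gamma\subset G$ of length $n$ that is non-contractible and non-separating, i.e. runs around a handle (its existence uses that the cycle space of a cellularly embedded graph surjects onto $H_1(S)$, so some graph cycle carries a non-trivial, non-separating homology class). Cutting $S$ open along $\gamma$ doubles the $n$ vertices and $n$ edges of $\gamma$ and produces two boundary circles; capping each with a disk yields a closed surface of genus $g-1$ together with an induced cellular embedding. Tallying the bookkeeping, $|\mathcal V|\to|\mathcal V|+n$, $|\mathcal E|\to|\mathcal E|+n$, and $|\mathcal F|\to|\mathcal F|+2$, so $\chi$ increases by exactly $+2$, matching $2-2(g-1) = (2-2g)+2$; combined with the planar base case this gives $\chi = 2-2g$. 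Alternatively, and this is the route I would actually prefer, I would simply invoke the topological invariance of the Euler characteristic of a CW complex under homeomorphism, reducing the entire statement to one computation on the standard model of the genus-$g$ surface: the $4g$-gon with sides identified in the pattern $a_1 b_1 a_1^{-1} b_1^{-1}\cdots a_g b_g a_g^{-1} b_g^{-1}$, which has one vertex, $2g$ edges, and one face, hence $\chi = 1 - 2g + 1 = 2-2g$.

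The main obstacle is the genus-$g$ step, not the planar base case. Making the handle-cutting argument fully rigorous requires controlling how cutting along $\gamma$ interacts with the face structure, namely verifying that the cut-and-capped embedding remains cellular and that the vertex, edge, and face counts change by precisely the amounts claimed, and it requires first establishing the existence of a non-separating \emph{graph} cycle. This is exactly the delicate point where it is cleanest to defer to the topological invariance of $\chi$, which is why the result is quoted here as a classical fact rather than proved from scratch.
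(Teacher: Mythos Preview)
Your sketch is sound, but note that the paper does not actually prove this theorem: immediately before the statement the authors write that the result is ``stated without proof,'' treating Euler's formula as a classical fact to be quoted rather than derived. So there is no proof in the paper to compare your argument against.

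That said, your outline is the standard one and is correct in structure. The spanning-tree base case on the sphere followed by adding back the non-tree edges is the usual inductive proof for $g=0$, and your two alternatives for higher genus---handle surgery along a non-separating cycle, or invoking homeomorphism invariance of the CW Euler characteristic and computing on the $4g$-gon model---are both legitimate. You are also right that the handle-cutting route is the one with the most bookkeeping to justify (cellularity after cutting, existence of a non-separating graph cycle), and that this is precisely why authors in this area typically cite the result rather than reprove it. In short: your proposal supplies more than the paper does, and what you supply is correct.
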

\begin{cor}\label{cor:independence}
    Let $G = (\mathcal V, \mathcal E)$ be a graph and let $\Gamma$ be a linearly independent set of loops in $G$. 
    Then 
    \begin{equation}
        |\Gamma| \leq g_G = |\mathcal E| - |\mathcal V| + 1.
    \end{equation}
\end{cor}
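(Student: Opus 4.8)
The plan is to realize the space of loops as the kernel of a boundary operator and bound $|\Gamma|$ by its dimension. First I would identify each loop $\gamma \subset \mathcal E$ with its indicator vector $\mathbf{1}_\gamma \in \mathbb{Z}_2^{\mathcal E}$, whose $e$-th entry equals $1$ precisely when $e \in \gamma$. The symmetric difference $\Delta$ is exactly addition in the vector space $\mathbb{Z}_2^{\mathcal E}$, i.e. $\mathbf{1}_{A \Delta B} = \mathbf{1}_A + \mathbf{1}_B$. Consequently the notion of linear (in)dependence of loops coincides with ordinary linear (in)dependence of the vectors $\{\mathbf{1}_\gamma : \gamma \in \Gamma\}$ over the field $\mathbb{Z}_2$. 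So it suffices to exhibit a subspace of $\mathbb{Z}_2^{\mathcal E}$ of dimension $g_G = |\mathcal E| - |\mathcal V| + 1$ containing all of these vectors.

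Next I would introduce the (unsigned) boundary map $\partial : \mathbb{Z}_2^{\mathcal E} \to \mathbb{Z}_2^{\mathcal V}$ defined on basis vectors by $\partial |e\rangle = |u\rangle + |v\rangle$ for $e = (u,v)$, i.e. the reduction modulo $2$ of the incidence matrix $A$. By Definition \ref{defn:loop}, a loop traverses a cyclic sequence of distinct vertices, so every vertex is incident to an even number of edges of $\gamma$; hence $\partial \mathbf{1}_\gamma = 0$, and every loop lies in $\ker \partial =: Z$, the cycle space. Thus $\Gamma \subset Z$, and since $\Gamma$ is linearly independent we obtain $|\Gamma| \le \dim Z$.

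It then remains to compute $\dim Z$, which is the crux. The image of $\partial$ is spanned by the vectors $|u\rangle + |v\rangle$, each of even weight, so $\operatorname{im}\partial$ is contained in the even-weight subspace $\{x : \sum_v x_v = 0\}$, of dimension $|\mathcal V| - 1$. Conversely, because $G$ is connected, every even-weight vector is a sum of such generators (pair up its nonzero entries and push the difference along the edges of a path joining each pair), so the image is exactly the even-weight subspace and $\operatorname{rank}\partial = |\mathcal V| - 1$. By rank--nullity, $\dim Z = |\mathcal E| - (|\mathcal V| - 1) = g_G$, whence $|\Gamma| \le g_G$.

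I expect the surjectivity-onto-even-weight step, equivalently pinning down $\operatorname{rank}\partial = |\mathcal V| - 1$, to be the only nontrivial point, and it is precisely where connectivity of $G$ enters. A clean alternative for that step is to fix a spanning tree $T$ (which has $|\mathcal V| - 1$ edges since $G$ is connected) and argue, by stripping leaves, that the images of the tree edges are linearly independent; this gives $\operatorname{rank}\partial \ge |\mathcal V| - 1$, matching the upper bound above. Either route yields the stated inequality, with the spanning-tree version additionally making transparent that $g_G$ is attained, since the fundamental cycles of the chords of $T$ form a basis of $Z$.
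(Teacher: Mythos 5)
Your proof is correct, and it is worth noting that the paper itself never proves Corollary~\ref{cor:independence}: it is presented, immediately after Euler's formula, as one of the ``important, and classic, results\ldots stated without proof'' (it is the classic statement that the cycle space of a connected graph has dimension equal to the circuit rank $|\mathcal E|-|\mathcal V|+1$). Your argument therefore supplies a proof the paper omits, and it is the standard one. The key translation step is sound: since loops are nonempty subsets of $\mathcal E$, their indicator vectors are nonzero, so the paper's symmetric-difference notion of dependence coincides exactly with linear dependence of $\{\mathbf 1_\gamma\}$ over $\mathbb{Z}_2$. Every loop lies in $\ker\partial$ because each vertex is incident to an even number of edges of the loop (no need for the vertices to be distinct), and connectivity --- which the paper assumes globally throughout the appendix --- is used exactly where you say, to show $\operatorname{rank}\partial = |\mathcal V|-1$, after which rank--nullity gives $\dim\ker\partial = |\mathcal E|-|\mathcal V|+1$ and hence the bound. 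Your spanning-tree variant buys a little more than the statement requires: the fundamental cycles of the chords show the bound is attained, which is precisely the claim made (also without proof) in the main text that a connected circuit has $|\mathcal E|-|\mathcal V|+1$ linearly independent loops.
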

In effect, Cor.~\ref{cor:independence} serves to indicate that, for nonplanar graphs, it is possible to construct a linearly independent set of loops that is greater in size than the set of faces of a graph. 
As we will see, this fact is crucial to our discussion of circuit dualities for nonplanar graphs.

\begin{thm}
    Let $G = (\mathcal V, \mathcal E, \mathcal F)$ be an embedded, nonplanar graph with genus $g>0$.
    There exist $2g$ loops that are independent of all of the loops bounding faces $f$ in $\mathcal F$. 
\end{thm}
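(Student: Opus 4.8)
The plan is to pass to the cycle space $Z(G)$ of $G$ over $\mathbb{Z}_2$, whose vectors are loops (more precisely, even subgraphs) with the symmetric difference $\Delta$ as addition, and to show that the face loops span a subspace of codimension exactly $2g$; any $2g$ loops completing a basis of this subspace to a basis of $Z(G)$ are then the desired loops.

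First I would pin down $\dim Z(G)$. Corollary \ref{cor:independence} already gives the upper bound $\dim Z(G) \le g_G = |\mathcal E| - |\mathcal V| + 1$. For the matching lower bound I would fix a spanning tree $T$ of $G$ (which exists since $G$ is connected, and has $|\mathcal V|-1$ edges); each of the remaining $g_G$ chords $e \notin T$ determines a unique fundamental loop $\gamma_e$ consisting of $e$ together with the tree path between its endpoints. These $g_G$ fundamental loops are linearly independent (each $\gamma_e$ contains the chord $e$, which lies in no other $\gamma_{e'}$) and span $Z(G)$, so $\dim Z(G) = g_G$ and, crucially, $Z(G)$ admits a basis consisting of genuine single loops.

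Next I would compute the rank of the face loops $\Gamma = \{\gamma_1, \ldots, \gamma_{|\mathcal F|}\}$. Theorem \ref{thm:facefull} shows that $\Gamma \setminus \{\gamma_1\}$ is linearly independent while $\Gamma$ itself is dependent, so $\operatorname{rank}\Gamma = |\mathcal F| - 1$. Euler's formula (Theorem \ref{thm:euler}) gives $|\mathcal F| = |\mathcal E| - |\mathcal V| + 2 - 2g = g_G + 1 - 2g$, whence $\operatorname{rank}\Gamma = g_G - 2g$. Thus the span $W := \operatorname{span}_{\mathbb{Z}_2}(\Gamma)$ has codimension $2g$ in $Z(G)$.

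Finally, starting from the spanning-tree basis of single loops constructed above together with a basis of $W$, a standard Steinitz exchange argument selects $2g$ of the fundamental loops that, adjoined to a basis of $W$, form a basis of $Z(G)$; by construction these $2g$ loops are linearly independent of every element of $W$, hence independent of all face-bounding loops, and each is a genuine loop. The main obstacle is the pair of bookkeeping points that Corollary \ref{cor:independence} supplies only the inequality $\dim Z(G) \le g_G$, so the spanning-tree construction is needed both to obtain the reverse inequality and to guarantee that the $2g$ completing vectors can be taken to be single loops rather than arbitrary even subgraphs; conceptually, these $2g$ loops are representatives of a basis of the surface homology $H_1(S;\mathbb{Z}_2) \cong Z(G)/W$, which has dimension $2g$.
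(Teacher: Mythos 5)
Your proof is correct, and its skeleton is the same dimension count the paper itself uses: Euler's formula (Theorem~\ref{thm:euler}) gives $|\mathcal F| = g_G + 1 - 2g$, Theorem~\ref{thm:facefull} gives rank $|\mathcal F| - 1$ for the face loops, and the difference $g_G - (|\mathcal F| - 1) = 2g$ is the number of additional loops. The genuine value you add lies in the two points you flag at the end. First, the paper closes its proof by asserting that Corollary~\ref{cor:independence} ``tells us that it is possible to construct a set of loops of rank $g_G$,'' but that corollary as stated only gives the upper bound $|\Gamma| \leq g_G$; your spanning-tree/fundamental-cycle construction is precisely what is needed to establish attainability, $\dim Z(G) = g_G$, so you have repaired a real miscitation rather than merely expanded a terse step. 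Second, your Steinitz-exchange argument guarantees that the $2g$ completing vectors can be chosen to be honest single loops (fundamental cycles) rather than arbitrary even subgraphs --- a point on which the paper's proof is silent, yet one that matters because the Loop set definition that consumes this theorem selects $2g$ \emph{loops}, not $2g$ arbitrary elements of the cycle space. In short: same route as the paper, but your version supplies the lower bound and the loop-representative argument that turn the dimension count into a complete proof.
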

\begin{proof}
    By Theorem \ref{thm:euler}, 
    \begin{equation}
        |\mathcal F| + 2 g = g_G + 1.
    \end{equation}
    As we have seen in Theorem \ref{thm:facefull}, the set of loops bounding faces in $\mathcal F$ is $|\mathcal F| - 1$, and Corollary~\ref{cor:independence} tells us that it is possible to construct a set of loops of rank $g_G$.  
\end{proof}

\begin{defn}[Loop set]
    Let $G = (\mathcal V, \mathcal E, \mathcal F) $ be an embedded graph of genus $g$. Choose $2g$ loops independent of all loops at the boundary of some face in $\mathcal F$, say $l_1, \dots , l_{2 g}$. Denote the loop at the boundary of $f_i$ in $\mathcal F$ as 
    $l_{2g + i}$. 
    The set 
    \begin{equation}
        \mathcal L = \{l_1,l_2, \dots,l_{2g+|\mathcal F|}  \}
    \end{equation}
    is called the \textbf{loop set} of $G$. 
\end{defn}

\begin{defn}[Extended embedded graph]
    Let $G = (\mathcal V, \mathcal E, \mathcal F)$ be an embedded graph, and let $\mathcal L$ be the loop set of $G$.
    The object $G' = (\mathcal V, \mathcal E, \mathcal L)$ is called an \textbf{extended embedded graph}.
\end{defn}

\begin{defn}[Planar graph]
A graph $G$ is \textbf{planar} if it has genus $g=0$.
\end{defn}
We remark that, for planar graphs, $\mathcal F = \mathcal L$.

Planar graph theory is well--studied. In the circuit literature, many formal results are limited to planar graphs in consideration because such circuits are both simpler to draw and analyze, let alone build in experiment. Note that according to Definition \ref{defn:embedding}, a plane is not a suitable surface on which to embed a graph because the ``external face" cannot be isomorphic to a unit disk. This problem is resolved by considering the embedding of a planar graph on a sphere or equivalently by identifying the points at infinity to be equivalent. Of course in practice, we will draw planar graphs in the plane, since the plane is equivalent to the sphere if we identify all points at spatial infinity with the same point.
So henceforth, we will consider planar graphs to be embedded on the unit sphere.

Nonplanar graphs have genus $g>0$, by definition. 
They do have much in common with planar graphs, once we find the right perspective.  To talk about graph duality, it is necessary to specify surfaces on which we consider embedding the graph. Generally, for a graph of genus $g$, we elect 
to embed upon a ``sphere with $g$ handles", or equivalently a  ``torus with $g$ holes". 

\begin{defn}[Chains]
    Let $X$ be a finite set. For every $x \in X$, define a real vector $|x\rangle$ and define 
    \begin{equation}
        \mathcal D(X) = \mathrm{span}\left(\{|x\rangle: x \in X\}\right).
    \end{equation}
    We say that $\mathcal D(X)$ is the \textbf{set of chains} over $X$, and we say that $|x\rangle$ is a chain.
\end{defn}
\begin{defn}[Incidence matrix]
    Let $G = (\mathcal V, \mathcal E)$ be a directed graph. Define the linear map $A: \mathcal D(\mathcal V) \rightarrow \mathcal D(\mathcal E)$ so that
    \begin{equation}
        \langle e| A|v\rangle = \begin{cases}
            1 & e \text{ is incident upon } v \\ 
            -1 & e \text{ leaves } v\\
            0 & \text{ otherwise }
        \end{cases}.
    \end{equation}
    We say that $A$ is the \textbf{incidence matrix } of $G$ and we often write $\langle e| A|v\rangle = A_{ev}$.
\end{defn}
\begin{defn}[Orientation matrix]
    Let $G = (\mathcal V, \mathcal E, \mathcal L)$ be an extended embedded graph. Orient all faces of $G$ alike\footnote{Our convention is that loops bounding faces ought to be oriented such that the right hand rule points out of the surface.}. Choose some orientation for the $2g$ loops in $\mathcal L$ that do not bound a face of $G$. 
    Define the linear map $B: \mathcal D(\mathcal E) \rightarrow \mathcal D(\mathcal L)$  so that 
    \begin{equation}
        \langle l | B|e\rangle = \begin{cases}
           1 & e \text{ borders } l  \text{ and } e \text{ is oriented with } l \\
           -1 & e \text{ borders } l  \text{ and } e \text{ is oriented against } l \\ 
           0 & \text{ otherwise}
        \end{cases}.
    \end{equation}
    We say that $B$ is the \textbf{orientation matrix} of $G$.  
\end{defn}
An abstract graph $G$ has a unique incidence matrix $A$, but not necessarily a unique orientation matrix $B$. On the other hand, an extended embedded graph on a surface $S$, $G_S$ has both a unique $A$ and a unique $B$. 
\begin{thm}\label{thm:rank}
  Let $G$ be an extended embedded graph with incidence matrix $A$ and orientation matrix $B$. The rank of $A$ is $|\mathcal V| - 1$ and the rank of $B$ is  $|\mathcal L| - 1$.
\end{thm}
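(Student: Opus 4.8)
The plan is to prove the two rank statements separately: handle $A$ by a direct kernel computation using connectivity, and handle $B$ by a dimension count anchored to the relation $BA=0$ recorded in \eqref{eqn:shortexactmain}.

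For $A$, I would first observe that each edge meets exactly two vertices with opposite signs, so $\sum_v A_{ev}=0$ for every $e$; hence the all-ones vector $|\mathbf 1\rangle=\sum_v|v\rangle$ lies in $\ker A$, giving $\operatorname{rank}A\le|\mathcal V|-1$. For the matching lower bound I would show $\ker A$ is one-dimensional: if $A|x\rangle=0$ then for each edge $e=(u,v)$ the component $(A|x\rangle)_e=x_v-x_u$ vanishes, so $x_u=x_v$ across every edge, and connectivity of $G$ forces $|x\rangle$ to be constant, i.e. proportional to $|\mathbf 1\rangle$. Thus $\ker A=\operatorname{span}|\mathbf 1\rangle$ and $\operatorname{rank}A=|\mathcal V|-1$.

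For $B$, the first step is bookkeeping: combining Euler's formula (Theorem \ref{thm:euler}) with $|\mathcal L|=|\mathcal F|+2g$ gives $|\mathcal L|-1=|\mathcal E|-|\mathcal V|+1=g_G$, so the goal becomes $\operatorname{rank}B=g_G$. For the upper bound I would read \eqref{eqn:shortexactmain} as the matrix identity $BA=0$, which says every row $\langle l|B$ lies in $\ker A^{\mathrm T}$, the space of edge-vectors annihilated by $A^{\mathrm T}$ (the ``cycle space''), of dimension $|\mathcal E|-\operatorname{rank}A=g_G$ by the first part. Hence the row space of $B$ sits inside a $g_G$-dimensional space and $\operatorname{rank}B\le g_G$.

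The lower bound is where the real work lies, and I expect it to be the main obstacle. By the loop-set construction, together with Theorem \ref{thm:facefull} and the theorem producing $2g$ loops independent of all face-bounding loops, the set $\mathcal L$ contains a subcollection of $(|\mathcal F|-1)+2g=g_G$ loops that are independent as elements of the $\mathbb Z_2$ cycle space. The delicate point is to lift this combinatorial ($\mathbb Z_2$) independence to honest $\mathbb R$-linear independence of the corresponding signed rows of $B$, since a priori the orientation signs could have produced spurious real cancellations. I would argue by contradiction: the rows are integer vectors, so any real dependence among them rescales to an integer relation whose coefficients have greatest common divisor $1$; reducing that relation modulo $2$ (the entries $\pm1$ collapse to the $\mathbb Z_2$ edge-indicators of the loops) yields a nontrivial $\mathbb Z_2$ dependence among the same loops, contradicting their $\mathbb Z_2$-independence. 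This furnishes $g_G$ independent rows, so $\operatorname{rank}B\ge g_G$, and with the upper bound $\operatorname{rank}B=g_G=|\mathcal L|-1$. The single remaining dependence is then exactly the face relation $\sum_{l\in\mathcal F}B_{le}=0$, as forced by $|\mathcal L|=g_G+1$.
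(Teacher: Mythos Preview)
Your treatment of $A$ is exactly the paper's: compute $\ker A$ directly using connectivity. For $B$, however, you take a genuinely different route. The paper argues for the rank of $B$ by computing its \emph{left} kernel directly, claiming the analogue of the $A$ argument goes through: each edge borders exactly two oriented-alike faces, so a left null vector $\sum_l c_l\langle l|$ must be constant on $\mathcal F$ and (by the remark pointing to Theorem~\ref{thm:facefull}) vanish on the $2g$ topological loops, giving a one-dimensional cokernel and hence $\operatorname{rank}B=|\mathcal L|-1$. You instead sandwich the rank: the upper bound $\operatorname{rank}B\le g_G$ comes from $BA=0$ and the already-established rank of $A$; the lower bound comes from exhibiting $g_G$ loops that are $\mathbb Z_2$-independent and then lifting that independence to $\mathbb R$ via the gcd/mod-$2$ trick on the integer rows of $B$.

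Both arguments are valid. The paper's is shorter and more elementary, exploiting directly that the face rows of $B$ behave like the columns of $A$ (each edge hits two faces with opposite signs); its cost is that the ``exactly analogous'' step for the topological-loop rows is left implicit. Your argument buys a cleaner logical structure---it makes explicit how the rank of $B$ is forced by Euler's formula and the exactness relation $BA=0$---at the price of invoking more machinery (Theorem~\ref{thm:facefull}, the $2g$-loops theorem, and the $\mathbb Z_2\to\mathbb R$ lift). One small caution on your mod-$2$ step: it relies on the row $\langle l|B$ reducing to the $\mathbb Z_2$ edge-indicator of $l$, i.e.\ on $B_{le}\in\{\pm1\}$ precisely when $e\in l$. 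This holds under the paper's formal definition of the orientation matrix, but be aware the main text also discusses face boundaries that traverse an edge twice (where the convention sets $B_{le}=0$); your argument is fine within the appendix's stated framework.
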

\begin{proof}
  Suppose
  \begin{equation}
    \sum_{v} A_{ev} c_v = 0.
  \end{equation}
  with $c_v$ in $\mathbb R^{|\mathcal V|} \setminus \{0\}$.
  Then, there exists at least  $v$ such that $c_v$ is nonzero, and 
  if $A_{ev}$ is nonzero then there exists $u$ so that  $e = (u,v)$ or  $e = (v,u)$. In either case, 
  $A_{ev} = - A_{eu}$ and thus $c_u = c_v$. Continue in this way to discover that $c_v = 1$ for all $v \in \mathcal V$.
  The proof that the only left null vector of  $B$ is given by $c_f = 1$ for all $f \in \mathcal F$ is exactly analogous. We remark that $\mathcal F \neq \mathcal L$ in general; hence the left null vector of $B$ corresponds to a sum over loops bounding faces only (see Theorem \ref{thm:facefull}). 
\end{proof}
\begin{thm}\label{thm:exact}
    If $G$ be an extended embedded graph with incidence matrix $A$ and orientation matrix $B$,
    \begin{equation}
        \sum_{e \in \mathcal E} B_{le} A_{ev} = 0.
    \end{equation}
\end{thm}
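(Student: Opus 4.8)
The plan is to recognize the quantity $\sum_{e\in\mathcal E} B_{le}A_{ev}$ as the $(l,v)$ matrix element of the composition $BA$, and to show it vanishes by a purely local, combinatorial argument carried out one vertex at a time. First I would fix a loop $l$ and a vertex $v$ and observe that the only edges contributing to the sum are those that simultaneously lie on $l$ (so that $B_{le}\neq 0$) and are incident on $v$ (so that $A_{ev}\neq 0$). Since a loop is, by Definition~\ref{defn:loop}, a closed walk $\{(v_0,v_1),(v_1,v_2),\dots,(v_n,v_0)\}$, each time the walk passes through $v$ it does so via exactly one incoming edge and one outgoing edge; vertices not on $l$ contribute nothing at all.

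The key step is a sign check showing that the incoming and outgoing edges contribute with opposite signs. I would dispose of the ambiguity in the intrinsic orientation of each edge relative to the direction in which $l$ traverses it by checking two sub-cases per edge. For an edge entering $v$ along $l$: if the graph-orientation of $e$ agrees with $l$ then $B_{le}=+1$ and $e$ arrives at $v$, so $A_{ev}=+1$; if it disagrees then $B_{le}=-1$ and $e$ leaves $v$, so $A_{ev}=-1$; in both cases $B_{le}A_{ev}=+1$. Symmetrically, the edge leaving $v$ along $l$ gives $B_{le}A_{ev}=-1$ in both sub-cases. Hence each passage of $l$ through $v$ contributes $(+1)+(-1)=0$, and summing over all passages yields $\sum_{e} B_{le}A_{ev}=0$.

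Finally I would address the only genuinely delicate point, which is precisely the feature that makes the nonplanar theory nontrivial: for certain loops (the $2g$ topological loops, and faces on higher-genus surfaces) a single edge $e$ may be traversed twice, whereupon the convention sets $B_{le}=0$. I would verify that this convention is consistent with the cancellation above. When $e$ is traversed once in each direction it appears, at each of its two endpoints, once as an outgoing edge (contributing $-1$) and once as an incoming edge (contributing $+1$), so its net contribution to the sum is already zero; declaring $B_{le}=0$ therefore removes exactly this vanishing pair and leaves the balance among the remaining edges undisturbed.

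I expect the main obstacle to be bookkeeping rather than anything conceptual: one must hold the convention for $B$, the convention for $A$, and the loop-traversal direction mutually consistent throughout the sign analysis, and one must confirm that the doubled-edge case does not spoil the local cancellation. For planar graphs this result is simply the statement that $A$ and $B$ are the discrete coboundary maps $d^0$ and $d^1$ with $d^1 d^0=0$; the combinatorial argument above is the concrete incarnation of that identity which continues to hold once the topological loops are adjoined to form the extended embedded graph.
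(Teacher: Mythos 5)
Your proposal is correct and is essentially the paper's own proof: the paper likewise fixes a loop $l$ and a vertex $v$, pairs the two edges of each passage of $l$ through $v$, and shows by the same four-case orientation analysis that $B_{le}A_{ev}+B_{le'}A_{e'v}=0$. The only differences are cosmetic — you organize the sign check as incoming ($+1$) versus outgoing ($-1$) rather than by the four edge-orientation cases, and you explicitly verify consistency of the doubled-edge convention $B_{le}=0$, a point the paper's proof leaves implicit (its footnote addresses only vertices hit multiple times).
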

\begin{proof}
  This fact is an elementary result in the theory of CW--complices\footnote{Explicitly, this is a discrete version of the statement that exterior derivatives are nilpotent.}. Nonetheless, we provide an explicit demonstration. 
  Choose a loop $l$ and a vertex $v$. If $v$ is not connected to any edges in $l$, the result is trivial so suppose that there exist edges $e$ and $e'$ in\footnote{Each vertex along a loop must be hit an even number of times, or the loop would not be closed.  If a vertex is hit $2m$ times, then there will always exist $m$ pairs of $e$ and $e^\prime$ for which this argument holds.} $l$ and vertices $u_1$ and $u_2$ so that one of the following possibilities holds:
  \begin{enumerate}
      \item $e = (u_1, v)$ and $e' = (v,u_2)$ 
      \item $e = (v, u_1)$ and $e' = (v,u_2)$ 
      \item $e = (u_1, v)$ and $e' = (u_2,v)$
      \item $e = (v,u_1)$ and $e' = (u_2,v)$.
  \end{enumerate}
  Now, in cases 1. and 4., $B_{le} = B_{le'}$ and $A_{ev} = - A_{e'v}$. In cases 2. and 3., the opposite is true. In any case,
  \begin{equation}
      B_{le}A_{ev} + B_{le'}A_{e'v} = 0
  \end{equation}
  This concludes the proof. 
\end{proof}
\begin{thm}\label{thm:planargood}
    Let $G$ be an extended embedded graph with incidence matrix $A$ and orientation matrix $B$. 
    Then,
    \begin{equation}\label{eq:kerbima}
      \mathrm{Ker}(B) = \mathrm{Im}(A).
    \end{equation}
\end{thm}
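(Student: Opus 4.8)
The plan is to prove the two inclusions separately, obtaining one by a direct computation and the other by matching dimensions. The inclusion $\mathrm{Im}(A) \subseteq \mathrm{Ker}(B)$ is immediate from Theorem~\ref{thm:exact}: for any vertex chain $|v\rangle$, the edge chain $A|v\rangle = \sum_e A_{ev}|e\rangle$ satisfies $BA|v\rangle = \sum_{l}\left(\sum_e B_{le}A_{ev}\right)|l\rangle = 0$, so every basis element $A|v\rangle$ of $\mathrm{Im}(A)$ lies in $\mathrm{Ker}(B)$, and by linearity this extends to all of $\mathrm{Im}(A)$.

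For the reverse inclusion I would avoid constructing an explicit preimage and instead show $\dim \mathrm{Ker}(B) = \dim \mathrm{Im}(A)$, so that the inclusion just established forces equality. Theorem~\ref{thm:rank} gives $\dim \mathrm{Im}(A) = \mathrm{rank}(A) = |\mathcal V| - 1$ together with $\mathrm{rank}(B) = |\mathcal L| - 1$. Applying the rank--nullity theorem to $B: \mathcal D(\mathcal E) \to \mathcal D(\mathcal L)$,
\begin{equation}
  \dim \mathrm{Ker}(B) = |\mathcal E| - \mathrm{rank}(B) = |\mathcal E| - |\mathcal L| + 1.
\end{equation}
It then remains only to evaluate $|\mathcal L|$: by the definition of the loop set $|\mathcal L| = |\mathcal F| + 2g$, while Euler's formula (Theorem~\ref{thm:euler}) gives $|\mathcal F| = 2 - 2g - |\mathcal V| + |\mathcal E|$, and combining these yields $|\mathcal L| = |\mathcal E| - |\mathcal V| + 2$. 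Substituting this back gives $\dim \mathrm{Ker}(B) = |\mathcal V| - 1 = \dim \mathrm{Im}(A)$, whence $\mathrm{Ker}(B) = \mathrm{Im}(A)$.

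I expect the only real subtlety to be the bookkeeping around the rank of $B$: its left null space is spanned by the sum over \emph{face} loops rather than over all of $\mathcal L$ (cf. Theorem~\ref{thm:facefull} and the remark inside Theorem~\ref{thm:rank}), so one must be careful to use $\mathrm{rank}(B) = |\mathcal L| - 1$ exactly as stated and not mistakenly subtract off the $2g$ topological loops. Everything else is a routine assembly of the two rank statements, rank--nullity, and Euler's formula, all of which are already available. Notably, no explicit basis for $\mathrm{Ker}(B)$ is required, which is precisely what makes the dimension-counting route cleaner than attempting to lift a given kernel element of $B$ back through $A$ by hand.
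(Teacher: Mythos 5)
Your proposal is correct and is essentially the paper's own proof: the inclusion $\mathrm{Im}(A)\subseteq\mathrm{Ker}(B)$ from Theorem \ref{thm:exact}, followed by a dimension count using Theorem \ref{thm:rank}, rank--nullity, and the Euler-type relation $|\mathcal V|-|\mathcal E|+|\mathcal L|=2$. The only cosmetic difference is that you derive that relation from $|\mathcal L|=|\mathcal F|+2g$ and Euler's formula, whereas the paper asserts it directly for extended embedded graphs.
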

\begin{proof}
  From Theorem \ref{thm:exact}, it is clear that 
  \begin{equation}
    \text{Im}(A) \subseteq \text{Ker}(B).
  \end{equation}
  For the other direction, note that 
  \begin{subequations}\begin{align}
    \text{Rank}(A) &= |\mathcal V| - 1, \\
     \text{Rank}(B)  &= |\mathcal L| - 1.
  \end{align}\end{subequations}
  Since extended embedded graphs satisfy
  \begin{equation}\label{eqn:eiler}
    |\mathcal V| - |\mathcal E| + |\mathcal L| = 2,
  \end{equation}
  we see that \begin{equation}
      \dim (\mathrm{Ker}(B)) = |\mathcal{E}| - \text{Rank}(B) = |\mathcal{E}|-|\mathcal{L}|+1 = |\mathcal{V}|-1 = \text{Rank}(A) = \dim(\mathrm{Im}(A)). \label{eq:kerbima2}
  \end{equation}
  It follows from (\ref{eq:kerbima}) and (\ref{eq:kerbima2}) that  $\text{Ker}(B)$ is spanned by elements of  $\text{Im}(A)$. 
\end{proof}
Another way to state Theorem \ref{thm:planargood} is that for any vector $\gamma$ in $\mathbb R^{|\mathcal E|}$ satisfying
\begin{equation}
  \sum_{e} B_{le} \gamma_e = 0,
\end{equation}
there exists some other vector $\delta$ in  $\mathbb R^{|\mathcal V|}$ such that 
 \begin{equation}
   \gamma_e = \sum_{v} A_{ev} \delta_v.
\end{equation}
Likewise any left null vector $\gamma$ of $A$ is of the form $\gamma = \delta^T B$.

Informally, $A$ maps vertices (or integer linear combinations of vertices) to edge--cuts of a graph $G$, and such edge--cuts are precisely the null vectors of $B$.
More precisely, if $(\mathcal V_1, \mathcal V_2)$ is a cut of $G$, then 
$\langle e| A \sum_{v \in \mathcal V_1} |v\rangle$ is nonzero if and only if $e$ is in the edge--cut induced by $(\mathcal V_1, \mathcal V_2)$. The sign may either be positive or negative and is determined by the relative orientation of edges leaving or entering $\mathcal V_1$. 
Likewise, the matrix $B$ maps cycles of $G$ to linear combinations of faces of $G$. 
As a remark, we note that, by cutting every edge in a graph, $\mathcal V$ is partitioned into singlet sets, and since the graphs of interest are also circuits, there is some sense in which the cycle consisting of all edges \emph{is}  indeed a loop. 
However, the vector $\sum_{e \in \mathcal E} |e\rangle$ is not in the range of $A$.

\section{Formal approach to symmetric quantization}\label{app:symmquant}
In this appendix, we discuss more formally the flux-charge symmetric theory of circuit quantization.
\subsection{Properties of circuits and the connection matrix}
In this section, we describe the full formalism for symmetric quantization on arbitrary graphs. In what follows, we consider a circuit on a planar graph with a fixed  but arbitrary embedding. In analogy to graphs as combinatorial objects, we regard circuits as graphs with ``colored" edges. That is to say that we demand that edges contain precisely a single circuit element and that circuit elements are either inductive or capacitive.

\begin{defn}[Circuit]
    Let $G = (\mathcal V, \mathcal E, \mathcal L)$ be an extended embedded graph with incidence matrix $A$ and orientation matrix $B$. 
    Partition the set $\mathcal E$ into the sets $\mathcal C$ and $\mathcal I$ such that edges housing inductors (capacitors) go into the set $\mathcal I$  ($\mathcal C$). 
    The tuple $(\mathcal V, \mathcal C, \mathcal I, \mathcal L, A , B)$ is called a \textbf{circuit}.
\end{defn}
\begin{defn}[Connection matrix]
  \introem
  Define the matrix $M: \mathcal D (\mathcal V) \rightarrow \mathcal D(\mathcal L)$  so that 
  \begin{equation}
    M_{lv}:=\langle l | M | v\rangle = \frac{1}{2}\sum_{e \in \mathcal C} B_{le}A_{ev} - \frac{1}{2}\sum_{e \in \mathcal I} B_{le}A_{ev}.
  \end{equation}
  We say that $M$ is the \textbf{connection matrix} of $G$. 
\end{defn}
Unlike $A$ and $B$, $M$ has no intuitive interpretation which can be easily read off of a circuit (at least that we have found).  Still, in practice, it is straightforward to simply calculate it.
Since $\mathcal C \cup \mathcal I = \mathcal E$, it follows that 
\begin{equation}\label{eqn:Msame}
   M = \sum_{e \in \mathcal C} B_{le}A_{ev} = - \sum_{e \in \mathcal I} B_{le}A_{ev}.
\end{equation}
It will often useful to rewrite $M$ using (\ref{eqn:Msame}).
Furthermore, for planar circuits, 
\begin{equation}
  \sum_{l \in \mathcal L} M_{lv} = \sum_{v \in \mathcal V} M_{lv} = 0
\end{equation}
since $\sum_{v} A_{ev} = 0$ and  $\sum_{l} B_{le} = 0$. 
For nonplanar circuits, the set of faces $\mathcal{F} \subset \mathcal L$ has the property that
\begin{equation}
    \sum_{l \in \mathcal{F}} B_{le} = 0.
\end{equation}

Our first goal is to prove the following result: 
\begin{cor}
    \introem 
    We say that a loop is homogenous if the edges are all capacitors or all inductors. Similarly a cut is called homogenous if its induced edge--cut consists only of inductors or only of capacitors.
    The following conditions hold.
    \begin{itemize}
        \item $M|\varphi \rangle = 0$ if and only if $(\mathcal V_1, \mathcal V_2)$ is a homogeneous cut of $G$ and $|\varphi \rangle = \sum_{v\in \mathcal V_1} |v\rangle$ 
        \item $\langle \psi | M = 0$ if and only if $\gamma$ is a homogeneous loop and $\langle \psi | = \sum_{e \in \gamma} \langle e|$.
    \end{itemize}
\end{cor}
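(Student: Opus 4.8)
The plan is to characterize both kernels of $M$ by pushing its defining sum through the exactness relation of Theorem~\ref{thm:exact} and then invoking Theorem~\ref{thm:planargood}. The crucial input is the asymmetry-free rewriting (\ref{eqn:Msame}), $M_{lv}=\sum_{e\in\mathcal C}B_{le}A_{ev}=-\sum_{e\in\mathcal I}B_{le}A_{ev}$. For the right kernel I would set $\gamma_e=\sum_v A_{ev}\varphi_v$, so that $\gamma=A|\varphi\rangle$ is the vector of flux differences across edges. Then $M|\varphi\rangle=0$ reads $\sum_{e\in\mathcal C}B_{le}\gamma_e=0$ for every $l$; since $\gamma\in\mathrm{Im}(A)=\mathrm{Ker}(B)$ the full sum $\sum_e B_{le}\gamma_e$ already vanishes, so this is equivalent to the inductive condition $\sum_{e\in\mathcal I}B_{le}\gamma_e=0$ as well.

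For the forward (``if'') direction I would take $|\varphi\rangle=\sum_{v\in\mathcal V_1}|v\rangle$ for a homogeneous cut $(\mathcal V_1,\mathcal V_2)$. As recorded just below Theorem~\ref{thm:planargood}, $\gamma_e$ is nonzero precisely on the induced edge-cut, which by homogeneity lies entirely in $\mathcal C$ or entirely in $\mathcal I$. If the edge-cut is capacitive I evaluate $M|\varphi\rangle$ through the inductive form $-\sum_{e\in\mathcal I}B_{le}A_{ev}$, on which $\gamma$ vanishes identically; if it is inductive I use the capacitive form. Either way $M|\varphi\rangle=0$.

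The substance is the converse, and this is where I expect the real work. Suppose $M|\varphi\rangle=0$ and let $\gamma^{\mathcal C}$ be $\gamma$ restricted to capacitive edges (zero on inductive ones); the hypothesis says exactly that $B\gamma^{\mathcal C}=0$, so Theorem~\ref{thm:planargood} yields $\gamma^{\mathcal C}=A|\psi\rangle$ for some $|\psi\rangle\in\mathcal D(\mathcal V)$. Comparing components: on every inductive edge $\sum_v A_{ev}\psi_v=\gamma^{\mathcal C}_e=0$, so $\psi$ is constant on each connected component of the inductive subgraph $G_{\mathcal I}$; on every capacitive edge $\sum_v A_{ev}(\varphi_v-\psi_v)=\gamma_e-\gamma^{\mathcal C}_e=0$, so $\varphi-\psi$ is constant on each connected component of the capacitive subgraph $G_{\mathcal C}$. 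Now a vector constant on the components of $G_{\mathcal I}$ is a linear combination of the indicators $\sum_{v\in\mathcal V_1}|v\rangle$ with $\mathcal V_1$ a component of $G_{\mathcal I}$, each of which induces a purely capacitive edge-cut; symmetrically for $\varphi-\psi$. Hence every right null vector is a combination of homogeneous-cut indicators, which is the content of the first bullet (the biconditional holding verbatim for $0/1$-valued $|\varphi\rangle$). The main obstacle is organizing this restricted-kernel step so that Theorem~\ref{thm:planargood} applies to $\gamma^{\mathcal C}$ rather than to $\gamma$, and translating ``constant on components'' back into cuts.

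The second bullet I would prove by the mirror-image argument forced by the symmetry in (\ref{eqn:Msame}). Writing $\beta_e=\sum_l\psi_l B_{le}$, so that $\beta$ is the edge-charge vector of the loop combination $\langle\psi|$, the condition $\langle\psi|M=0$ says $\sum_{e\in\mathcal C}\beta_e A_{ev}=0$ for all $v$; that is, the restriction $\beta^{\mathcal C}$ is a left null vector of $A$, hence of the form $\beta^{\mathcal C}=\delta^{\mathrm T}B$ by the transpose statement $\mathrm{Ker}(A^{\mathrm T})=\mathrm{Im}(B^{\mathrm T})$ recorded after Theorem~\ref{thm:planargood}. Reading off the inductive and capacitive components as before shows that $\delta$ is a cycle drawn entirely from capacitive edges and $\psi-\delta$ a cycle drawn entirely from inductive edges, so $\langle\psi|$ decomposes into homogeneous loops, with $\gamma=\{e:\beta_e\ne 0\}$ the corresponding homogeneous cycle. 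The only extra care needed is notational: the null covector lives in $\mathcal D(\mathcal L)$, and the expression $\sum_{e\in\gamma}\langle e|$ in the statement should be understood through its edge-image $\beta=B^{\mathrm T}|\psi\rangle$ supported on the edges of $\gamma$.
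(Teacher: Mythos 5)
Your proof is correct, and at its core it is the same argument the paper uses, just instantiated concretely rather than abstractly. The paper routes the corollary through Theorem \ref{thm:nullvbetter}: writing $M = \tfrac{1}{2}BPA$ with $P$ the $\pm 1$ diagonal operator distinguishing capacitive from inductive edges, it introduces the projectors $K_\pm = \tfrac{1}{2}(\mathbb{I}\pm P)$, shows that $M|\varphi\rangle = 0$ together with $BA = 0$ forces $BK_+A|\varphi\rangle = BK_-A|\varphi\rangle = 0$ separately, and then uses exactness ($\mathrm{Ker}(B)=\mathrm{Im}(A)$, Theorem \ref{thm:planargood}) to pull each piece back to vertex space. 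Your restriction $\gamma\mapsto\gamma^{\mathcal C}$ is precisely $K_+A|\varphi\rangle$, your $|\psi\rangle$ is the paper's $|+\rangle$, and $\varphi-\psi$ is its $|-\rangle$ (including the same caveat that these are defined only up to right null vectors of $A$); your left-kernel argument is the transpose of the identical computation. What your write-up adds, and what the paper leaves to the informal remarks following Theorem \ref{thm:planargood}, is the explicit graph-theoretic translation: the edge-image vanishing on inductive edges means $\psi$ is constant on components of $G_{\mathcal I}$, hence a combination of indicators of homogeneous capacitive cuts, and dually a circulation supported on capacitive edges decomposes into purely capacitive loops. You also correctly flag the two points where the corollary's statement must be read charitably — the biconditional is literally true only for spans of single cut/loop vectors, and $\sum_{e\in\gamma}\langle e|$ is an edge-space expression standing in for a loop-space covector via its image under $B$ — both of which are looseness inherited from the paper itself rather than defects of your argument.
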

This result is the corollary of a more abstract mathematical result:
\begin{thm}\label{thm:nullvbetter}
Let $V_1$, $V_2$, and $V_3$ be vector spaces and let $A: V_1 \rightarrow V_2$ and $B: V_2\rightarrow V_3$ be linear maps satisfying\footnote{In other words, $B|\varphi\rangle = 0$ for $|\varphi\rangle \in V_2$ if and only if for some $|\varphi^\prime\rangle \in V_1$, $|\varphi\rangle = A|\varphi^\prime\rangle$.} 
\begin{equation}\label{eqn:hypo}
    \ker(B) = \mathrm{im}(A)
\end{equation}
The matrix $M = B P A : V_1 \rightarrow V_3$ has the following properties:
\begin{enumerate}
    \item $M |\varphi \rangle = 0 $ if and only if there exist vectors $|+\rangle$ and $|-\rangle$ such that $|\varphi\rangle = |+\rangle + |-\rangle$ and $P A | \pm\rangle = \pm A |\pm \rangle$; moreover, $|\pm\rangle$ are separately also right null vectors of $M$.
    \item $\langle \psi |M = 0 $ if and only if there exist vectors $|+\rangle$ and $|-\rangle$ such that $|\psi\rangle = |+\rangle + |-\rangle$ and $\langle \pm | B P = \pm\langle \pm| B $; moreover, $\langle \pm|$ are separately also left null vectors of $M$.
\end{enumerate}

\end{thm}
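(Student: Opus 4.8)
The plan is to extract two consequences of the hypothesis $\ker(B)=\mathrm{im}(A)$ and combine them with the fact that $P$ is an involution (in the circuit setting it is the sign operator assigning $+1$ to capacitive and $-1$ to inductive edges, so $P^2=\mathrm{id}$, $P^{\mathrm T}=P$, and $V_2=V_2^+\oplus V_2^-$ splits into the $\pm1$ eigenspaces). First, $\mathrm{im}(A)\subseteq\ker(B)$ gives the identity $BA=0$. Second, the reverse inclusion $\ker(B)\subseteq\mathrm{im}(A)$ lets me ``undo'' $B$: any vector annihilated by $B$ must be of the form $A|\chi\rangle$. These are the only facts about $A$, $B$, $P$ that the argument uses.

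For property 1 I would treat the two directions separately. The easy direction: if $PA|\pm\rangle=\pm A|\pm\rangle$, then $M|\pm\rangle=BPA|\pm\rangle=\pm BA|\pm\rangle=0$ by $BA=0$, which simultaneously proves the ``moreover'' clause and, by linearity, $M|\varphi\rangle=M|+\rangle+M|-\rangle=0$. For the converse, suppose $M|\varphi\rangle=BPA|\varphi\rangle=0$. Then $PA|\varphi\rangle\in\ker(B)=\mathrm{im}(A)$, so there is $|\varphi'\rangle\in V_1$ with $PA|\varphi\rangle=A|\varphi'\rangle$. I then set $|\pm\rangle=\tfrac12\bigl(|\varphi\rangle\pm|\varphi'\rangle\bigr)$, so that $|+\rangle+|-\rangle=|\varphi\rangle$ automatically. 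The verification that $PA|\pm\rangle=\pm A|\pm\rangle$ is where $P^2=\mathrm{id}$ enters: applying $P$ to $PA|\varphi\rangle=A|\varphi'\rangle$ gives $A|\varphi\rangle=PA|\varphi'\rangle$, and combining the two relations yields $PA|\pm\rangle=\tfrac12\bigl(PA|\varphi\rangle\pm PA|\varphi'\rangle\bigr)=\tfrac12\bigl(A|\varphi'\rangle\pm A|\varphi\rangle\bigr)=\pm A|\pm\rangle$.

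For property 2, rather than repeat the computation I would invoke duality. Transposing, $\langle\psi|M=0$ is the same as $M^{\mathrm T}|\psi\rangle=0$ with $M^{\mathrm T}=A^{\mathrm T}PB^{\mathrm T}$ (using $P^{\mathrm T}=P$). Setting $\tilde A=B^{\mathrm T}:V_3\to V_2$ and $\tilde B=A^{\mathrm T}:V_2\to V_1$, the orthogonal-complement form of the hypothesis gives $\ker(\tilde B)=\ker(A^{\mathrm T})=\mathrm{im}(B^{\mathrm T})=\mathrm{im}(\tilde A)$, so the pair $(\tilde A,\tilde B)$ satisfies exactly the condition of property 1, and $\tilde M=\tilde BP\tilde A=M^{\mathrm T}$. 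Applying property 1 to this transposed data produces $|\psi\rangle=|+\rangle+|-\rangle$ with $PB^{\mathrm T}|\pm\rangle=\pm B^{\mathrm T}|\pm\rangle$, which is precisely $\langle\pm|BP=\pm\langle\pm|B$ after transposing back; the ``moreover'' clause transports in the same way.

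The only genuine subtlety is the passage $\ker(B)=\mathrm{im}(A)\Rightarrow\ker(A^{\mathrm T})=\mathrm{im}(B^{\mathrm T})$ used in property 2, which requires a nondegenerate pairing (equivalently finite-dimensionality with the standard inner products on $\mathcal D(\mathcal V)$, $\mathcal D(\mathcal E)$, $\mathcal D(\mathcal L)$); this holds for every circuit of interest. Beyond that, the one idea doing the real work is the symmetric/antisymmetric split $|\pm\rangle=\tfrac12\bigl(|\varphi\rangle\pm|\varphi'\rangle\bigr)$ in the forward direction of property 1 — everything else is a direct consequence of $BA=0$ and $P^2=\mathrm{id}$, so I do not anticipate a hard obstacle.
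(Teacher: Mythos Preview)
Your proof is correct and follows essentially the same route as the paper: both arguments hinge on the exactness hypothesis to pull back the $\pm1$ eigenspace pieces of $A|\varphi\rangle$ under $P$, and both reduce the ``moreover'' clause to $BA=0$. The only cosmetic difference is that the paper first introduces the projectors $K_\pm=\tfrac12(\mathbb{I}\pm P)$, shows $BK_\pm A|\varphi\rangle=0$ separately, and then invokes exactness to produce $|\pm\rangle$ with $A|\pm\rangle=K_\pm A|\varphi\rangle$ (so $|\varphi\rangle=|+\rangle+|-\rangle$ only holds after absorbing a $\ker(A)$ ambiguity), whereas your explicit choice $|\pm\rangle=\tfrac12(|\varphi\rangle\pm|\varphi'\rangle)$ makes the decomposition automatic; your $|\pm\rangle$ satisfy exactly the paper's defining relation $A|\pm\rangle=K_\pm A|\varphi\rangle$, so the two constructions coincide. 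For property~2 the paper simply says ``proven analogously,'' while you pass to transposes --- a legitimate shortcut in the finite-dimensional circuit setting, as you note.
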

\begin{proof}
We prove point 1 of the list, as point 2 is proven analogously. Suppose 
\begin{equation}
    M |\varphi \rangle = B P A |\varphi \rangle = 0.
\end{equation}
Define the projectors
\begin{equation}
    K_{\pm} = \frac{1}{2}(\mathbb{I} \pm P)
\end{equation}
which have the property that $P K_{\pm} = \pm K_{\pm}$,  $K_{\pm}^2 = K_{\pm}$, and $K_{+} + K_{-} = \mathbb{I}$.  
It follows that 
\begin{equation}
   0 = BPA|\varphi\rangle = BP (K_+ + K_-) A |\varphi \rangle = B (K_+ - K_-) A|\varphi\rangle.
\end{equation}
Of course, we also know that \begin{equation}
    0 = BA|\varphi\rangle = B(K_+ + K_-)A|\varphi\rangle,
\end{equation}
meaning that
\begin{equation}
    B K_+ A|\varphi \rangle = B K_- A |\varphi \rangle  = 0.
 \end{equation}
 By (\ref{eqn:hypo}) we conclude that there exist $|\pm\rangle$ for which \begin{equation}
     A|\pm\rangle = K_\pm A|\varphi\rangle. \label{eq:Apm}
 \end{equation}
 Evidently,
 \begin{equation}
    A(|+\rangle + |-\rangle)   = (K_+ + K_-)A |\varphi \rangle = A|\varphi \rangle.
 \end{equation}
 Since any right null vector of $A$, $|n\rangle$, satisfies 
 \begin{equation}
     P  A |n\rangle = \pm  A|n\rangle = 0, 
 \end{equation}
 we see that $|+\rangle$ and $|-\rangle$ are only determined up to the addition of right null vectors of $A$, should any exist. 
 Therefore, $A(|+\rangle + |-\rangle ) = (K_+ + K_-)A|\varphi\rangle = A|\varphi\rangle$, which implies that  $|\varphi\rangle = |+\rangle + |-\rangle$ (for appropriate definitions of $|+\rangle $ and $|-\rangle$ corresponding to the freedom to add right null vectors of $A$ to either). 
 Left multiply (\ref{eq:Apm}) by $P$ to conclude that $PA|\pm\rangle = \pm A|\pm\rangle$.
 Since $PA|\pm\rangle$ is proportional to $A|\pm\rangle$ and $BA=0$, it follows that $M|\pm\rangle = 0$, which proves the desired statements.

\end{proof}

A straightforward, but useful, consequence of this result is:
\begin{cor}\label{cor:bnullv}
     Adopt the definitions made in the proof of Theorem \ref{thm:nullvbetter}. Consider the matrix 
     \begin{equation}
         W = B K_+. 
     \end{equation}
     If $|\varphi\rangle$ satisfies
     \begin{equation}
         W| \varphi \rangle = 0 ,
     \end{equation}
     then there exist vectors $|\theta\rangle$ and $|\psi\rangle$ such that 
     \begin{equation}
         |\varphi\rangle = A|\theta\rangle + K_- |\psi\rangle
     \end{equation}
     with $K_+ A | \theta\rangle = A |\theta \rangle$.
 \end{cor}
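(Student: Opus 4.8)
The plan is to reduce the statement directly to the defining hypothesis $\ker(B) = \mathrm{im}(A)$ of Theorem~\ref{thm:nullvbetter}, together with the projector identities $K_+ + K_- = \mathbb{I}$, $PK_\pm = \pm K_\pm$, and $K_\pm^2 = K_\pm$ already recorded in its proof. There is little genuine analytic content; the entire argument is bookkeeping with the projectors, so I expect the proof to be short and the ``obstacle'' to be only a point of care rather than a real difficulty.

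First I would rewrite the hypothesis $W|\varphi\rangle = 0$ as $BK_+|\varphi\rangle = 0$, which says precisely that $K_+|\varphi\rangle \in \ker(B)$. Invoking $\ker(B) = \mathrm{im}(A)$ then produces a vector $|\theta\rangle$ with $A|\theta\rangle = K_+|\varphi\rangle$. Next I would split $|\varphi\rangle$ using the resolution of the identity, $|\varphi\rangle = (K_+ + K_-)|\varphi\rangle = K_+|\varphi\rangle + K_-|\varphi\rangle$, and substitute the previous relation to obtain $|\varphi\rangle = A|\theta\rangle + K_-|\psi\rangle$ with the choice $|\psi\rangle = |\varphi\rangle$. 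This already exhibits the claimed decomposition.

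It then remains only to verify the eigenvalue condition $K_+ A|\theta\rangle = A|\theta\rangle$ (equivalently $PA|\theta\rangle = A|\theta\rangle$): using $A|\theta\rangle = K_+|\varphi\rangle$ and idempotence, $K_+ A|\theta\rangle = K_+^2|\varphi\rangle = K_+|\varphi\rangle = A|\theta\rangle$, as required. The one point demanding care—not a real obstacle—is that $|\theta\rangle$ is determined only up to right null vectors of $A$, exactly as in Theorem~\ref{thm:nullvbetter}; this ambiguity is harmless, since the property $K_+ A|\theta\rangle = A|\theta\rangle$ depends only on $A|\theta\rangle = K_+|\varphi\rangle$ and so holds for every admissible representative. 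In the circuit application $P$ acts as $+1$ on capacitive edges and $-1$ on inductive edges, so $K_+$ projects onto capacitors; the image-of-$A$ term, living in the $+1$ eigenspace of $P$, is then a cut supported entirely on capacitive edges (a capacitive cut), while the $K_-|\psi\rangle$ piece is supported on the inductors, matching the interpretation used in Section~\ref{sec:bnf}.
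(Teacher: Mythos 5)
Your proof is correct and fills in the corollary exactly as the paper intends: the paper gives no explicit argument (calling it a ``straightforward consequence'' of Theorem \ref{thm:nullvbetter}), and your chain $BK_+|\varphi\rangle = 0 \Rightarrow K_+|\varphi\rangle \in \ker(B) = \mathrm{im}(A) \Rightarrow A|\theta\rangle = K_+|\varphi\rangle$, followed by the split $|\varphi\rangle = A|\theta\rangle + K_-|\varphi\rangle$ and the idempotence check $K_+A|\theta\rangle = K_+^2|\varphi\rangle = A|\theta\rangle$, uses precisely the ingredients (the hypothesis $\ker(B)=\mathrm{im}(A)$ and the projector identities) that the theorem's proof sets up. Your closing remark identifying the $A|\theta\rangle$ piece as a capacitive cut also matches how the corollary is invoked in Section \ref{sec:bnf}.
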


As a passing remark, the hypothesis of Theorem \ref{thm:nullvbetter} is satisfied by the vector spaces and boundary maps in any short exact sequence of vector spaces together with some partition of the intermediate vector space. For our purposes, Theorem \ref{thm:nullvbetter} serves to enumerate all of the null vectors of $M$. To see how Theorem \ref{thm:nullvbetter} applies to $M$, define $P:\mathcal D(\mathcal E)\rightarrow \mathcal D(\mathcal E)$ such that 
\begin{equation}
    P_{ee'}  = (-2 \mathbb{I}[e \in \mathcal I] + 1) \delta_{ee'}
\end{equation}
where $\mathbb I$ is an indicator function that vanishes if its argument is untrue and is otherwise equal to one. Then, 
\begin{equation}
    M = \frac{1}{2} B P A.
\end{equation}
To make more clear our enumeration of null vectors of $M$, we make the following definitions:  
\begin{defn}[Homogeneous cut]
    \introem
    Let $C = (\mathcal V_1, \mathcal V_2)$ be a cut of $G$. $C$ is a homogeneous cut if the set 
    \begin{equation}
        \mathfrak C = \{e \subset \mathcal E \,\,:\,\, \exists v_1\in \mathcal V_1 \text{ and } v_2 \in \mathcal V_2 \,\, \text{s.t. } e \in \{(v_1,v_2), (v_2,v_1)\}\}
    \end{equation}
    has the property that 
    \begin{equation}
        \mathfrak C \subset \mathcal C
    \end{equation}
    or 
    \begin{equation}
        \mathfrak C \subset \mathcal I.
    \end{equation}
    In the former case we say that $\mathfrak C$ is capacitive and in the latter case we say that $\mathfrak C $ is inductive. 
\end{defn}
\begin{defn}[Homogeneous loop]
    Let $\gamma$ be a loop in the sense of Definition \ref{defn:loop}. If $\gamma \subset \mathcal C$ or $\gamma \subset \mathcal I$, we say that $\gamma$ is a \textbf{homogeneous loop}.
\end{defn}

  \begin{cor}\label{cor:gtry}
    \introem
    Let $\Delta_I$  ($\Delta_C$) be the set of homogeneous inductive (capacitive) loops of $G$, and let $\Gamma_I$ ($\Gamma_C$) be the set of homogenous inductive (capacitive) loops in $G$. Then, 
    \begin{equation}
      |\mathcal F| - |\Delta_I| - |\Delta_C| = |\mathcal V| - |\Gamma_I| - |\Gamma_C| 
    \end{equation}
  
  \end{cor}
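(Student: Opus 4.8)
The plan is to read the identity off the rank--nullity theorem applied to the connection matrix $M$, viewed as a linear map $\mathcal D(\mathcal V)\to\mathcal D(\mathcal L)$, using the explicit enumeration of its null spaces just established. Writing $r=\mathrm{rank}(M)$, the right null space has dimension $|\mathcal V|-r$ and the left null space has dimension $|\mathcal L|-r$, so subtracting eliminates $r$:
\begin{equation}
\dim\ker M - \dim(\text{left null space of }M) = |\mathcal V| - |\mathcal L|.
\end{equation}
Since the appendix works with a planar circuit we have $\mathcal F=\mathcal L$, so the right-hand side is exactly $|\mathcal V|-|\mathcal F|$. The entire content of the corollary is then to match the two null-space dimensions with the stated cut and loop counts, so that this relation rearranges into the claimed identity.

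For the right null space I would invoke Theorem~\ref{thm:nullvbetter}, applied with the capacitor/inductor sign projector $P$ and $M=\tfrac12 BPA$ (the hypothesis $\ker B=\mathrm{im}\,A$ being Theorem~\ref{thm:planargood}). Each right null vector splits as $|\varphi\rangle=|+\rangle+|-\rangle$ with $PA|\pm\rangle=\pm A|\pm\rangle$, i.e. $A|+\rangle$ supported on capacitive edges and $A|-\rangle$ on inductive edges. Hence $|+\rangle$ lies in the kernel of the incidence matrix of the inductive subgraph and $|-\rangle$ in the kernel of the incidence matrix of the capacitive subgraph; these kernels are spanned by the indicator vectors of the connected components of those subgraphs, which are precisely the homogeneous capacitive and inductive cut vectors. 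Thus $\dim\ker M=|\Gamma_C|+|\Gamma_I|+1$, the ``$+1$'' being the single universal gauge mode $\sum_v|v\rangle$ sitting in the intersection $\ker A\subset\ker A^{\mathcal I}\cap\ker A^{\mathcal C}$ that is common to both sectors. An exactly parallel argument via point~2 of Theorem~\ref{thm:nullvbetter}, together with the fact from Theorem~\ref{thm:rank} that $B$ has a unique left null vector (the face-sum), identifies the left null space of $M$ with the span of the homogeneous capacitive and inductive loops, of dimension $|\Delta_C|+|\Delta_I|+1$, the ``$+1$'' now being the shared face-sum mode.

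Substituting $\dim\ker M=|\Gamma_C|+|\Gamma_I|+1$ and $\dim(\text{left null space})=|\Delta_C|+|\Delta_I|+1$ into the rank--nullity relation, the two universal modes cancel and we obtain $(|\Gamma_C|+|\Gamma_I|)-(|\Delta_C|+|\Delta_I|)=|\mathcal V|-|\mathcal F|$, which is equivalent to the statement $|\mathcal F|-|\Delta_I|-|\Delta_C|=|\mathcal V|-|\Gamma_I|-|\Gamma_C|$. The step I expect to be the main obstacle is precisely this bookkeeping of the ever-present gauge vectors: one must verify that exactly one such mode appears on each side (neither zero nor more), that it is genuinely shared between the inductive and capacitive sectors so it is counted only once, and that ``number of independent homogeneous cuts/loops'' is the reading of $|\Gamma_\bullet|$ and $|\Delta_\bullet|$ that makes the cancellation exact. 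Once the gauge counting is pinned down symmetrically on the two sides, everything else follows immediately from rank--nullity and planarity.
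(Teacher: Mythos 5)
Your proposal is correct and takes essentially the same route as the paper: the paper's own proof of Corollary \ref{cor:gtry} consists of exactly one line, citing Theorem \ref{thm:nullvbetter} together with the rank--nullity theorem, which is precisely your argument. The additional details you supply --- identifying $\ker M$ and the left null space with the spans of homogeneous cut and loop vectors plus the single shared gauge mode on each side, and noting that $|\Gamma_\bullet|$ and $|\Delta_\bullet|$ must be read as counts of \emph{independent} nontrivial homogeneous cuts and loops --- are exactly the bookkeeping the paper leaves implicit, and you have resolved it correctly.
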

  \begin{proof}
    This is an immediate consequence of Thm. \ref{thm:nullvbetter} together with the Rank--Nullity theorem. 
  \end{proof}

  Corollary \ref{cor:gtry} was a result nearly achieved in \cite{osborne2023symplectic}, but the relevant discussion relied upon the enumeration of null vectors \emph{and} a number of Noether currents. 
  One of the merits of this approach is that Noether currents in the formalism of \cite{osborne2023symplectic} are promoted to null vectors. 

     Thus, the number of degrees of freedom in a circuit is equal to the number of loops in a circuit which are neither purely inductive nor purely capacitive. 

  \subsection{Formal circuit Lagrangian}
  Here, we restrict our attention to planar graphs. These results can be made to hold for nonplanar graphs with some minor modifications, which will be made explicit in appendix \ref{app:duality}.
  \begin{defn}[Symmetric circuit Lagrangian]
    \introem
    Let $M$ be the connection matrix of $M$ and define $|\mathcal C \cup \mathcal I|$ functions labeled  $E_e$ which describe the energy of the circuit element on branch $e$. 
    The function 
    \begin{equation}\label{eqn:symlag}
      L = \sum_{l \in \mathcal L, v \in \mathcal V} q_l M_{fv} \dot \phi_v - \sum_{e \in \mathcal C} E_e\left(\sum_l q_l B_{le}\right) - \sum_{e \in \mathcal I} E_e\left(\sum_{v} A_{ev}\phi_v\right)
    \end{equation}
    is called the \textbf{symmetric circuit Lagrangian} for $G$, or ``the Lagrangian for $G$" for short.
  \end{defn}
  We see that, from (\ref{eqn:symlag}), $S = \int \mathrm d t L$ is symmetric in  $\mathcal C$ and $\mathcal V$. As we will discuss later, 
  this choice of variables leads to a very straightforward circuit duality transformation. 

    For the following result, we will need to rely upon the results of Section \ref{sec:bnf} as well as a number of definitions originally made in \cite{osborne2023symplectic}.
  \begin{thm}\label{thm:canquant}
   \introem
   Let $L$ be the symmetric circuit Lagrangian of $G$.
   Define $\Gamma_C$ ($\Gamma_I$) to be the set of capacitive (inductive) cuts of $G$, and let $\Delta_C$ ($\Delta_I$) to be the set of capacitive (inductive) cycles of $G$.
   It is always possible to define $|\mathcal V| - |\Gamma_I| - |\Gamma_C|-1$ variables $Q_i = \sum_l D_{il} q_l$ and $\Phi_i = \sum_{v} S_{iv} \phi_v$ so that 
   \begin{equation}
     \sum_{l,v} q_l M_{lv}\dot\phi_v = \sum_{i = 1}^{|\mathcal V| - |\Gamma_I| - |\Gamma_C|-1} Q_i \dot \Phi_i.
   \end{equation}
   All possible choices of $S$ and $D$ are related by canonical transformations. 
  \end{thm}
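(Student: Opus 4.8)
The plan is to recognize that the claimed identity is nothing but a \emph{rank factorization} of the connection matrix $M$. Writing $Q_i = \sum_l D_{il}q_l$ and $\Phi_i = \sum_v S_{iv}\phi_v$, the desired relation
\begin{equation}
\sum_{l,v} q_l M_{lv}\dot\phi_v = \sum_{i=1}^N Q_i\dot\Phi_i
\end{equation}
holds for \emph{all} $q$ and $\phi$ if and only if $M_{lv} = \sum_{i=1}^N D_{il}S_{iv}$, i.e.\ $M = D^{\mathrm T}S$ with $D^{\mathrm T}$ of size $|\mathcal L|\times N$ and $S$ of size $N\times|\mathcal V|$. Thus the whole theorem reduces to three linear-algebra facts: that $\operatorname{rank}M = N := |\mathcal V|-|\Gamma_I|-|\Gamma_C|-1$; that a factorization $M = D^{\mathrm T}S$ through an $N$-dimensional intermediate space exists; and that any two such factorizations are related by an invertible $N\times N$ matrix whose induced action on $(Q,\Phi)$ is canonical.

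First I would pin down the rank. By the enumeration of null vectors in Theorem \ref{thm:nullvbetter} (and the counting in Corollary \ref{cor:gtry}), the right null space of $M$ is spanned by the indicator vectors $\sum_{v\in\mathcal V_1}|v\rangle$ of homogeneous cuts together with the ever-present ``gauge'' vector $\sum_v|v\rangle$, giving $\dim\ker M = |\Gamma_I|+|\Gamma_C|+1$; rank--nullity then yields $\operatorname{rank}M = N$. (The same count from the loop side, using the left null vectors associated to homogeneous loops plus the all-faces vector, reproduces Corollary \ref{cor:gtry} and confirms consistency.) Existence of the factorization is then immediate: pick any basis of the column space $\operatorname{im}M$, let these be the columns of $D^{\mathrm T}$, and let the rows of $S$ be the coordinates expressing each column $M|v\rangle$ in that basis. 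Since $N=\operatorname{rank}M$, both factors necessarily have full rank $N$, so the $Q_i$ are linearly independent and likewise the $\Phi_i$ --- exactly the independence needed for $(Q_i,\Phi_i)$ to be genuine conjugate pairs.

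The uniqueness statement is the only part with any subtlety. Given two factorizations $M = D^{\mathrm T}S = \tilde D^{\mathrm T}\tilde S$, the columns of both $D^{\mathrm T}$ and $\tilde D^{\mathrm T}$ span $\operatorname{im}M$, so $\tilde D^{\mathrm T} = D^{\mathrm T}T^{-1}$ for a unique invertible $N\times N$ matrix $T$; injectivity of $D^{\mathrm T}$ (full column rank) then forces $\tilde S = TS$. Translating to coordinates, $\tilde Q = (T^{\mathrm T})^{-1}Q$ and $\tilde\Phi = T\Phi$. The symplectic structure is the one read off from the kinetic term, namely $\{\Phi_i,Q_j\}=\delta_{ij}$, and a direct computation gives
\begin{equation}
\{\tilde\Phi_i,\tilde Q_j\} = \sum_{a,b} T_{ia}\big((T^{\mathrm T})^{-1}\big)_{jb}\{\Phi_a,Q_b\} = \sum_a T_{ia}(T^{-1})_{aj} = \delta_{ij},
\end{equation}
so the change of variables preserves the Poisson brackets and is therefore canonical.

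The genuinely delicate step --- and the one I would treat most carefully --- is the rank count, since it rests on correctly bookkeeping the always-present gauge null vectors (the all-vertices and all-faces vectors) on top of the homogeneous cuts and loops enumerated by Theorem \ref{thm:nullvbetter}; miscounting these ``$+1$''s is the easiest way to get the dimension of the reduced phase space wrong. By contrast, existence of the factorization and the canonical nature of the residual $GL_N$ freedom are routine once the rank is fixed. One conceptual caveat worth flagging is that $Q$ and $\Phi$ carry no Poisson brackets prior to this reduction; the bracket $\{\Phi_i,Q_j\}=\delta_{ij}$ is \emph{defined} by the canonical form $\sum_i Q_i\dot\Phi_i$, so ``canonical transformation'' here means precisely a linear map preserving the two-form $\sum_i \mathrm dQ_i\wedge\mathrm d\Phi_i$.
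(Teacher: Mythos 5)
Your proposal is correct, and it takes a genuinely different route from the paper. The paper's own proof is a one-line reduction to its predecessor: it sets $q_e = \sum_l q_l B_{le}$ for $e \in \mathcal C$ and $\Omega_{ev} = A_{ev}$ restricted to capacitive edges, which converts the kinetic term $\sum_{l,v} q_l M_{lv}\dot\phi_v$ into the branch--node form $\sum_{e\in\mathcal C,v} q_e \Omega_{ev}\dot\phi_v$, and then invokes Theorems 10 and 13 of \cite{osborne2023symplectic}, which carry out the null-vector elimination and establish that all reduced coordinate choices are canonically equivalent. You instead argue entirely inside this paper's formalism: the claimed identity is equivalent to a factorization $M = D^{\mathrm T}S$ through an $N$-dimensional space; the count $N = |\mathcal V|-|\Gamma_I|-|\Gamma_C|-1$ equals $\mathrm{rank}\, M$ by Theorem \ref{thm:nullvbetter} plus rank--nullity (essentially Corollary \ref{cor:gtry} restated); full rank of both factors is then automatic, and the residual invertible ambiguity $\tilde\Phi = T\Phi$, $\tilde Q = (T^{\mathrm T})^{-1}Q$ between any two factorizations preserves the brackets, which you verify directly. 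Your route buys self-containedness and transparency --- it makes explicit exactly where the dimension count and the canonical freedom come from --- while the paper's route buys brevity and inherits the constructive coordinate choices (spanning-tree-type variables) of the earlier work. One shared caveat: both arguments implicitly require that the indicator vectors of the nontrivial homogeneous cuts, together with the gauge vector $\sum_v|v\rangle$, be linearly independent and span $\ker M$; this assumption is baked into the theorem's statement itself (it can fail for degenerate circuits, e.g.\ ones built entirely from capacitors, where distinct homogeneous cuts yield dependent indicators), and you correctly identify this rank count, rather than the factorization or the bracket computation, as the delicate step.
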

  \begin{proof}
    Define $q_e = \sum_{l} q_l B_{le}$ for  $e \in \mathcal C$, and define $\Omega_{ev} = A_{ev}$ for  $e \in \mathcal C$ and then apply Theorems 10 and 13 from \cite{} directly.  
  \end{proof}

  \section{Circuit duality}\label{app:duality}
  The contents of this appendix depend broadly on the results of Appendix \ref{app:symmquant} and serve to formalize the claims in Section \ref{sec:duality}.
  Duality as a map can be sensibly defined on Lagrangians, graphs, and structures from the theory of topological algebra. 
  While we take a minimal perspective here, the results of this section are very simply expressed as a property of chain--complex isomorphisms. 
  \begin{defn}[Hamiltonian duality transformation]
      Suppose $H$ is a Hamiltonian function of variables $\Phi_i$ and $Q_i$ for $i = 1,2,\dots,N$, equipped with Poisson brackets
      \begin{equation}
          \{\Phi_i,Q_j\} = \delta_{ij}.
      \end{equation}
      The transformation 
      \begin{equation}
          \begin{aligned}
              Q_i \rightarrow Q_i' = -\Phi_i \\
              \Phi_i \rightarrow \Phi_i' =  Q_i
          \end{aligned}
      \end{equation}
      is called a \textbf{Hamiltonian duality transformation}. We will write $H(Q',\Phi') = H^*$.
  \end{defn}
  Clearly, Hamiltonian duality transformations are canonical since  $\{Q_i', \Phi_j'\} = \{Q_i, \Phi_j\}$. Certainly, at the level of Hamiltonian mechanics, it is straightforward to take the dual of any Hamiltonian arising from a circuit Lagrangian in the spirit of the formalism of this work. 
  However, the challenge of constructing the circuit (or circuits) that produce $H^*$ is the subject of this appendix. 
  Moreover, it is not always possible to produce a (physically sensible) circuit that accomplishes this task -- at least using any known algorithm for constructing a dual circuit.
  \begin{defn}[Dual Circuit]\label{defn:dualityset}
      \introem
      Define
      \begin{equation}
          \begin{aligned}
              \mathcal V^* &= \mathcal L \\ 
              \mathcal I^* &= \mathcal C \\
              \mathcal C^* &= \mathcal I \\ 
              \mathcal L^* &= \mathcal V \\ 
              A^* &= B^{\mathrm T} \\ 
              B^* &= A^{\mathrm T}
          \end{aligned}
      \end{equation}
      and finally 
      \begin{equation}
          G^* = (\mathcal V^*, \mathcal C^*, \mathcal I^*, \mathcal L^*, A^*, B^*).
      \end{equation}
      We say that $G^*$ is the \textbf{dual circuit} of $G$. 
      For an element $v$ of $\mathcal V$, we write the corresponding element of $\mathcal L^*$ as  $v^*$.
      For an element $l$ of $\mathcal L$, we write the corresponding element of $\mathcal V^*$ as  $l^*$. 
  \end{defn}
  Our reason for using this terminology will become clear shortly. 
  Both a combinatorial object and a topological object are encoded in $G^*$. That is to say that the combinatorial properties of $G^*$ are encoded in the structure of $A^*$.
  While it is straightforward to recover the combinatorial structure of $G^*$ by looking at the matrix $A^*$, it is less obvious how one might recover a particular embedding of $G^*$ by using $B^*$. 
  Though we will not belabour this point presently,  we remark that a particular embedding of $G^*$ is recoverable by a gluing procedure where every element of $\mathcal F$ is represented by a patch isomorphic to the unit disk, and then patches are glued together by identifying segments on the boundary of different patches, in a way that is consistent with the content of $A^*$. This is always possible. For planar graphs, this procedure always accomplished by the following procedure:
\begin{defn}[Embedding of dual circuit]\label{def:duality}
\introem
Further suppose $G$ is embedded upon a sphere.
We construct the \textbf{embedded dual circuit} of $G$, $G^*$ as follows:
\begin{enumerate}
\item For every loop $l$ in $\mathcal L$, draw a vertex labeled $l^*$ (inside of the face whose boundary is $l$). 
\item For every pair of loops $l_0$ and $l_1$ in $\mathcal L$, draw an inductive (capacitive) edge between $l_0^*$ and $l_1^*$ for every capacitive (inductive) edge in both $l_0$ and $l_1$. For such an edge $e$ of $G$, label the corresponding edge in $G^*$ as $e^*$. 
\item For every edge $e^*$ in $G^*$, give $e^*$ an orientation so that when the surface is drawn (locally) on the plane, the cross product between $e$ and $e^*$ is always positive.
\end{enumerate}
\end{defn}
Definitions \ref{def:duality} and \ref{defn:dualityset} are equivalent for planar graphs. 
We emphasize that the construction of a dual graph is intrinsically dependent upon the embedding of $G$ chosen. We will make this point explicit with the next observation.
\begin{obs}
  Let $G$ be a planar graph. Choose two embeddings of $G$ on $S$, $G_S$ and $G_S'$. $G_S^*$ and $(G_{S'})^*$ need not be graph isomorphic.
  \end{obs}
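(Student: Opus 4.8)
The plan is to prove this statement by exhibiting a single explicit counterexample, since the claim is only that isomorphism of the two duals \emph{can} fail; one witness suffices. The natural candidate is already present in the manuscript: the black circuit appearing in both panels of Fig.~\ref{fig:ulrich}, analyzed in Section~\ref{ex:multi}. The two black drawings are, as abstract graphs, identical (they carry the same vertex and edge labels and hence share one incidence matrix $A$), so they are two embeddings $G_S$ and $G_{S'}$ of a single planar graph $G$. I would first make this explicit by observing that the two drawings differ only in their embedding, equivalently in the cyclic orderings of edges around vertices, and therefore in which subsets of edges bound faces; the combinatorial adjacency data encoded in $A$ is unchanged.

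Next I would compute a graph invariant of each dual and show the two invariants disagree. The cleanest invariant here is the degree sequence. By Definition~\ref{def:duality} the dual $G_S^*$ has one vertex per face of the embedding, and the degree of that vertex equals the number of edges on the boundary of the corresponding face, i.e. the number of nonzero entries in the corresponding row of the orientation matrix $B$. Since $A^* = B^{\mathrm T}$, while $B$ is embedding-dependent and $A$ is not, the row structure of $B$, and hence the degree sequence of $G^*$, can change between $G_S$ and $G_{S'}$. For the example at hand, the dual in Fig.~\ref{fig:ulrich}(b) contains a vertex of degree six, whereas no vertex of the dual in Fig.~\ref{fig:ulrich}(a) has degree six. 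A mismatch in degree sequences is preserved under any graph isomorphism, so $G_S^*$ and $(G_{S'})^*$ cannot be isomorphic.

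To package this as a proof I would (i) write down the two orientation matrices $B$ and $B'$ for the two embeddings, reading the face boundaries directly off the drawings; (ii) use $A^* = B^{\mathrm T}$ and $(A')^* = (B')^{\mathrm T}$ to extract the two dual incidence matrices, and then the two degree sequences, by counting nonzero entries column by column; and (iii) observe that the sequences differ, the degree-six vertex being the decisive discrepancy. Since the common incidence matrix $A$ certifies that $G_S$ and $G_{S'}$ embed the \emph{same} abstract graph, this completes the counterexample.

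I do not anticipate a genuine obstacle, as this is an existence statement settled by one example, and the necessary data already appears in Fig.~\ref{fig:ulrich}. The only point requiring care is conceptual rather than technical: one must verify that the two drawings really are embeddings of one abstract graph rather than of two distinct graphs. This is guaranteed precisely because the incidence matrix $A$ is identical for both, while only the embedding-dependent orientation matrix $B$ changes. Remarking that this freedom exists because $G$ is not sufficiently highly connected (so its sphere embedding is not unique) would provide useful context, but is not needed for the proof itself.
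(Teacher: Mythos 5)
Your proposal is correct and follows essentially the same route as the paper: the paper also proves this observation by pointing to the example in Fig.~\ref{fig:ulrich}(a)--(b), where the two black drawings are embeddings of the same abstract graph (same incidence matrix $A$) but the dual in (b) has a degree-six vertex absent from the dual in (a). Your additional packaging (degree sequences as an isomorphism invariant, extracting them from the columns of $B^{\mathrm T}$) just makes explicit what the paper leaves implicit.
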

\begin{proof}
We provide a proof by example in Figure \ref{fig:ulrich}(a)--(b). In order to view a circuit as a graph, one only needs to ignore all of the circuit elements in the circuit so that every branch becomes simply a graph theoretic edge. 
\end{proof}

\begin{obs}
  \introem
  Let $G^* = (\mathcal V^*, \mathcal C^* ,\mathcal I^*, \mathcal L^*, A^*,B^*)$ be the dual circuit of  $G$. 
  Define  the matrix $M^*: \mathcal V^* \rightarrow \mathcal L^*$ with matrix elements
  \begin{equation}
    M^*_{v^*,l^*} = \frac{1}{2}\sum_{e^* \in \mathcal C^*} B^*_{v^* e^*} A^*_{e^* l^*} - \frac{1}{2}\sum_{e^* \in \mathcal I^*} B^*_{v^*e^*}A^*_{e^* l^*}.
  \end{equation}
  Then, $M^* = - M^{\mathrm T}$. 
\end{obs}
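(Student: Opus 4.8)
The plan is to prove the identity by direct substitution, since $M^*$ is defined purely combinatorially in terms of the dual data $A^*$, $B^*$, $\mathcal{C}^*$, $\mathcal{I}^*$, all of which are fixed by Definition~\ref{defn:dualityset}. The only genuine content is careful bookkeeping of the index identifications, so I would begin by fixing notation: under the correspondences $\mathcal{V}^* \simeq \mathcal{L}$, $\mathcal{L}^* \simeq \mathcal{V}$, and $\mathcal{E}^* \simeq \mathcal{E}$, an element $v^* \in \mathcal{L}^*$ is identified with a vertex $v$, an element $l^* \in \mathcal{V}^*$ with a loop $l$, and $e^* \in \mathcal{E}^*$ with an edge $e$.

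Next I would translate each factor appearing in the definition of $M^*_{v^* l^*}$ back into the original data of $G$. Using $B^* = A^{\mathrm{T}}$, the entry $B^*_{v^* e^*}$ becomes $A_{ev}$; using $A^* = B^{\mathrm{T}}$, the entry $A^*_{e^* l^*}$ becomes $B_{le}$. Crucially, the swaps $\mathcal{C}^* \simeq \mathcal{I}$ and $\mathcal{I}^* \simeq \mathcal{C}$ mean that the sum over $e^* \in \mathcal{C}^*$ is a sum over the inductive edges $e \in \mathcal{I}$, while the sum over $e^* \in \mathcal{I}^*$ is a sum over the capacitive edges $e \in \mathcal{C}$.

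Substituting these, I obtain
\[
  M^*_{v^* l^*} = \frac{1}{2}\sum_{e \in \mathcal{I}} A_{ev} B_{le} - \frac{1}{2}\sum_{e \in \mathcal{C}} A_{ev} B_{le},
\]
which is exactly the negative of the definition of $M_{lv}$. Recognizing that $-M_{lv} = -(M^{\mathrm{T}})_{vl}$ under the identifications $v^* \simeq v$ and $l^* \simeq l$ then yields $M^* = -M^{\mathrm{T}}$, as claimed.

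The computation is entirely mechanical, and the only place one can go astray is the index convention; the main obstacle is thus purely notational, namely ensuring that the capacitor/inductor exchange lands on the correct term and thereby produces the overall minus sign, rather than leaving the sign unchanged while merely transposing. I would emphasize in the write-up that it is precisely this exchange of $\mathcal{C}$ and $\mathcal{I}$ --- which interchanges the two terms defining $M$ --- that is responsible for the antisymmetry $M^* = -M^{\mathrm{T}}$, as opposed to the naive transpose $M^{\mathrm{T}}$ that one might have guessed from swapping the roles of $A$ and $B$ alone.
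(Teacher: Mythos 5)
Your proof is correct and follows essentially the same route as the paper's: a direct relabeling substitution using $B^* = A^{\mathrm T}$, $A^* = B^{\mathrm T}$, and the exchange $\mathcal{C}^* \simeq \mathcal{I}$, $\mathcal{I}^* \simeq \mathcal{C}$, which swaps the two terms in the definition of $M$ and produces the sign flip. Your closing remark correctly isolates the capacitor/inductor exchange as the source of the minus sign, which is precisely the mechanism in the paper's one-line computation.
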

\begin{proof}
  By relabeling, 
  \begin{equation}
    \langle v^* | M^*|l^*\rangle = \frac{1}{2}\sum_{e^* \in \mathcal C^*} B^*_{v^* e^*} A^*_{e^* l^*} - \frac{1}{2} B^*_{v^*e^*}A^*_{e^* l^*} = 
    \frac{1}{2}\sum_{e \in \mathcal I} A_{v e} B_{e l}  - \frac{1}{2} \sum_{e \in \mathcal C} A_{ve}B_{el}  =- \langle v | M^{\mathrm T} |l\rangle.
  \end{equation}
  Since $\langle v^* | M^* | l^* \rangle = \langle v| M^{\mathrm T} |l \rangle$, we say simply  $-M^{\mathrm T} = M^*$.
\end{proof}

\begin{thm}\label{thm:dualitylag}
  \introem
  Let $G^*$ be the dual circuit of  $G$. For each edge $e$ in $\mathcal I \cup \mathcal C$, suppose that the energy associated with edge $e$ is a function  $E_e: \mathbb R\rightarrow \mathbb R$.
  For each edge  $e^* $ in $\mathcal I^* \cup \mathcal C^*$, fix  $E_{e^*} : \mathbb R \rightarrow \mathbb R$ so that 
  \begin{equation}
    E_{e^*}(x) = E_e(x).
  \end{equation}
  The Lagrangian for $G$ is related to the Lagrangian for $G^*$ are related by a relabeling transformation.
\end{thm}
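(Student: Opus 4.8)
The plan is to write the symmetric circuit Lagrangian for $G^*$ directly, using the template from the definition of the symmetric circuit Lagrangian applied to the dual data, and then to substitute the duality dictionary of Definition \ref{defn:dualityset} together with the preceding Observation that $M^* = -M^{\mathrm T}$. Explicitly, the Lagrangian for $G^*$ is
\[
L^* = \sum_{l^* \in \mathcal L^*,\, v^* \in \mathcal V^*} q_{l^*} M^*_{l^* v^*} \dot\phi_{v^*} - \sum_{e^* \in \mathcal C^*} E_{e^*}\Big(\sum_{l^*} q_{l^*} B^*_{l^* e^*}\Big) - \sum_{e^* \in \mathcal I^*} E_{e^*}\Big(\sum_{v^*} A^*_{e^* v^*} \phi_{v^*}\Big).
\]
I would then invoke the relabeling (\ref{eqn:dualitytrans}), which identifies the loop charges of $G^*$ (living on $\mathcal L^* = \mathcal V$) with the node fluxes of $G$, and the node fluxes of $G^*$ (living on $\mathcal V^* = \mathcal L$) with the loop charges of $G$, while swapping $\mathcal C^* = \mathcal I$, $\mathcal I^* = \mathcal C$, $A^* = B^{\mathrm T}$, $B^* = A^{\mathrm T}$, and $E_{e^*} = E_e$.

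First I would dispatch the two potential terms, since these are immediate. Using $\mathcal C^* = \mathcal I$, $B^* = A^{\mathrm T}$, and $E_{e^*} = E_e$, the capacitive energy of $G^*$ becomes $\sum_{e \in \mathcal I} E_e\big(\sum_v A_{ev}\phi_v\big)$, which is exactly the inductive energy term of $L$. Symmetrically, using $\mathcal I^* = \mathcal C$ and $A^* = B^{\mathrm T}$, the inductive energy of $G^*$ becomes $\sum_{e \in \mathcal C} E_e\big(\sum_l q_l B_{le}\big)$, the capacitive energy term of $L$. The key consistency check here is that the \emph{arguments} of the energy functions swap correctly: the loop-charge argument $\sum_l q_l B_{le}$ attached to a capacitor of $G$ turns into the node-flux argument $\sum_v A_{ev}\phi_v$ attached to an inductor of $G^*$, and vice versa. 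No residual sign or factor survives in the potential part.

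The one delicate step is the kinetic term, and this is where the Observation does its work. Substituting $M^*_{l^* v^*} = -M_{lv}$ together with the relabeling converts $\sum q_{l^*} M^*_{l^* v^*}\dot\phi_{v^*}$ into $-\sum_{l,v}\phi_v M_{lv}\dot q_l$. This is not literally the kinetic term $\sum_{l,v} q_l M_{lv}\dot\phi_v$ of $L$, but it differs from it only by
\[
\sum_{l,v} q_l M_{lv}\dot\phi_v + \sum_{l,v} \phi_v M_{lv}\dot q_l = \frac{\mathrm d}{\mathrm d t}\Big(\sum_{l,v} M_{lv}\, q_l\, \phi_v\Big),
\]
a total time derivative. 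Hence $L$ and the relabeled $L^*$ define the same action up to a boundary term, the same Euler--Lagrange equations, and the same Hamiltonian after a Legendre transform, which establishes the claim. The minus sign produced by $M^* = -M^{\mathrm T}$ is precisely what is needed so that the dual kinetic term integrates by parts back into the original one; equivalently, it is the sign that renders the induced map on reduced coordinates $(\Phi_i,Q_i)\mapsto(Q_i,-\Phi_i)$ canonical.

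I expect the main obstacle to be purely bookkeeping rather than conceptual: one must carry the four label correspondences $\mathcal V \leftrightarrow \mathcal L^*$, $\mathcal L \leftrightarrow \mathcal V^*$, $\mathcal C \leftrightarrow \mathcal I^*$, $\mathcal I \leftrightarrow \mathcal C^*$ consistently through every summation index, and keep straight that $M^*$ acts between $\mathcal D(\mathcal V^*)=\mathcal D(\mathcal L)$ and $\mathcal D(\mathcal L^*)=\mathcal D(\mathcal V)$ so that the identity $M^* = -M^{\mathrm T}$ type-checks. Once the Observation is in hand, no further topological input is required, so the computation is formally identical for planar and nonplanar $G$; what is special about the planar case is only whether the resulting $L^*$ can be realized as the Lagrangian of an honestly drawable physical circuit, which is the separate question treated in Section \ref{sec:duality}.
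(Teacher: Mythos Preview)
Your proposal is correct and follows essentially the same approach as the paper: write $L^*$ from the symmetric-Lagrangian template, substitute the duality dictionary together with $M^* = -M^{\mathrm T}$, match the potential terms directly, and absorb the sign in the kinetic term via an integration by parts (total time derivative). The only difference is notational---the paper indexes loops and vertices of $G^*$ by the starred labels of the corresponding objects in $G$ (so its $v^* \in \mathcal L^*$ plays the role of your $l^* \in \mathcal L^*$)---but the argument is otherwise identical.
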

\begin{proof}
  The Lagrangian for $G$ is given by 
  \begin{equation}
    L = \sum_{l \in \mathcal L} \sum_{v \in \mathcal V} Q_l M_{l v} \dot \phi_v - \sum_{e \in \mathcal I} E_e\left(\sum_{v}A_{ev}\phi_v\right) - \sum_{e \in \mathcal C} E_e\left(\sum_l q_l B_{le}\right).
  \end{equation}
  On the other hand, the Lagrangian for $G^*$ is given by 
   \begin{equation}
     L^* = \sum_{v^* \in \mathcal L^*} \sum_{l^* \in \mathcal V^*} q_{v^*} M^*_{v^* l^*}\dot \phi_{l^*} - \sum_{e^* \in \mathcal I^*} E_e\left(\sum_{l^*\in\mathcal V^*}A^*_{e^* l^*}\phi_{l^*}\right) - \sum_{e^* \in \mathcal V^*} E_e\left(\sum_{v^* \in \mathcal L^*}K_{v^*}B^*_{v^* e^*}\right).
  \end{equation}
  The transformation 
  \begin{equation}\label{eqn:lagdal}
    \begin{aligned}
      q_{v^*} &\rightarrow \phi_v \\ 
      \phi_{l^*} &\rightarrow q_l \\ 
      M^*_{v^* l^*} &\rightarrow - M_{l v} \\ 
      \mathcal C^* &\rightarrow \mathcal I \\ 
      \mathcal I^* &\rightarrow \mathcal C
    \end{aligned}
  \end{equation}
  can easily be seen to relate $L^*$ to $L$ after integrating the first term by parts.
\end{proof}

\begin{cor}\label{cor:lagtrans}
  Let $G$ be an embedded circuit with Lagrangian $L$ and $G^*$ be the dual circuit of $G$. By theorem \ref{thm:dualitylag} and \ref{thm:canquant}, it is always possible to find $N = |\mathcal V| - |\Gamma_I| - |\Gamma_C|-1$ variables so that the Lagrangian for $G$ may be written
   \begin{equation}
     L =  \sum_{i = 1}^{N} Q_i \dot \Phi_i - H(Q, \Phi).
  \end{equation}
  It is always possible to write the Lagrangian for $G^*$ as 
   \begin{equation}
     L^* = -\sum_{i=1}^N  \Phi_i \dot Q_i - H(-\Phi, Q).
  \end{equation}
\end{cor}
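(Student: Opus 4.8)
The plan is to reduce both $L$ and $L^*$ to canonical form with Theorem~\ref{thm:canquant} and then read off the relation between the two reductions from Definition~\ref{defn:dualityset}, Theorem~\ref{thm:dualitylag}, and the identity $M^* = -M^{\mathrm T}$. First I would apply Theorem~\ref{thm:canquant} to $G$, producing the $N$ pairs $Q_i = \sum_l D_{il} q_l$ and $\Phi_i = \sum_v S_{iv}\phi_v$ with $\sum_i D_{il}S_{iv} = M_{lv}$, so that $L = \sum_{i=1}^N Q_i \dot\Phi_i - H(Q,\Phi)$, where the capacitive energies $\sum_{e\in\mathcal C}E_e(\sum_l q_l B_{le})$ assemble into the $Q$-dependent part of $H$ and the inductive energies $\sum_{e\in\mathcal I}E_e(\sum_v A_{ev}\phi_v)$ into the $\Phi$-dependent part.

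Before reducing $L^*$ I would check that the number of pairs is duality-invariant. Rank--nullity applied to the null-vector count of Theorem~\ref{thm:nullvbetter} identifies $N = \mathrm{rank}(M)$, and since $M^* = -M^{\mathrm T}$ has the same rank as $M$, the dual circuit carries the same $N$; Corollary~\ref{cor:gtry} makes this concrete, since the homogeneous cuts and homogeneous loops of $G$ simply exchange roles under Definition~\ref{defn:dualityset}. Consequently Theorem~\ref{thm:canquant} applied to $G^*$ can be run with the very same matrices $D$ and $S$, now contracting the dual loop charges and dual node fluxes.

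Next I would assemble $L^*$ term by term, using $\mathcal C^*=\mathcal I$, $\mathcal I^*=\mathcal C$, $A^*=B^{\mathrm T}$, $B^*=A^{\mathrm T}$, and the energy identification $E_{e^*}=E_e$ of Theorem~\ref{thm:dualitylag}. The capacitive energies of $G^*$ (indexed by $\mathcal C^*=\mathcal I$) then carry $A$-weighted arguments built from the dual node fluxes, so they reproduce the \emph{inductive} part of $H$ with each $\phi_v$ replaced by a dual node flux; symmetrically, the inductive energies of $G^*$ carry $B$-weighted arguments built from the dual loop charges and reproduce the \emph{capacitive} part of $H$ with each $q_l$ replaced by a dual loop charge. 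Assembling the dual canonical combinations with the same $D$ and $S$ therefore returns $H$ with its two argument slots interchanged. For the kinetic term, substituting $M^*=-M^{\mathrm T}$ together with $\sum_i D_{il}S_{iv}=M_{lv}$ collapses the double sum back to the canonical pairing, but with a single overall minus sign relative to $L$.

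Finally I would package the result using the relabeling (\ref{eqn:lagdal}) and the Hamiltonian duality transformation. The one genuinely delicate step — and the step I expect to be the main obstacle — is the sign bookkeeping: the minus sign carried by $M^* = -M^{\mathrm T}$ must be assigned to exactly one of the two canonical families, which is legitimate because the factorization $\sum_i D_{il}S_{iv}=M_{lv}$ fixes $(D,S)$ only up to declaring one factor odd under duality. Choosing that convention so that the dual variables are identified as $Q_i\mapsto -\Phi_i$ and $\Phi_i\mapsto Q_i$, and integrating the kinetic term by parts, brings $L^*$ to the stated form $L^* = -\sum_{i=1}^N\Phi_i\dot Q_i - H(-\Phi,Q)$. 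As a consistency check I would verify that applying the manifestly canonical Hamiltonian duality transformation $Q_i'=-\Phi_i$, $\Phi_i'=Q_i$ to this expression returns $\sum_i Q_i'\dot\Phi_i' - H(Q',\Phi')$, i.e. $L$ itself, confirming that the circuit dual of Definition~\ref{defn:dualityset} realizes precisely the Hamiltonian duality transformation.
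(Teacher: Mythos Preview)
Your proposal is correct and follows essentially the same route as the paper: reduce to canonical form via Theorem~\ref{thm:canquant}, then push the relabeling of Theorem~\ref{thm:dualitylag} (equivalently $M^*=-M^{\mathrm T}$) through the kinetic term to obtain $Q_i\dot\Phi_i\to-\Phi_i\dot Q_i$, with the energy terms swapping as dictated by $\mathcal C^*=\mathcal I$, $\mathcal I^*=\mathcal C$. The paper's proof is far terser---it records only the kinetic-term computation and leaves the Hamiltonian swap implicit in Theorem~\ref{thm:dualitylag}---whereas you additionally verify that $N=\mathrm{rank}(M)$ is duality-invariant and make the sign-assignment freedom in the $(D,S)$ factorization explicit; these are useful elaborations rather than a different argument.
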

\begin{proof}
  Since 
  \begin{equation}
    \sum_{i = 1}^N Q_i \dot \Phi_i = \sum_{v,l} q_l M_{lv} \dot\phi_i,
  \end{equation}
  it follows that, under the transformation (\ref{eqn:lagdal}), 
  \begin{equation}
    Q_i \dot\Phi_i \rightarrow - \Phi_i \dot Q_i.
  \end{equation}
\end{proof}

We remark that, for nonplanar graphs, the Lagrangian transformations given in Cor. \ref{cor:lagtrans} is perfectly well--defined. It is also true that the transformation (\ref{eqn:lagdal}) is \emph{also} perfectly well--behaved. 
The issue is simply that, for nonplanar graphs, there is no systematic analogue of Defninition \ref{def:duality} that applies to nonplanar graphs in any meaningful sense. The reason for this is that edges in so--called topological loops would, in the dual of a nonplanar graph, can have an odd number of endpoints, which contradicts our definition of what an edge is. Nonetheless, it is sometimes possible to ``planarize" a nonplanar circuit and then take the dual after planarization, as we saw for $K_5$ in the main text.

\end{appendix}

\bibliography{thebib}
\end{document}